\pgfplotsset{compat=newest}
\definecolor{tudcyan}{RGB}{0,166,214}
\definecolor{tudmagenta}{RGB}{109,23,127}
\definecolor{tudpurple}{RGB}{29,28,115}
\definecolor{tudgraygreen}{RGB}{107,134,137}
\colorlet{lighttudcyan}{tudcyan!20}
\colorlet{lighttudmagenta}{tudmagenta!20}
\newlength{\hatchspread}
\newlength{\hatchthickness}
\newlength{\hatchshift}
\newcommand{\hatchcolor}{}
\tikzset{hatchspread/.code={\setlength{\hatchspread}{#1}},
	hatchthickness/.code={\setlength{\hatchthickness}{#1}},
	hatchshift/.code={\setlength{\hatchshift}{#1}},% must be >= 0
	hatchcolor/.code={\renewcommand{\hatchcolor}{#1}}}
\tikzset{hatchspread=7pt,
	hatchthickness=0.5pt,
	hatchshift=0pt,% must be >= 0
	hatchcolor=black}
\def\centerarc[#1](#2)(#3:#4:#5)% Syntax: [draw options] (center) (initial angle:final angle:radius)
\newcommand*{\tran}{^{\mkern-1.5mu\mathsf{T}}\!}  % (DIN) EN ISO 80000-2:2013 for transpose
\def\d{\ensuremath{\mathrm{d}}}
\def\imag{\ensuremath{\mathrm{i}}}
\DeclareMathOperator{\LimAvg}{LimAvg}
\DeclareMathOperator{\Avg}{Avg}
\DeclareMathOperator{\SAC}{SAC}
\DeclareMathOperator{\VS}{\underbar{V}}
\DeclareMathOperator{\VL}{\overline{V}}
\DeclareMathOperator{\Id}{Id}
\DeclarePairedDelimiter{\floor}{\lfloor}{\rfloor}
\DeclareMathOperator{\bigO}{\mathcal{O}}
\def\norm[#1]{\left|#1\right|}
\def\shortnorm[#1]{|#1|}
\def\nd{\ensuremath{d_{\mathrm{n}}}}
\def\bisim{\ensuremath{\cong}}
\DeclareMathOperator{\cl}{cl}
\def\dummy{}
\def\Xs{\mathcal{X}}
\def\Ys{\mathcal{Y}}
\def\Us{\mathcal{U}}
\def\No{\mathbb{N}_{0}}
\def\N{\mathbb{N}}
\def\R{\mathbb{R}}
\def\S{\mathbb{S}}
\def\Q{\mathbb{Q}}
\def\C{\mathbb{C}}
\def\Rs{\mathcal{R}}
\def\Ss{\mathcal{S}}
\def\Ks{\mathcal{K}}
\def\Qs{\mathcal{Q}}
\def\As{\mathcal{A}}  % Automaton
\def\Es{\mathcal{E}}  % Edges
\def\Bs{\mathcal{B}}  % Clock constraints (guards and invariants)
\def\Us{\mathcal{U}}  % Actions
\def\Ps{\mathcal{P}}  % Prices
\def\nup{{n_{\mathrm{u}}}}
\def\nx{{n_{\mathrm{x}}}}
\def\xv{\boldsymbol{x}}
\def\wv{\boldsymbol{w}}
\def\vv{\boldsymbol{v}}
\def\xiv{\boldsymbol{\xi}}
\def\upsilonv{\boldsymbol{\upsilon}}
\def\lv{\boldsymbol{l}}
\def\Am{\boldsymbol{A}}
\def\Bm{\boldsymbol{B}}
\def\I{{\normalfont\textbf{I}}}
\def\O{{\normalfont\textbf{0}}}
\def\Mm{\boldsymbol{M}}
\def\Nm{\boldsymbol{N}}
\def\Pm{\boldsymbol{P}}
\def\Qm{\boldsymbol{Q}}
\def\Vm{\boldsymbol{V}}
\def\Km{\boldsymbol{K}}
\def\e{\mathrm{e}}
\def\imag{\ensuremath{\mathrm{i}}}
\def\piconst{\textup{\textpi}}
\def\uppi{{\mathrm{\pi}}}
\newtheorem{defn}{Definition}
\newtheorem{rem}{Remark}
\newtheorem{prop}{Proposition}
\newtheorem{lem}{Lemma}
\newtheorem{thm}{Theorem}
\newtheorem{ex}{Example}
\newcommand{\fakeparagraphnospace}[1]{\vspace{1mm}\noindent\textbf{#1.}}
\newcommand{\fakeparagraph}[1]{\fakeparagraphnospace{#1}}
\begin{document}

\begin{frontmatter}

%% Title, authors and addresses

%% use the tnoteref command within \title for footnotes;
%% use the tnotetext command for theassociated footnote;
%% use the fnref command within \author or \address for footnotes;
%% use the fntext command for theassociated footnote;
%% use the corref command within \author for corresponding author footnotes;
%% use the cortext command for theassociated footnote;
%% use the ead command for the email address,
%% and the form \ead[url] for the home page:
%% \title{Title\tnoteref{label1}}
%% \tnotetext[label1]{}
%% \author{Name\corref{cor1}\fnref{label2}}
%% \ead{email address}
%% \ead[url]{home page}
%% \fntext[label2]{}
%% \cortext[cor1]{}
%% \affiliation{organization={},
%%             addressline={},
%%             city={},
%%             postcode={},
%%             state={},
%%             country={}}
%% \fntext[label3]{}

\title{Computing the average inter-sample time of event-triggered control using quantitative automata\tnoteref{t1}}

%% use optional labels to link authors explicitly to addresses:
%% \author[label1,label2]{}
%% \affiliation[label1]{organization={},
%%             addressline={},
%%             city={},
%%             postcode={},
%%             state={},
%%             country={}}
%%
%% \affiliation[label2]{organization={},
%%             addressline={},
%%             city={},
%%             postcode={},
%%             state={},
%%             country={}}

\author{Gabriel de Albuquerque Gleizer}
\author{Manuel Mazo Jr.}

\address{TU Delft, Mekelweg 2, Delft, 2628 CD, ZH, The Netherlands.}

\tnotetext[t1]{This work is supported by the European Research Council through the SENTIENT project [ERC-2017-STG \#755953].}
\begin{abstract}
%% Text of abstract
Event-triggered control (ETC) is a major recent development in cyber-physical systems due to its capability of reducing resource utilization in networked devices. However, while most of the ETC literature reports simulations indicating massive reductions in the sampling required for control, no method so far has been capable of quantifying these results.
In this work, we propose an approach through finite-state abstractions to do formal quantification of the traffic generated by ETC of linear systems, in particular aiming at computing its smallest average inter-sample time (SAIST).
The method involves abstracting the traffic model through $l$-complete abstractions, finding the cycle of minimum average length in the graph associated to it, and verifying whether this cycle is an infinitely recurring traffic pattern. The method is proven to be robust to sufficiently small model uncertainties, which allows its application to compute the SAIST of ETC of nonlinear systems.

\end{abstract}

\begin{keyword}
%% keywords here, in the form: keyword \sep keyword
event-triggered control \sep hybrid systems \sep abstractions

\end{keyword}

\end{frontmatter}

%% \linenumbers

%% main text
\section{INTRODUCTION}

In modern control applications, smart sensors, controllers, and actuators communicate with each other through digital communication networks. The standard networked control approach is periodic \emph{sample-and-hold} control: at every $h$ time units, sensors sample their values, send them through the network to the controller, which then updates its control command to the actuators; the command is held constant in between samples. Obviously, small values of the sampling period $h$ approximate the control performance to that of the idealized continuous controller, but increase bandwidth usage and radio energy consumption in wireless networks. This single parameter therefore limits the size and applicability of networked control systems (NSCs), and a natural question that has arisen is how to design aperiodic sampling approaches. In \cite{aastrom2002comparison}, the idea of sampling based on an event --- the error between the current state and the last sampled state exceeding a threshold --- was investigated with the name of Lebesgue sampling (after the Lebesgue integration). This idea was further developed in \cite{tabuada2007event}, where for the first time a framework for asymptotic stabilization of the origin through an event-based sampling was conceived. This approach is now known as \emph{event-triggered control} (ETC), and, given the enormous reductions in sampling it showed in early simulations, immense interested followed. Significant focus was given on event design to reduce sampling frequency while guaranteeing stability and control performance (e.g.~\cite{wang2008event, girard2015dynamic, heemels2012introduction}), extend ETC to different control structures \cite{heemels2012introduction}, or improve practical implementation aspects of ETC, such as the \emph{periodic event-triggered control} (PETC) of \cite{heemels2013periodic}, where event conditions are checked periodically. 
It is remarkable, however, that until very recently \cite{gleizer2021hscc}, no method to formally compute ETC sampling performance existed. Typically, ETC papers limit their formal results to stability, control performance, and Zeno-freeness --- the absence of Zeno behavior, or infinitely fast sampling in finite time. Similarly to Zeno-freeness, in PETC it is immediate that its average sampling is in the worst case the same as a baseline periodic control whose sampling period is the same $h$ as the event checking period of PETC. The critical question is, \emph{how significant are the savings provided by ETC?} This is a \emph{quantitative} question, and as such it requires the computation of sampling performance metrics for ETC. 

As previously mentioned, only recently there has been investigation of ETC traffic patterns, which can be categorized in two main approaches. 
The first category \cite{postoyan2019interevent, rajan2020analysis} focuses on understanding the qualitative asymptotic trends of the \emph{inter-sample times} (ISTs) of planar linear systems. In \cite{postoyan2019interevent}, the authors conclude that, under some conditions, the ISTs eventually converge to a fixed value or exhibit an oscillatory pattern. Despite providing very interesting insights, the results are limited to two-dimensional state spaces, and do not provide the quantitative information that we consider crucial. The second category uses symbolic abstractions \cite{kolarijani2016formal, gleizer2020scalable}, following on the extensive work on state-space partitioning and aggregation for abstractions, see \cite{tabuada2009verification}. In \cite{kolarijani2016formal} and \cite{gleizer2020scalable}, the prediction of ISTs is focused on the scheduling problem: in this context, a scheduler can use finite-state \emph{traffic models} to request sensor data before events are triggered in order to prevent collisions. However, these traffic models do not capture effectively long-term traffic properties of ETC, which hampers their use for quantitative analysis. Still in the same category, \cite{gleizer2020towards} uses a bisimulation-like algorithm that determines the $m$ next ISTs from a given state, followed by a very conservative estimate of the worst-case average IST by taking the minimum average of all such $m$-length sequences.

We have recently presented in \cite{gleizer2021hscc} a first approach to compute the average IST of PETC for linear systems, which is an important step in the direction we deem fundamental in order to judge the practical relevance of ETC. More specifically, we devised an algorithm to compute the \emph{smallest average inter-sample time}, or SAIST. This constitutes a natural metric which directly translates into average resource utilization in a network. Our approach in \cite{gleizer2021hscc} is based on the abstraction of a closed-loop PETC system into a weighted finite-state automaton, where the weight of a transition is the IST generated by the state. The smallest-in-average cycle (SAC) of the weighted graph associated with the abstraction is proven to be a lower bound, and in some cases the exact value, of the PETC's SAIST. The exact cases happen when a possible infinite sequence of ISTs of the PETC is the infinite repetition of this SAC. This paper is an extension of \cite{gleizer2021hscc}; here, 
\begin{enumerate}[(i)]
	\item we present a general version of the algorithm for verifying the limit average metric of an infinite-state system, as well as some behavioral conditions for its termination and how to compute uncertainty bounds;
	\item we prove that, in the general case, working with linear invariant subspaces of a linear map is necessary and sufficient to prove that a given SAC can repeat infinitely often as a sequence of ISTs;
	\item we show that the algorithm is robust to small enough model uncertainties --- this enables us to elaborate on the computability of SAIST of linear systems and, moreover, allows the SAIST computation of nonlinear PETC systems;
	\item we provide more numerical examples and their associated conclusions, including how to decrease the required amount of computations for the abstraction.
\end{enumerate}

The more general results rely on a behavioral interpretation of dynamical systems \cite{willems1991paradigms} and the associated abstraction methods \cite{moor1999supervisory, schmuck2015comparing}. The specialized results for PETC SAIST are based on a combination of quotient-based abstractions \cite{tabuada2009verification} and a behavioral-based analysis. Overall, our new results help consolidating the methodology proposed in \cite{gleizer2021hscc}, equipping engineers with a tool to formally estimate the benefits of ETC applications.

This paper follows the following structure: The main problem is stated in §\ref{sec:prob}. Background and preliminary results about (quantitative) abstractions, including the basic results from \cite{gleizer2021hscc}, are shown in §\ref{sec:prem}. Then, a general pseudo-algorithm to compute limit average metrics of infinite systems is presented in §\ref{sec:genericlimavg}, while its specialization for PETC SAIST computation is presented in §\ref{sec:mainpetc}. Finally, numerical examples are given in §\ref{sec:numerical}, and conclusions and future work are discussed in \ref{sec:conc}.

\subsection{Notation}

We denote by $\No$ the set of natural numbers including zero, $\N \coloneqq \No \setminus \{0\}$, $\N_{\leq n} \coloneqq \{1,2,...,n\}$, by $\Q$ the set of rational numbers, and by $\R_+$ the set of non-negative reals. For a complex number $z \in \C, z^*$ denotes its complex conjugate, $\arg{z}$ denotes its argument, and $\Im(z)$ denotes its imaginary part. %
We denote by $|\xv|$ the Euclidean norm of a vector $\xv \in \R^n$ and by $\norm[\Am]$ the 2-induced norm of a matrix $\Am \in \R^{n \times m}$, but if $s$ is a sequence or set, $|s|$ denotes its length or cardinality, respectively. The set $\S^n$ denotes the set of symmetric matrices in $\R^n$. For a symmetric matrix $\Pm \in \S^n,$ we write $\Pm \succ \O$ ($\Pm \succeq \O$) if $\Pm$ is positive definite (semi-definite). % 
For a set $\Xs\subseteq\Omega$, we denote by $\cl(\Xs)$ its closure, $\partial\Xs$ its boundary, and $\bar{\Xs}$ its complement $\Omega \setminus \Xs$.
We often use a string notation for sequences, e.g., $\sigma = abc$ reads $\sigma(1) = a, \sigma(2) = b, \sigma(3) = c.$ Powers and concatenations work as expected, e.g., $\sigma^2 = \sigma\sigma = abcabc.$ In particular, $\sigma^\omega$ denotes the infinite repetition of $\sigma$.
For a relation $\Rs \subseteq \Xs_a \times \Xs_b$, its inverse is denoted as $\Rs^{-1} = \{(x_b, x_a) \in \Xs_b \times \Xs_a : (x_a, x_b) \in \Rs\}$. Finally, we denote by $\pi_\Rs(\Xs_a) \coloneqq \{ x_b \in \Xs_b \mid (x_a, x_b) \in \Rs \text{ for some } x_a \in \Xs_b\}$ the natural projection of $\Xs_a$ onto $\Xs_b$.

%%%%%%%%%%%%%%%%%%%%%%%%%%%%%%%%%%%%%%%%%%%%%%%%%%%%%%%%%%%%%%%%%%%%%%%%%%%%%%%%
\section{PROBLEM STATEMENT}\label{sec:prob}

Consider a linear time-invariant plant controlled with sample-and-hold state feedback \cite{aastrom2013computer} %
described by
\begin{equation}\label{eq:plant}
\begin{aligned}
	\dot{\xiv}(t) &= \Am\xiv(t) + \Bm\upsilonv(t), \\
	\upsilonv(t) &= \Km\hat{\xiv}(t),
\end{aligned}
\end{equation}
where $\xiv(t) \in \R^\nx$ is the plant's state with initial value $\xv_0 \coloneqq \xiv(0)$, $\hat{\xiv}(t) \in \R^\nx$ is the state measurement available to the controller, $\upsilonv(t) \in \R^\nup$ is the control input, $\nx$ and $\nup$ are the state-space and input-space dimensions, respectively, and $\Am, \Bm, \Km$ are matrices of appropriate dimensions. 
The measurements are updated to the controller only at specific \emph{sampling times}, with their values being zero-order held on the controller: let $t_i \in \R_+, i \in \N_0$ be a sequence of sampling times, with $t_0 = 0$ and $t_{i+1} - t_i > \varepsilon$ for some $\varepsilon > 0$; then $\hat{\xiv}(t) = \xiv(t_i), \forall t \in [t_i, t_{i+1})$.

In ETC, a \emph{triggering condition} determines the sequence of times $t_i$. In the case of PETC, this condition is checked only periodically, with a fundamental checking period $h$. Throughout this paper, we assume the time units have been scaled so that $h = 1$.%
\footnote{\label{foot:hequals1}This time re-scaling can be achieved by simply multiplying $\Am$ and $\Bm$ with $h$.}
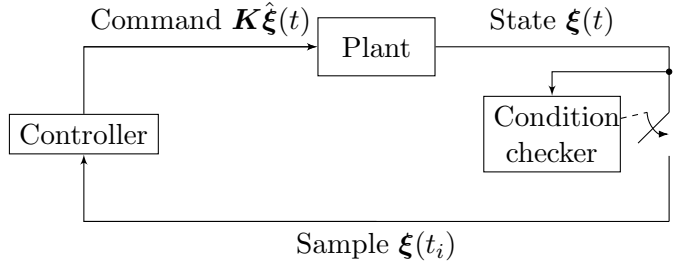
\begin{figure}
	\begin{center}
		\tikzstyle{block} 		= [draw, rectangle, minimum height=2em, minimum width=4em]
\tikzstyle{sum} 		= [draw, circle, inner sep=0, minimum size=0.2cm, node distance=1cm]
\tikzstyle{input} 		= [coordinate]
\tikzstyle{output} 		= [coordinate]
\tikzstyle{split} 		= [coordinate]
\tikzstyle{pinstyle} 	= [pin edge={to-,thin,black}]

% The block diagram code is probably more verbose than necessary
\begin{tikzpicture}[auto, node distance=1.5em,>=latex']

\node [block, align=center,] (P) {Plant};
%		$\begin{array}{c}
%	\text{Plant:} \\
%\dot{\xi}_p(t) =  A_p \xi_p(t) + B_p \hat{v}(t) + E \omega(t) \\
%	\psi(t) = C_p \xi_p (t)
%	\end{array}
%	$};
%	\node [block, right = of P] (y) {y};

%\draw [->] (Cpos) -- node[above] {$\Vd {k,d}00$} (Conv);
%\draw [->] (Conv) -- node[above] {$\Wd {s,d} 00$} (Plan);
%	\draw [->] (Cpos) -- node[above](b3) {$\bvd$} (C);
%	\draw [->] (C) -- node[above](t) {$\tau$} (P);

%Lower blocks
\node [coordinate, below = 5 em of P] (C) {};
%		$ \begin{array}{c}
%	\text{Controller:} \\
%\dot{\xi}_c(k+1) =  A_c \xi_c(k) + B_c \hat{\psi}(k)  \\
%	v(k) = C_c \xi_c (k)+D_c \hat{\phi}(k)
%	\end{array}
%	$};
%	\node[input, left = 0.5 cm of Cpos.200] (FBpin) {Input for feedback line};
%	\node[input, left = 0.5 cm of C.200] (FBcin) {Input for controller feedback};
%	\node[input, left = 5.5cm of O.205] (markernode){markernode};	

%\draw [->] (P.0) --node[above] {$y$}  ++(1cm,0)  |-  (C);
%\draw [->] (C.180) -- node[above]{output} ++(-1cm,0)   |-  (P.180);

% Corner coordinates
\node[coordinate, right = 10 em of C.east](br){};
\node[coordinate, output, left = 10 em of C.west](bl){};
\draw (P.east -| br) node[coordinate] (ur){};
\draw (P.west -| bl) node[coordinate] (ul){};

% Connect lines
\draw [-] (P.east) --node[above]{State $\xiv(t)$} (ur);
\draw [-] (br) --node[below]{Sample $\xiv(t_i)$} (bl);
\draw [->] (ul) --node[above]{Command $\Km\hat{\xiv}(t)$} (P.west);

% Mid coordinates
\path (ur) -- coordinate[midway](mr) (br);
\path (ul) -- node[rectangle, midway, anchor=center, draw](ml){Controller} (bl);
\draw [->] (C) -| (ml);
\draw [->] (ml) |- (P);

% right switch
\node[coordinate, above = 0.75 em of mr] (rbegs){};
\node[coordinate, below = 1.5 em of rbegs](rends){};
\node[coordinate, below left = 1.5 em of rbegs](rups){};
\draw [-] (ur) -- (rbegs);
\draw [-] (rbegs) -- (rups);
\draw [-] (rends) -- (br);
% And for the nice arrow
%\draw (rbegs -| rups) node[coordinate] (rarrs) {};
\node[coordinate, left = 0.75 em of rbegs](rarrs){};
\node[coordinate, below = 0.00 em of mr] (rarre){};
\draw [-{Latex[length=0.35em,width=0.25em]}] (rarrs) to [out=-90, in=180, looseness=1] (rarre);

% And for the nice arrow
%\draw (lbegs -| lups) node[coordinate] (larrs) {};
%	\node[coordinate, right = 0.75 em of ltop](larrs){};
\node[coordinate, above = 0.00 em of ml] (larre){};
%\draw [->] (larrs) to [out=90, in=0, looseness=1] (ml);

% Connect switches
\node[coordinate, left = 1. em of mr] (sr) {};
\node[coordinate, right = 1. em of ml] (sl) {};
\node[circle, inner sep = 0, minimum size=2pt, draw, fill=black, below = 0.75 em of ur] (wiretop) {};
%\node[circle, inner sep = 0, minimum size=2pt, draw, fill=black, above = 0.75 em of br] (wirebot) {};
\node[left = 4 em of mr, anchor=center, text centered, draw, solid, text width=4em](T){Condition checker} (C);
\draw[->] (wiretop) -| (T);
%\draw[->] (wirebot) -| (T);
\draw[dash pattern={on 0.2em off 0.2em}] (T) -- (rarrs);
%\draw[dash pattern={on 0.2em off 0.2em}] (T) -- (ml);
%	
%	% Add disturbance
%	\node[above = 2 em of P.north] (wt) {$\omega(t)$};
%	\draw [-{Latex[length=1em,width=0.5em]}] (wt)--(P.north);
\end{tikzpicture}
		\caption{\label{fig:block} Block diagram of an ETC system.}
	\end{center}
\end{figure}
Figure \ref{fig:block} depicts a simple diagram of a system with ETC. We consider the family of \emph{quadratic triggering conditions} from \cite{heemels2013periodic} with an additional maximum inter-sample time condition:% 
\begin{equation}\label{eq:quadtrig}
	t_{i+1} = \inf\left\{\!k>t_i, k \in \N \ \middle| \
		\begin{bmatrix}\xiv(k) \\ \xiv(t_i)\end{bmatrix}\tran
		\!\Qm \begin{bmatrix}\xiv(k) \\ \xiv(t_i)\end{bmatrix} > 0 \
		\text{ or } \ k-t_i \geq \bar{k}
	\right\}\!,
\end{equation}
where $\Qm \in \S^{2\nx}$ is the designed triggering matrix, and $\bar{k}$ is the chosen maximum inter-sample time.\footnote{Typically, a maximum inter-sample time exists naturally for a system with (P)ETC (see \cite{gleizer2018selftriggered}). Still, one may want to set a smaller maximum inter-sample time so as to establish a ``heart beat'' of the system. In any case, this is a necessity if one wants to obtain a finite-state simulation of the system, which is what we do in this paper.} 
Observing Eq.\ \eqref{eq:quadtrig}, we note that the inter-sample time $t_{i+1} - t_{i}$ is a function of $\xv_i \coloneqq \xiv(t_i)$; denoting $\kappa \coloneqq (t_{i+1}-t_i)$ as the inter-sample time, it follows that
\begin{gather}
	\kappa(\xv_i) = \min\left\{k \in \{1, 2, ...\bar{k}\} \mid \xv_i\tran\Nm(k)\xv_i > 0 \text{ or } k=\bar{k}\right\}, \nonumber\\
	\Nm(k) \coloneqq \begin{bmatrix}\Mm(k) \\ \I\end{bmatrix}\tran
	\Qm \begin{bmatrix}\Mm(k) \\ \I\end{bmatrix}, \label{eq:petc_time}\\
	\!\!\Mm(k) \coloneqq \Am_\d(k) + \Bm_\d(k)\Km \coloneqq \e^{\Am hk} + \int_0^{hk}\e^{\Am\tau}\d\tau \Bm\Km.\!\!\nonumber
\end{gather}
where $\I$ denotes the identity matrix. Thus, the event-driven evolution of sampled states can be compactly described by the recurrence
\begin{equation}\label{eq:samples}
	\xiv(t_{i+1}) = \Mm(\kappa(\xiv(t_i))\xiv(t_i).
\end{equation}
Clearly, each initial condition $\xv_0 \in \R^\nx$ leads to infinite sequences of samples $\{\xv_i\}$ and inter-sample times $\{k_i(\xv_0)\}$, defined recursively as 
\begin{equation}\label{eq:samplesytem}
	\begin{aligned}
		\xv_{i+1} &= \Mm(\kappa(\xv_i))\xv_i \\
		k_i(\xv_0) &\coloneqq \kappa(\xv_i).
	\end{aligned}
\end{equation}

Therefore, we can attribute an \emph{average inter-sample time} (AIST) to every initial state:
$$ \text{AIST}(\xv) \coloneqq \liminf_{n\to\infty}\frac{1}{n+1}\sum_{i=0}^{n}k_i(\xv). $$
Using $\liminf$ instead of $\lim$ lets us use the limit lower bound in case the regular limit does not exist, making the AIST metric well-defined.

\fakeparagraph{Objective of this paper} We want to devise a method to compute the exact \emph{smallest average inter-sample time} (SAIST) of the PETC system \eqref{eq:plant}--\eqref{eq:quadtrig}; i.e., the minimal AIST across all possible initial conditions:
\begin{equation}\label{eq:saist}
	\text{SAIST} \coloneqq \inf_{\xv \in \R^\nx}\liminf_{n\to\infty}\frac{1}{n+1}\sum_{i=0}^{n}hk_i(\xv).
\end{equation}
Furthermore, we want to understand the cases where the exact SAIST computation is not possible, and quantify the estimation error if the best we can obtain is an approximation.

The way we define SAIST implies that we do not expect that a system's AIST is irrespective of its initial conditions; as we shall see later in §\ref{sec:numerical}, it is possible that multiple AISTs are observed. Hence, in these cases, we conservatively take the smallest possible one. We argue that the SAIST is an adequate --- in fact, fundamental --- metric to inform designers about the average resource utilization that an ETC implementation is expected to achieve. However, the mere application of Eq.\ \eqref{eq:saist} is largely unpromising: how can one choose a sufficiently large $n$, % 
or how can one exhaustively search for initial states to obtain one that yields the SAIST? For this reason, we approach the SAIST computation problem through finite-state abstractions, which we introduce next.

\section{BACKGROUND AND PRELIMINARY RESULTS}\label{sec:prem}

An \emph{abstraction} is a simpler description of a system that preserves desired properties. When working with abstractions, we refer to the original system as the \emph{concrete} system. In this paper, we work with \emph{finite-state abstractions} using the framework of \cite{tabuada2009verification} and its \emph{transition systems}. Later, we equip these systems with weights following \cite{chatterjee2010quantitative}, which allows us to derive metrics such as the SAIST. We then present a special type of finite-state abstraction that preserves SAIST, which we introduced in \cite{gleizer2021hscc}.

\subsection{Transition systems and abstractions}\label{ssec:ts}

In \cite{tabuada2009verification}, Tabuada presents the notion of generalized transition systems:
\begin{defn}[Transition System \cite{tabuada2009verification}]\label{def:system} 
	A system $\Ss$ is a tuple $(\Xs,\Xs_0,\Es,\Ys,H)$ where:
	\begin{itemize}
		\item $\Xs$ is the set of states,
		\item $\Xs_0 \subseteq \Xs$ is the set of initial states,
		\item $\Es \subseteq \Xs \times \Xs$ is the set of edges, or transitions,
		\item $\Ys$ is the set of outputs, and
		\item $H: \Xs \to \Ys$ is the output map.
	\end{itemize}
\end{defn}
Here we have omitted the action set $\Us$ from the original definition because we are solely interested in autonomous systems like \eqref{eq:samplesytem}. %
A system is said to be \emph{finite-state} (\emph{infinite-state}) if the cardinality of $\Xs$ is finite (infinite).  
System $\Ss$ is said to be \emph{non-blocking} if $\forall x \in \Xs, \exists x' \in \Xs : (x,x') \in \Es$. % 
We call $x_0x_1x_2...$ an \emph{infinite internal behavior}, or \emph{run} of $\Ss$ if $x_0 \in \Xs_0$ and $(x_i,x_{i+1}) \in \Es$ for all $i \in \N$, and $y_0y_1...$ its corresponding \emph{infinite external behavior}, or \emph{trace}, if $H(x_i) = y_i$ for all $i \in \N$. We denote by $B_{\Ss}(r)$ the external behavior from a run $r = x_0x_1...$ (in the case above, $B_{\Ss}(r) = y_0y_1...$), by $\Bs^\omega_x(\Ss)$ the set of all infinite external behaviors of $\Ss$ starting from state $x$, and by $\Bs^\omega(\Ss) \coloneqq \bigcup_{x\in\Xs_0}\Bs^\omega_x(\Ss)$ the set of all infinite external behaviors of $\Ss$. Finally, $\Bs^{\leq n}(\Ss)$ is the set of all prefixes of length $\leq n$ of each trace in $\Bs^\omega(\Ss)$ (equivalently, the set of its $(\leq n)$-long external behaviors), and $\Bs^+(\Ss)$ is the set of all finite prefixes in $\Bs^\omega(\Ss)$.
A finite sequence $\beta$ is called \emph{transient} if there exists a finite $l$ such that $\gamma\beta\alpha \in \Bs^\omega(\Ss)$ implies that $|\gamma| \leq l$ and $\beta$ is not a subsequence of $\alpha$; equivalently, $\beta$ cannot occur infinitely often in any infinite behavior of $\Ss$.

The ideas of simulation and bisimulation are paramount to establish formal relations between two transition systems.

\begin{defn}[Simulation Relation \cite{tabuada2009verification}]\label{def:sim}
	Consider two transition systems $\Ss_a$ and $\Ss_b$ with $\Ys_a$ = $\Ys_b$. A relation $\Rs \subseteq \Xs_a \times \Xs_b$ is a simulation relation from $\Ss_a$ to $\Ss_b$ if the following conditions are satisfied:
	\begin{enumerate}
		\item[i)] for every $x_{a0} \in \Xs_{a0}$, there exists $x_{b0} \in \Xs_{b0}$ with $(x_{a0}, x_{b0}) \in \Rs;$
		\item[ii)] for every $(x_a, x_b) \in \Rs, H_a(x_a) = H_b(x_b);$
		\item[iii)] for every $(x_a, x_b) \in \Rs,$ we have that $(x_a, x_a') \in \Es_a$ implies the existence of $(x_b, x_b') \in \Es_b$ satisfying $(x_a', x_b') \in \Rs.$
	\end{enumerate}
\end{defn}
When there exists a simulation relation from $\Ss_a$ to $\Ss_b$, we say that $\Ss_b$ simulates $\Ss_a$, denoted by $\Ss_a \preceq \Ss_b$. When $\Rs$ is a simulation relation from $\Ss_a$ to $\Ss_b$ and $\Rs^{-1}$ is a simulation relation from $\Ss_b$ to $\Ss_a$, we say that $\Ss_a$ and $\Ss_b$ are \emph{bisimilar}, denoted by $\Ss_a \bisim \Ss_b$. %
Weaker but important relations associated with simulation and bisimulation are, respectively, \emph{behavioral inclusion} and \emph{behavioral equivalence}:
\begin{defn}[Behavioral inclusion and equivalence \cite{tabuada2009verification}]
	Consider two systems $\Ss_a$ and $\Ss_b$ with $\Ys_a$ = $\Ys_b$. We say that $\Ss_a$ is \emph{behaviorally included} in $\Ss_b$, denoted by $\Ss_a \preceq_\Bs \Ss_b$, if $\Bs^\omega(\Ss_a) \subseteq \Bs^\omega(\Ss_b).$ In case $\Bs^\omega(\Ss_a) = \Bs^\omega(\Ss_b),$ we say that $\Ss_a$ and $\Ss_b$ are \emph{behaviorally equivalent}, which is denoted by $\Ss_a \bisim_\Bs \Ss_b$.
\end{defn}
(Bi)simulations imply behavioral inclusion (equivalence):
\begin{thm}[\cite{tabuada2009verification}]
	Given two systems $\Ss_a$ and $\Ss_b$ with $\Ys_a$ = $\Ys_b$:
	\begin{itemize}
		\item $\Ss_a \preceq \Ss_b \implies \Ss_a \preceq_\Bs \Ss_b$;
		\item $\Ss_a \bisim \Ss_b \implies \Ss_a \bisim_\Bs \Ss_b$.
	\end{itemize}
\end{thm}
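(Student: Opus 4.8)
The plan is to prove the simulation statement directly and then obtain the bisimulation statement as an immediate corollary. For the simulation case, the key observation is that a simulation relation $\Rs$ from $\Ss_a$ to $\Ss_b$ lets us ``lift'' any run of $\Ss_a$ to a run of $\Ss_b$ carrying exactly the same output sequence, so every trace of $\Ss_a$ is also a trace of $\Ss_b$.

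Concretely, let $\Rs \subseteq \Xs_a \times \Xs_b$ be a simulation relation from $\Ss_a$ to $\Ss_b$, and fix an arbitrary $y = y_0 y_1 y_2 \cdots \in \Bs^\omega(\Ss_a)$. By definition there is a run $x_0 x_1 x_2 \cdots$ of $\Ss_a$ with $x_0 \in \Xs_{a0}$, $(x_i,x_{i+1}) \in \Es_a$ for all $i$, and $H_a(x_i) = y_i$. I would then build, by induction on $i$, a run $x_0' x_1' x_2' \cdots$ of $\Ss_b$ with $(x_i, x_i') \in \Rs$ for every $i$: the base case uses condition~(i) of Definition~\ref{def:sim} to pick some $x_0' \in \Xs_{b0}$ with $(x_0, x_0') \in \Rs$, and the inductive step applies condition~(iii) to the transition $(x_i, x_{i+1}) \in \Es_a$, producing some $x_{i+1}' \in \Xs_b$ with $(x_i', x_{i+1}') \in \Es_b$ and $(x_{i+1}, x_{i+1}') \in \Rs$. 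The resulting sequence $x_0' x_1' \cdots$ is then a run of $\Ss_b$ by construction, and condition~(ii) gives $H_b(x_i') = H_a(x_i) = y_i$ for all $i$; hence $y \in \Bs^\omega_{x_0'}(\Ss_b) \subseteq \Bs^\omega(\Ss_b)$. As $y$ was arbitrary, $\Bs^\omega(\Ss_a) \subseteq \Bs^\omega(\Ss_b)$, i.e., $\Ss_a \preceq_\Bs \Ss_b$.

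For the bisimulation case, if $\Ss_a \bisim \Ss_b$ via $\Rs$ then, by definition, $\Rs$ is a simulation relation from $\Ss_a$ to $\Ss_b$ and $\Rs^{-1}$ is a simulation relation from $\Ss_b$ to $\Ss_a$. Applying the first part in both directions yields $\Bs^\omega(\Ss_a) \subseteq \Bs^\omega(\Ss_b)$ and $\Bs^\omega(\Ss_b) \subseteq \Bs^\omega(\Ss_a)$, hence $\Bs^\omega(\Ss_a) = \Bs^\omega(\Ss_b)$, i.e., $\Ss_a \bisim_\Bs \Ss_b$.

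The only subtle point is the inductive construction of the lifted run: when $\Ss_b$ is nondeterministic there may be many admissible successors $x_{i+1}'$ at each step, and condition~(iii) merely asserts that one exists, so formally assembling the single infinite run $x_0' x_1' \cdots$ appeals to the axiom of dependent choice. This is innocuous, but it is the one place where the argument does more than a finite verification; the rest is bookkeeping. Note also that behavioral inclusion concerns only infinite external behaviors, so no separate treatment of finite prefixes is required. Since this result is recalled essentially verbatim from \cite{tabuada2009verification}, the proof is standard and I would keep it brief.
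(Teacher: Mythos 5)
Your proof is correct and is precisely the standard argument from \cite{tabuada2009verification}, which the paper cites without reproducing: lift a run of $\Ss_a$ step by step through the simulation relation using conditions (i) and (iii), match outputs via (ii), and apply the result in both directions for the bisimulation case. The remark about dependent choice for assembling the infinite lifted run in the nondeterministic case is a fair point of rigor but changes nothing of substance.
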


The main difference between simulation and behavioral inclusion is that, in the former, a relationship between states must be established: every transition in the concrete system must have at least one matching transition in the abstraction leading to related states. Behavioral inclusion is oblivious to state-based descriptions of a system: all one needs is that all traces observed in the concrete system can also be observed in the abstraction. %
A way of building an abstraction based on behavioral inclusion is through an $l$-complete model:
\begin{defn}[(Strongest) $l$-complete abstraction (adapted from \cite{moor1999supervisory, schmuck2015comparing})] \label{def:slca}
	Let $\Ss \coloneqq (\Xs, \Xs_0, \Es, \Ys, H)$ be a transition system, and let $\Xs_l \subseteq \Ys^l$ be the set of all $l$-long subsequences of all behaviors in $\Bs^\omega(\Ss)$. Then, the system $\Ss_l = (\Xs_l, \Bs^l(\Ss), \Es_l, \Ys, H)$ is called the \emph{(strongest) $l$-complete abstraction (S$l$CA)} of $\Ss$, where
	\begin{itemize}
		\item $\Es_l = \{(k\sigma, \sigma k') \mid k,k' \in \Ys, \sigma \in \Ys^{l-1}, k\sigma, \sigma k' \in \Xs_l\}.$
		\item $H(k\sigma) = k$.
	\end{itemize}
\end{defn}
The idea behind the S$l$CA is to encode the states as the $l$-long behavior fragments of the concrete system. The transitions follow the ``\emph{domino rule}'': e.g., if the last 4 elements of the behavior up to a given time are $abcd$, after one step the first 3 elements must be $bcd$; thus, from having observed $abcd$ alone, a transition from state $abcd$ can lead to any state starting with $bcd$. Finally, the output of a state is its first element. An example of successive $l$-complete approximations of a system with $\Ys = \{1,2\}$ is given in Fig.~\ref{fig:Sl}.

\begin{figure}[tb]
	\begin{center}
		\begin{footnotesize}
			\begin{tikzpicture}[->,>=stealth',shorten >=1pt, auto, node distance=1.5cm,
				semithick]
				%\tikzstyle{every state}=[fill=red,draw=none,text=white]
				\tikzset{every state/.style={minimum size=3em, inner sep=2pt}}
				
				\node[state] 		 (1)    {1};
				\node[state]         (2) [below of=1] {2};
				
				\path (1) edge[bend right] (2) (2) edge[bend right] (1)
				(1) edge [loop left] (1)
				(2) edge [loop left] (2);
				
				\node[state] 		 (12) [right of=1]  {1,2};
				\node[state]		 (21) [below of=12] {2,1};
				\node[state]		 (22) at ([shift=({-30:1.7 cm})]12) {2,2};
				
				\path (12) edge (22) (22) edge [loop above] (22) (22) edge (21) (21) edge[bend right] (12) (12) edge[bend right] (21);
				
				\node[state] 		 (122) [right=2.3cm of 12]  {1,2,2};
				\node[state] 		 (212) [right of=122]     {2,1,2};
				\node[state] 		 (222) [below of=122]     {2,2,2};
				\node[state] 		 (221) [right of=222]     {2,2,1};
				
				\path (221) edge (212) (212) edge (122) (122) edge (222) edge (221) (222) edge[loop left] (222) (222) edge (221);

			\end{tikzpicture}
		\end{footnotesize}
		\caption{\label{fig:Sl} Example of $l$-complete PETC traffic models, for $l=1$ (left), $l=2$ (middle), and $l=3$ (right).}
		\vspace{-1.5em}
	\end{center}
\end{figure}
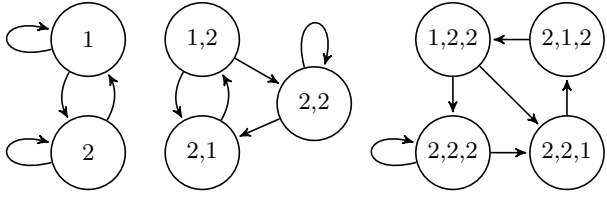

\begin{rem} We have made an adaptation from the original definition from \cite{moor1999supervisory}, where the system is defined on a behavioral framework \cite{willems1991paradigms}; here we present directly a realization of the S$l$CA as a transition system according to Def.~\ref{def:system}. Schmuck et al \cite{schmuck2015comparing} showed that different realizations exist for the S$l$CA	of a system, depending on whether you encode states based on past, future, or a mix of past and future observations. In Def.~\ref{def:slca}, we pick the one based on future observations, which simplifies the encoding (all states are $l$-long sequences without the need for "no-output yet" characters, see \cite{moor1999supervisory}), and is the tightest abstraction from a simulation relation perspective (see \cite[Thm.~5]{schmuck2015comparing}).
\end{rem}

In \cite[Theorem 9]{schmuck2015comparing}, it is concluded that a quotient-based approach \cite{tabuada2009verification} can create an abstraction bisimilar to the S$l$CA in case the concrete system is future-unique, which is the case of deterministic systems. Thus, we shall use the term $l$-complete for quotient-based abstractions whose states represent the next $l$ outputs of their related concrete states. How to do it will become clear in §\ref{ssec:lcomplete}, where we build the abstractions of the PETC traffic model. With this in mind, the following fact is a direct consequence of Theorems 6 and 7 from \cite{schmuck2015comparing}.
\begin{prop}\label{prop:schmuck}
	Consider a deterministic system $\Ss$ and its S$l$CA $\Ss_l$ from Definition \ref{def:slca}, for some $l \geq 1$. Then, $\Ss \preceq \Ss_{l+1} \preceq \Ss_l.$
\end{prop}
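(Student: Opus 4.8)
The plan is to exhibit, for each of the two claimed simulations, an explicit relation obtained by \emph{prefix truncation} of output sequences, and to verify the three conditions of Definition~\ref{def:sim} essentially by inspection, exploiting the ``domino'' shape of the transition sets $\Es_l$. In both parts the abstractions carry the same output alphabet $\Ys$ as $\Ss$ and their output maps return the first letter of a state, so condition ii) will be immediate from the definitions and the real content is in i) and iii).

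For $\Ss_{l+1}\preceq\Ss_l$ I would take $\Rs_1\subseteq\Xs_{l+1}\times\Xs_l$ relating an $(l+1)$-letter state $v=k_0k_1\cdots k_l$ to its length-$l$ prefix $k_0\cdots k_{l-1}$. This is well-defined into $\Xs_l$ because a length-$l$ prefix of an $(l+1)$-long subsequence of a behavior is again an $l$-long subsequence of a behavior; the same remark sends $\Bs^{l+1}(\Ss)$ into $\Bs^l(\Ss)$, giving i). For iii), an edge $(v,v')\in\Es_{l+1}$ forces $v'=k_1\cdots k_lk_{l+1}$ for some $k_{l+1}$; with $w:=k_0\cdots k_{l-1}$ and $w':=k_1\cdots k_l$ one has $(v',w')\in\Rs_1$ and $(w,w')=(k_0\sigma,\sigma k_l)$ with $\sigma=k_1\cdots k_{l-1}\in\Ys^{l-1}$, where $w,w'$ are prefixes of $v,v'\in\Xs_{l+1}$ and hence lie in $\Xs_l$, so $(w,w')\in\Es_l$. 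This half uses nothing about determinism.

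For $\Ss\preceq\Ss_{l+1}$ I would use determinism of $\Ss$ crucially: together with non-blockingness (which holds for the PETC traffic model of interest), every state $x\in\Xs$ has a \emph{unique} infinite trace $\tau(x)=y_0y_1y_2\cdots$ with $y_0=H(x)$ and $\tau(x')=y_1y_2\cdots$ whenever $x'$ is the unique successor of $x$. Define $\Rs_2$ to relate each \emph{reachable} $x$ to the length-$(l+1)$ prefix of $\tau(x)$; since $x$ lies on a run from some $x_0\in\Xs_0$, that prefix is a length-$(l+1)$ window of the behavior $\tau(x_0)$, hence belongs to $\Xs_{l+1}$, and for $x\in\Xs_0$ to $\Bs^{l+1}(\Ss)$, giving i). For iii), if $(x,w)\in\Rs_2$ and $(x,x')\in\Es$ then $w=y_0\cdots y_l$ and $\tau(x')=y_1y_2\cdots$, so $w':=y_1\cdots y_{l+1}$ satisfies $(x',w')\in\Rs_2$ ($x'$ is again reachable), while $(w,w')=(y_0\sigma,\sigma y_{l+1})$ with $\sigma=y_1\cdots y_l\in\Ys^{l}$ is an edge of $\Es_{l+1}$ because $w$ and $w'$ are both length-$(l+1)$ windows of behaviors.

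I expect the whole argument to be routine once the two relations are written down; the only step that genuinely needs the determinism hypothesis is condition iii) for $\Rs_2$: if $\Ss$ were nondeterministic, a length-$(l+1)$ prefix of \emph{some} trace through $x$ need not be realizable along a \emph{prescribed} successor $x'$, so no consistent choice of related abstract state would exist --- this is exactly the future-uniqueness used in \cite[Thms.~6 and 7]{schmuck2015comparing}. Restricting $\Rs_2$ to reachable states is harmless, since the simulation conditions only constrain initial states and their successors. Finally, transitivity of $\preceq$ (equivalently, composing $\Rs_2$ with $\Rs_1$) yields $\Ss\preceq\Ss_{l+1}\preceq\Ss_l$.
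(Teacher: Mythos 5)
Your proof is correct. Note that the paper does not actually prove Proposition~\ref{prop:schmuck} itself --- it is stated as ``a direct consequence of Theorems 6 and 7 from \cite{schmuck2015comparing}'' --- so what you have done is reconstruct, self-containedly, the argument that those cited results encapsulate for the future-based realization of the S$l$CA. Both relations you define are the standard ones and both verifications go through: for $\Rs_1$ the only facts needed are that an $l$-prefix of an $(l+1)$-long window of a behavior is again an $l$-long window (so $\Rs_1$ lands in $\Xs_l$ and maps $\Bs^{l+1}(\Ss)$ into $\Bs^l(\Ss)$) and that $\Es_l$ only requires membership of source and target in $\Xs_l$ plus the domino overlap, which your $w,w'$ satisfy by construction; and you correctly observe that this half holds for arbitrary (not necessarily deterministic) $\Ss$. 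For $\Rs_2$ you correctly localize where determinism enters: it is exactly what makes the length-$(l+1)$ future of $x$ unique and consistent with the future of its (unique) successor, i.e.\ condition iii) of Definition~\ref{def:sim}. Two points you handle appropriately but that are worth keeping explicit: (a) you implicitly need non-blockingness so that $\tau(x)$ is infinite --- this is baked into the paper's setting (WTSs are non-blocking and the PETC traffic model \eqref{eq:S} trivially is), but the proposition as stated for a bare ``deterministic system'' needs it; (b) restricting $\Rs_2$ to reachable states is harmless since Definition~\ref{def:sim} imposes no totality requirement and reachability is preserved along $\Es$. The payoff of your version over the paper's citation is transparency: it makes visible that half of the chain needs no hypothesis at all and the other half needs precisely future-uniqueness.
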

Prop.~\ref{prop:schmuck} gives that $l$-complete abstractions provide a framework of obtaining simulations and their refinements. It is not a surprising result, since encoding states with more elements of the concrete system's behavior constrains the set of behaviors it can generate, even if it increases the number of states in the abstraction.
\begin{rem}\label{rem:lbisim} Bisimulation is obtained when $\Ss_{l+1} = \Ss_l$ (modulo the names of the states); it is trivial to see that this only happens when abstracting an autonomous deterministic system if the abstraction is deterministic. In addition, $\lim_{l\to\infty}\Ss_l \bisim_\Bs \Ss$.
\end{rem}

\subsection{Quantitative automata}

While much of the field of formal methods in control is concerned with qualitative analysis, such as establishing safety, stability, and reachability, often quantitative computations are of interest: examples are computing the decay rate, the maximum overshoot, or our case, the average sampling period of an ETC system.  In \cite{chatterjee2010quantitative}, Chatterjee et al.~established a comprehensive framework for quantitative problems on finite-state systems, from which we borrow some definitions and results, while adjusting notation to keep consistency with the previous section.

\begin{defn}[Weighted transition system (adapted from \cite{chatterjee2010quantitative})]
	A \emph{weighted transition system} (WTS) $\Ss$ is the tuple $(\Xs,\Xs_0,\Es,\Ys,H, \gamma)$, where
	\begin{itemize}
		\item $(\Xs,\Xs_0,\Es,\Ys,H)$ is a \emph{non-blocking} transition system;
		\item $\gamma: \Es \to \Q$ is the \emph{weight function}.
	\end{itemize}
\end{defn}
The notation adjustment we have made is including outputs to comply with Tabuada's transition systems; again, we ignore the action set as our scope is limited to autonomous systems. 

Given a run $r = x_0x_1...$ of $\Ss$, we abuse notation denoting by $\gamma(r) = v_0v_1...$ the \emph{sequence of weights} defined by $v_i = \gamma(x_i,x_{i+1})$. A \emph{value function} $\text{Val} : \Q^\omega \to \R$ attributes a value to an infinite sequence of weights $v_0v_1...$. Among the well-studied value functions, the one of our interest is
$$ \LimAvg(v) \coloneqq \liminf_{n\to\infty}\frac{1}{n+1}\sum_{i=0}^{n}v_i.$$
Similarly, for a finite sequence $v$ of length $n$, let $\Avg(v) \coloneqq \frac{1}{n+1}\sum_{i=0}^{n}v_i$. We define the smallest and largest LimAvg values of an automaton respectively as $\VS(\Ss) \coloneqq \inf\{\LimAvg(\gamma(r)) \mid r \text{ is a run of } \Ss\}$ and $\VL(\Ss) \coloneqq \sup\{\LimAvg(\gamma(r)) \mid r \text{ is a run of } \Ss\}.$ Clearly $\VS(\Ss) = -\VL(-\Ss),$ where we denote by $-\Ss$ the WTS $\Ss$ with all of its weights negated; thus, we focus on the results for $\VS$ in what follows. The following theorem is essentially an excerpt from Theorem 3 in \cite{chatterjee2010quantitative}, which uses the classical result from Karp \cite{karp1978characterization}:
\begin{thm}\label{thm:limavg}
	Given a finite-state WTS $\Ss, \VS(\Ss)$ can be computed in $\bigO(|\Xs||\Es|)$. Moreover, system $\Ss$ admits a cycle $x_0x_1...x_k$ satisfying $x_i \to x_{i+1}, i < k,$ and $x_k \to x_0$ s.t. $\LimAvg(\gamma((x_0x_1...x_k)^\omega)) = \VS(\Ss)$. 
\end{thm}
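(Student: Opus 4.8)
The plan is to reduce the statement to Karp's minimum-mean-cycle theorem \cite{karp1978characterization}, which simultaneously yields the $\bigO(|\Xs||\Es|)$ bound and an optimal cycle, and to bridge the gap between $\LimAvg$ of infinite runs and mean weights of cycles by a standard walk-decomposition argument. First I would restrict attention to the subsystem induced by the states reachable from $\Xs_0$, obtainable by a depth-first search in $\bigO(|\Xs| + |\Es|)$; this changes neither $\VS(\Ss)$ nor the set of runs, and since the original WTS is non-blocking and finite-state, the reachable subgraph is again non-blocking and therefore contains at least one simple cycle. Let $\mu^\ast$ be the minimum, over all simple cycles of this reachable subgraph, of the cycle's mean weight $\Avg(\gamma(\cdot))$; the minimum is attained since there are finitely many simple cycles, and $\mu^\ast \in \Q$.

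The heart of the proof is the identity $\VS(\Ss) = \mu^\ast$, established by two inequalities. For $\VS(\Ss) \geq \mu^\ast$: take any run $r = x_0x_1\ldots$ and any $n$; the prefix $x_0\ldots x_{n+1}$ is a walk in the reachable subgraph, which I decompose greedily into simple cycles (repeatedly excising the loop lying between two occurrences of the same vertex) plus a leftover simple path of at most $|\Xs|-1$ edges. Each excised cycle contributes total weight at least $\mu^\ast$ times its length, and the leftover path contributes total weight at least $-(|\Xs|-1)\max_{e \in \Es}|\gamma(e)|$, so $\sum_{i=0}^{n}\gamma(x_i,x_{i+1}) \geq \mu^\ast(n+1) - c|\Xs|$ for a constant $c$ independent of $n$ (here one checks that $\mu^\ast(n+1-p) \geq \mu^\ast(n+1) - |\mu^\ast|p$ for any $p \geq 0$, regardless of the sign of $\mu^\ast$); dividing by $n+1$ and taking $\liminf$ gives $\LimAvg(\gamma(r)) \geq \mu^\ast$. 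For $\VS(\Ss) \leq \mu^\ast$: pick a simple cycle $x_0x_1\ldots x_k$ with $x_i \to x_{i+1}$ and $x_k \to x_0$ attaining $\mu^\ast$; each $x_j$ is reachable from some initial state, so prepending a finite path from $\Xs_0$ to $x_0$ to the weight sequence of $(x_0x_1\ldots x_k)^\omega$ yields a legitimate run whose weight sequence has a periodic tail with period-average $\mu^\ast$, hence $\LimAvg$ exactly $\mu^\ast$. This same cycle $x_0x_1\ldots x_k$ is the one promised in the ``moreover'' part, since $\LimAvg(\gamma((x_0x_1\ldots x_k)^\omega)) = \mu^\ast = \VS(\Ss)$.

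Finally, for the complexity claim I would compute the strongly connected components of the reachable subgraph by Tarjan's algorithm in $\bigO(|\Xs|+|\Es|)$, discard the trivial (cycle-free) components, run Karp's algorithm \cite{karp1978characterization} on each non-trivial SCC to obtain its minimum mean cycle value together with a witnessing cycle in $\bigO(|V_i||E_i|)$, and return the smallest value over all SCCs (and the corresponding cycle). Since every simple cycle lies inside a single SCC, this minimum equals $\mu^\ast$; and since $\sum_i |V_i| \leq |\Xs|$ and $\sum_i |E_i| \leq |\Es|$, the total cost is $\sum_i \bigO(|V_i||E_i|) \leq \bigO\!\big(|\Xs|\sum_i|E_i|\big) \leq \bigO(|\Xs||\Es|)$, dominating the linear-time preprocessing steps.

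The step I expect to demand the most care is the lower bound $\VS(\Ss) \geq \mu^\ast$: making the walk-into-simple-cycles-plus-path decomposition precise, checking that the residual path length stays bounded by $|\Xs|$ uniformly in $n$ so that its contribution vanishes in the average, and handling the $\liminf$ together with the sign bookkeeping for $\mu^\ast$. The remaining ingredients are either a direct invocation of Karp's theorem or routine graph-algorithmic bookkeeping.
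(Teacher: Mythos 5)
Your proof is correct and follows the standard argument behind the result the paper simply cites (Theorem 3 of Chatterjee et al., which rests on Karp's minimum-mean-cycle theorem): the walk-decomposition lower bound, the path-plus-cycle upper bound, and the per-SCC application of Karp's algorithm in $\bigO(|\Xs||\Es|)$ are exactly the ingredients of that cited proof. The paper gives no proof of its own for this theorem, so there is nothing to diverge from; your write-up correctly supplies the details the paper delegates to the references.
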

The cycle mentioned above is a \emph{smallest-in-average cycle} (SAC) of the weighted digraph defined by $\Ss$, and can be recovered in $\bigO(|\Xs|)$ using the procedure of \cite{chaturvedi2017note}.

\subsection{Quantitative verification through abstractions}

In \cite{gleizer2021hscc}, we have presented some basic results about the relationship between the SAISTs of a system and its abstraction. First, we start with a simplifying condition for weighted transition systems: a WTS is called \emph{simple} if for all $(x,x') \in \Es, \gamma(x,x') = H(x)$, i,e., the weight of a transition is equal to the output of its outbound state. Throughout this paper, when working with a transition system with $\Ys \subset \Q$, we omit the weight function $\gamma$, implying that we have a simple WTS. Here, we recall results from \cite{gleizer2021hscc}.

\begin{prop}[\cite{gleizer2021hscc}]\label{prop:bound}
	If two simple WTSs $\Ss_a$ and $\Ss_b$ satisfy $\Ss_a \mathrel{\preceq_\Bs\!(\bisim_\Bs)} \Ss_b$, then $\VS(\Ss_a) \mathrel{\geq\!(=)} \VS(\Ss_b)$ and $\VL(\Ss_a) \mathrel{\leq\!(=)} \VL(\Ss_b)$.
\end{prop}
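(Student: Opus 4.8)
The plan is to reduce the statement to a purely behavioral one by exploiting the hypothesis that the WTSs are \emph{simple}. First I would record the elementary identity that, for a simple WTS $\Ss$ and any run $r = x_0x_1x_2\dots$, the weight sequence $\gamma(r)$ coincides with the trace $B_{\Ss}(r)$: by definition of \emph{simple}, the $i$-th weight is $v_i = \gamma(x_i,x_{i+1}) = H(x_i)$, which is exactly the $i$-th symbol of the trace. Hence $\LimAvg(\gamma(r)) = \LimAvg(B_{\Ss}(r))$, and since every infinite external behavior of $\Ss$ is $B_{\Ss}(r)$ for some run $r$ (and conversely each run yields such a behavior), the set of values being optimized is the same whether indexed by runs or by behaviors. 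This gives
\[
  \VS(\Ss) = \inf_{b \in \Bs^\omega(\Ss)} \LimAvg(b), \qquad \VL(\Ss) = \sup_{b \in \Bs^\omega(\Ss)} \LimAvg(b),
\]
where the $\inf$ and $\sup$ are understood in the extended reals, so that the $\liminf$ inside $\LimAvg$ is always well defined even for infinite-state systems with unbounded weights.

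Next I would simply invoke the hypothesis. If $\Ss_a \preceq_\Bs \Ss_b$, then by definition $\Bs^\omega(\Ss_a) \subseteq \Bs^\omega(\Ss_b)$; since an infimum over a subset can only be larger and a supremum over a subset can only be smaller, the two displayed identities immediately yield $\VS(\Ss_a) \geq \VS(\Ss_b)$ and $\VL(\Ss_a) \leq \VL(\Ss_b)$. For the parenthetical case, $\Ss_a \bisim_\Bs \Ss_b$ means $\Bs^\omega(\Ss_a) = \Bs^\omega(\Ss_b)$, so the infima (resp.\ suprema) are taken over literally the same set and therefore coincide, giving $\VS(\Ss_a) = \VS(\Ss_b)$ and $\VL(\Ss_a) = \VL(\Ss_b)$.

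I do not expect a genuine obstacle: the proposition is essentially a repackaging of definitions, and the only step that needs care is the first one — recognizing that ``simple'' is precisely the condition that decouples the limit-average value from the state structure and turns $\VS,\VL$ into functionals of the behavior set alone. A minor edge case, $\Bs^\omega(\Ss_a) = \emptyset$, would make $\VS(\Ss_a) = +\infty$ and $\VL(\Ss_a) = -\infty$, in which case both inequalities still hold trivially; I would mention this only in passing, or note that non-blockingness together with $\Xs_{0} \neq \emptyset$ rules it out. Finally, I would stress that nothing in the argument uses finiteness of either system, so the bound applies equally to the infinite-state concrete traffic model and to its finite $l$-complete abstractions — which is exactly what makes Theorem~\ref{thm:limavg} combined with this proposition useful for obtaining a computable lower bound on the SAIST.
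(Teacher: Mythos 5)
Your proof is correct and follows exactly the route the paper intends: it exploits simplicity to identify the weight sequence of a run with its trace, so that $\VS$ and $\VL$ become an infimum and supremum of $\LimAvg$ over the behavior set, after which behavioral inclusion (equivalence) gives the inequalities (equalities) by monotonicity of $\inf$ and $\sup$ under set inclusion. The paper itself only remarks that the proposition ``uses the fact that the WTSs are simple to relate weight runs and behaviors,'' deferring to the cited reference, and your argument is precisely that reduction, with the empty-behavior edge case correctly dispatched by non-blockingness.
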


Proposition \ref{prop:bound} uses the fact that the WTSs are simple to relate weight runs and behaviors. With it, abstractions that simulate the concrete system provide a way to underestimate the SAIST and overestimate the LAIST, thanks to Theorem \ref{thm:limavg}. Equality can be achieved with the following type of abstraction.
\begin{defn}[Smallest-average-cycle-equivalent simulation \cite{gleizer2021hscc}] \label{def:macesim} Consider two simple WTSs $\Ss_a$ and $\Ss_b$ satisfying $\Ss_a \preceq \Ss_b$. Let $\SAC(\Ss_b)$ be the set of smallest-in-average cycles of $\Ss_b$. If there exists a behavior of the form $dc^\omega \in \Bs^\omega(\Ss_a)$ where $d$ is finite and $c \in \SAC(\Ss_b)$, then $\Ss_b$ is a \emph{smallest-average-cycle-equivalent (SACE) simulation} of $\Ss_a$.
\end{defn}
A SACE simulation is a normal simulation with the added requirement that at least one of the SACs of the abstraction is an actual recurrent behavior of the concrete system, after some finite transient. Clearly, SACE simulation is stronger than simulation 
but significantly weaker than bisimulation. Equivalently, a \emph{largest-average-cycle-equivalent simulation}, or LACE simulation, can be defined using the maximum average cycle instead. The following result is a straightforward conclusion from Proposition \ref{prop:bound} and Theorem \ref{thm:limavg}.
\begin{prop}[\cite{gleizer2021hscc}]\label{prop:sace}
	Consider two simple WTSs $\Ss_a$ and $\Ss_b$; if $\Ss_b$ is a finite-state SACE simulation of $\Ss_a$, then $\VS(\Ss_a) = \VS(\Ss_b)$.
\end{prop}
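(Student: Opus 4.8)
The plan is to establish the two inequalities $\VS(\Ss_a) \geq \VS(\Ss_b)$ and $\VS(\Ss_a) \leq \VS(\Ss_b)$ separately: the first is the generic ``abstraction underestimates the SAIST'' direction and follows from the results already in hand, while the second is where the extra SACE structure does the work, by exhibiting a single run of $\Ss_a$ that attains the abstraction's value exactly.

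For $\VS(\Ss_a) \geq \VS(\Ss_b)$ I would argue as follows. Since $\Ss_a \preceq \Ss_b$, the (bi)simulation-implies-behavioral-inclusion theorem gives $\Ss_a \preceq_\Bs \Ss_b$; as both systems are simple WTSs, Proposition \ref{prop:bound} then yields directly $\VS(\Ss_a) \geq \VS(\Ss_b)$. Nothing new is needed here.

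For the reverse inequality I would use the witness supplied by Definition \ref{def:macesim}: a behavior $dc^\omega \in \Bs^\omega(\Ss_a)$ with $d$ finite and $c$ the output word traced along a smallest-in-average cycle of $\Ss_b$ (which, by simplicity of $\Ss_b$, is also the weight word of that cycle). Pick a run $r$ of $\Ss_a$ with external behavior $dc^\omega$. Because $\Ss_a$ is a simple WTS, the weight sequence $\gamma(r)$ coincides entry-by-entry with this trace, so $\gamma(r) = dc^\omega$. Since a finite prefix does not change the limit average and the limit average of a periodic word equals the mean over one period, $\LimAvg(\gamma(r)) = \LimAvg(c^\omega) = \Avg(c)$. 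On the other hand, Theorem \ref{thm:limavg} (its cycle statement) combined once more with simplicity of $\Ss_b$ identifies the weight sequence of that cycle with $c$, whence $\Avg(c) = \LimAvg(c^\omega) = \VS(\Ss_b)$. Thus $r$ is a run of $\Ss_a$ with $\LimAvg(\gamma(r)) = \VS(\Ss_b)$, and by the infimum defining $\VS(\Ss_a)$ we conclude $\VS(\Ss_a) \leq \LimAvg(\gamma(r)) = \VS(\Ss_b)$. Combining the two inequalities gives the claim.

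The proof involves no delicate estimation and does not require the infimum defining $\VS(\Ss_a)$ to be attained, since the witnessing run exhibits the value $\VS(\Ss_b)$ exactly. The one place where care is needed — and where the hypotheses genuinely bite — is the bookkeeping that converts behaviors into weight sequences and cycles into periodic weight words: one must invoke simplicity of \emph{both} WTSs to move freely between outputs and transition weights, and one must recognize that the SACE property is precisely what rules out the abstraction's optimal cycle being a spurious artifact, instead guaranteeing it is realized, up to the transient $d$, by an honest run of the concrete system.
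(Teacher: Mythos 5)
Your proof is correct and is exactly the argument the paper intends: the paper itself states that Prop.~\ref{prop:sace} is "a straightforward conclusion from Proposition \ref{prop:bound} and Theorem \ref{thm:limavg}", and your two inequalities use precisely those ingredients (behavioral inclusion plus Prop.~\ref{prop:bound} for $\VS(\Ss_a)\geq\VS(\Ss_b)$, and the SACE witness $dc^\omega$ together with Karp's cycle characterization for the reverse). The bookkeeping you flag — identifying traces with weight sequences via simplicity, and a cycle of $\Ss_b$ with its periodic weight word — is the right thing to be careful about and is handled correctly.
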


\begin{rem}\label{rem:simplenotneeded}
	In fact, to use Def.\ \ref{def:macesim} and Prop.\ \ref{prop:sace}, it is not needed that the WTSs are simple. One can always turn a WTS into an equivalent simple one by adding artificial states: suppose that $(x,y)$ and $(x,z)$ belong to $\Es$ and $\gamma(x,y) = a \neq \gamma(x,z) = b.$ Add artificial states $y'$ and $z'$ and replace the aforementioned transitions with $(x,y'), (x,z'), (y',y), (z',z)$, setting $\gamma(x,y') = \gamma(x,z') = 0$, $\gamma(y',y) = a$ and $\gamma(z',z) = b$. Applying this procedure to the whole system gives a simple WTS, and again behaviors are equal to sequences of weights. The LimAvg value of any run of this modified system is half of the value of the original equivalent run (since we are adding zeros at every other transition).  
\end{rem}

\begin{rem}\label{rem:sacecontrollable} The concept of SACE simulation and its related results have been recently expanded to non-autonomous WTSs in \cite{gleizer2021cdc}, where the objective was to \emph{design} sampling strategies instead of evaluating them. This expansion is not needed for the scope of this paper.
\end{rem}

For the cases where obtaining a SACE simulation of $\VS(\Ss_a)$ is not possible, one may still be interested in computing an estimate of the error $\VS(\Ss_a) - \VS(\Ss_b).$ In \cite{gleizer2021hscc}, the maximal value $\VL(\Ss_b)$ was used to this end, but a better approximation can be found by inspecting the maximal average cycle of the attractors of $\Ss_b$. 

\begin{prop}\label{prop:upperbound}
	Let $\Ss_a \coloneqq (\Xs_a, \Xs_a, \Es_a, \Ys, H_a)$ and $\Ss_b \coloneqq (\Xs_b, \Xs_b, \Es_b, \Ys, H_b)$ be two simple WTSs, $\Rs$ be a simulation relation from $\Ss_a$ to $\Ss_b$, and $\As \subset \Xs_b$ be a strongly forward invariant set %
	\footnote{A strongly forward invariant set $\As \subseteq \Xs$ is a set that satisfies $\forall x \in \As, (x, x') \in \Es \implies x' \in \As$.}% 
	of $\Ss_b$. If there exists $x_b \in \As$ such that $(x_a, x_b) \in \Rs$ for some $x_a \in \Xs_a$, then $\VS(\Ss_a) \leq \VL((\As, \As, \Es_b, \Ys, H_b)) \leq \VL(\Ss_b)$.
\end{prop}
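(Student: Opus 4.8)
The plan is to transport any infinite run of $\Ss_a$ that starts from a state related to some $x_b \in \As$ into an infinite run of $\Ss_b$ that never leaves $\As$, and then exploit simplicity of both WTSs to conclude that the two runs carry exactly the same weight sequence.

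First I would introduce the restricted system $\Ss_\As \coloneqq (\As, \As, \Es_b, \Ys, H_b)$ and observe that it is a well-defined non-blocking WTS: strong forward invariance of $\As$ ensures that every $\Es_b$-transition out of a state of $\As$ lands again in $\As$, so the transitions of $\Ss_b$ restrict to $\As$; and since $\Ss_b$ is non-blocking, so is $\Ss_\As$. Hence $\VL(\Ss_\As)$ is a supremum over a nonempty set of runs and is well-defined.

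Next, fix $x_a \in \Xs_a$ and $x_b \in \As$ with $(x_a, x_b) \in \Rs$, as provided by the hypothesis, and take any run $r = x_0 x_1 x_2 \dots$ of $\Ss_a$ with $x_0 = x_a$ (one exists because $\Ss_a$, being a WTS, is non-blocking and all its states are initial). I would then build inductively a run $r' = y_0 y_1 y_2 \dots$ of $\Ss_\As$ with $y_0 = x_b$ and $(x_i, y_i) \in \Rs$ for all $i$: the base case is the hypothesis; for the inductive step, condition iii) of Definition \ref{def:sim} applied to $(x_i, y_i) \in \Rs$ and $(x_i, x_{i+1}) \in \Es_a$ yields $y_{i+1}$ with $(y_i, y_{i+1}) \in \Es_b$ and $(x_{i+1}, y_{i+1}) \in \Rs$, and strong forward invariance of $\As$ forces $y_{i+1} \in \As$, so the transition is legal in $\Ss_\As$. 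Simplicity of both WTSs together with condition ii) of Definition \ref{def:sim} then gives, for every $i$, $\gamma(x_i, x_{i+1}) = H_a(x_i) = H_b(y_i) = \gamma(y_i, y_{i+1})$, so $\gamma(r) = \gamma(r')$ as infinite sequences of weights, and therefore $\LimAvg(\gamma(r)) = \LimAvg(\gamma(r')) \leq \VL(\Ss_\As)$. Since this holds for every run $r$ of $\Ss_a$ starting at $x_a$, and $\VS(\Ss_a)$ is the infimum of $\LimAvg(\gamma(\cdot))$ over the larger collection of all runs of $\Ss_a$, it follows that $\VS(\Ss_a) \leq \VL(\Ss_\As)$. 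Finally, $\VL(\Ss_\As) \leq \VL(\Ss_b)$ is immediate because every run of $\Ss_\As$ is also a run of $\Ss_b$ (as $\As \subseteq \Xs_b$, all states of $\Ss_b$ are initial, and the transitions of $\Ss_\As$ are transitions of $\Ss_b$), so the supremum defining $\VL(\Ss_\As)$ ranges over a sub-collection of the one defining $\VL(\Ss_b)$.

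I do not expect a real obstacle here; the only delicate points are bookkeeping ones: checking that $\Ss_\As$ is non-blocking so that $\VL(\Ss_\As)$ is not vacuous, and keeping the direction of the infimum and supremum consistent — in particular that restricting to runs starting at $x_a$ can only raise the infimum, which is exactly what is needed for an upper bound on $\VS(\Ss_a)$. As noted in Remark \ref{rem:simplenotneeded}, simplicity of the WTSs is not truly essential, but it is what makes the identification $\gamma(r) = \gamma(r')$ a one-line argument.
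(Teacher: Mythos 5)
Your proof is correct and follows essentially the same route as the paper's: fix the related pair $(x_a,x_b)$, use the simulation relation to push a run of $\Ss_a$ starting at $x_a$ into a matching run of $\Ss_b$ starting at $x_b$, use strong forward invariance to confine that run to $\As$, and compare the infimum defining $\VS$ with the supremum defining $\VL$. The only difference is presentational: the paper packages the run-matching step as a chain of simulation and behavioral-inclusion relations among systems restricted to the singleton initial sets $\{x_a\}$ and $\{x_b\}$, invoking Proposition~\ref{prop:bound} repeatedly, whereas you inline that argument directly at the level of runs --- both are sound.
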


\begin{proof}
	First, it is a simple exercise to see that $(\Xs,\Xs',\Es,\Ys,H) \preceq (\Xs,\Xs,\Es,\Ys,H)$ if $\Xs' \subseteq \Xs$. Now, take $(x_a,x_b) \in \Rs$ where $x_b \in \As$. Then, $(\Xs_a, \{x_a\}, \Es_a, \Ys, H_a) \preceq \Ss_a$. % and $(\Xs_b, \{x_b\}, \Es_b, \Ys, H_b) \preceq \Ss_b$. 
	At the same time, with the same relation $\Rs$ we can verify that $(\Xs_a, \{x_a\}, \Es_a, \Ys, H_a) \preceq (\Xs_b, \{x_b\}, \Es_b, \Ys, H_b)$. Therefore, by Prop.~\ref{prop:bound}, $\VS((\Xs_a, \{x_a\}, \Es_a, \Ys, H_a)) \geq \VS(\Ss_a),$ and $\VL((\Xs_b, \{x_b\}, \Es_b, \Ys, H_b)) \geq \VL((\Xs_a, \{x_a\}, \Es_a, \Ys, H_a))$. Because $\VL(\cdot) \geq \VS(\cdot)$, we get that $\VL((\Xs_b, \{x_b\}, \Es_b, \Ys, H_b)) \geq \VS(\Ss_a).$
	
	Now, because $\As$ is strongly forward invariant, every run of $(\Xs_b, \{x_b\}, \Es_b, \Ys, H_b)$ contains only states in $\As$. Thus, $(\Xs_b, \{x_b\}, \Es_b, \Ys, H_b) \bisim_\Bs (\As, \{x_b\}, \Es_b, \Ys, H_b) \preceq (\As, \As, \Es_b, \Ys, H_b)$. Then, applying Prop.~\ref{prop:bound} again gives $\VL((\As, \As, \Es_b, \Ys, H_b)) \geq \VS(\Ss_a).$
	
	Finally, because $(\As, \As, \Es_b, \Ys, H_b) \preceq \Ss_b,$ Prop.~\ref{prop:bound} also gives that $\VL(\Ss_b) \geq \VL((\As, \As, \Es_b, \Ys, H_b))$.
\end{proof}

When the abstraction $\Ss_b$ is finite, its smallest strongly invariant sets are simply the attractive strongly connected components (SCCs) of the graph associated with $\Ss_b$. Obtaining the SCCs of a graph with $n$ vertices and $m$ edges has complexity $\bigO(n+m)$ \cite{karp1978characterization} and in fact is part of the steps to compute its smallest (or largest) average cycle.

%%%%%%%%%%%%%%%%%%%%%%%%%%%%%%%%%%%%%%%%%%%%%%%%%%%%%%%%%%%%%%%%%%%%%
\section{LIMIT AVERAGE FROM $l$-COMPLETE ABSTRACTIONS}\label{sec:genericlimavg}

In this section we provide some results on the computation of the infimal limit average of a simple WTS $\Ss$ through the use of its S$l$CA $\Ss_l$. The first result is an obvious conclusion from combining Prop.~\ref{prop:schmuck} with \ref{prop:bound}:
\begin{prop}\label{prop:lgivesbound}
	Consider a simple WTS $\Ss$ and its S$l$CA $\Ss_l$ (Def.~\ref{def:slca}), for some $l \geq 1$. It holds that $\VS(\Ss_l) \leq \VS(\Ss).$
\end{prop}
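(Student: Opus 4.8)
The plan is to chain together two results already established in the excerpt. First, Proposition~\ref{prop:schmuck} tells us that for a deterministic system $\Ss$ and its S$l$CA $\Ss_l$, we have $\Ss \preceq \Ss_l$ (indeed $\Ss \preceq \Ss_{l+1} \preceq \Ss_l$, and we only need the left-most comparison here). By the theorem relating (bi)simulation to behavioral inclusion, $\Ss \preceq \Ss_l$ implies $\Ss \preceq_\Bs \Ss_l$, i.e., $\Bs^\omega(\Ss) \subseteq \Bs^\omega(\Ss_l)$.

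Second, since both $\Ss$ and $\Ss_l$ are simple WTSs --- $\Ss$ by hypothesis, and $\Ss_l$ because its output map $H(k\sigma) = k$ means the weight of each transition equals the output of its source state --- Proposition~\ref{prop:bound} applies directly: $\Ss \preceq_\Bs \Ss_l$ gives $\VS(\Ss_l) \leq \VS(\Ss)$. That is exactly the claim.

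The one point that requires a word of care is the applicability of Proposition~\ref{prop:schmuck}, which is stated for deterministic systems; here I would simply note that the simple WTS $\Ss$ under consideration is the PETC traffic model \eqref{eq:samplesytem}, which is deterministic (each state has a unique successor), so the hypothesis is met. Everything else is a routine invocation of the cited results, so there is no genuine obstacle --- the statement is, as the text says, "an obvious conclusion" from combining the two propositions. If one wanted to be fully self-contained for non-deterministic $\Ss$ as well, the only subtlety would be that Definition~\ref{def:slca} still yields $\Bs^\omega(\Ss) \subseteq \Bs^\omega(\Ss_l)$ directly from the construction (every $l$-window of a behavior of $\Ss$ is a state of $\Ss_l$ and the domino transitions reproduce the behavior), so the conclusion holds regardless; but for the scope of this paper the deterministic case suffices.
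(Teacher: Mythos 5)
Your proof is correct and follows exactly the route the paper intends: Proposition~\ref{prop:schmuck} gives $\Ss \preceq \Ss_l$, hence behavioral inclusion, and Proposition~\ref{prop:bound} then yields $\VS(\Ss_l) \leq \VS(\Ss)$. Your side remark that behavioral inclusion $\Bs^\omega(\Ss) \subseteq \Bs^\omega(\Ss_l)$ also follows directly from Definition~\ref{def:slca} without the determinism hypothesis is a nice bit of extra care, but the argument is otherwise the same as the paper's.
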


Considering the idea of SACE simulation, a simple conceptual algorithm that can successfully compute $\VS(\Ss)$ is given in Alg.~\ref{alg:general}. The idea is to increment $l$ until the smallest-in-average cycle of $\Ss_l$ is verified in the concrete system. The algorithm requires one to be able to compute the S$l$CA of a given system (line 3) and to verify the existence of periodic behavior (line 5); these steps will be discussed for PETC traffic on §\ref{sec:mainpetc}. As we will see now, Alg.~\ref{alg:general} is in fact a semi-algorithm; depending on the behavior of $\Ss$, it may not terminate. 
\begin{algorithm}\caption{\label{alg:general}Computation of $\VS(\Ss)$}
	\begin{flushleft}
		\hspace*{\algorithmicindent} \textbf{Input:} A simple WTS $\Ss$ with $\Ys \subset \Q, |\Ys| < \infty$ \\
		\hspace*{\algorithmicindent} \textbf{Output:} $l, \Ss_l, \mathtt{V}, \sigma$
	\end{flushleft}
	\begin{algorithmic}[1]
		\State $l \gets 1$
		\While{true}
		\State Build $\Ss_l$ \Comment{(Def.~\ref{def:slca})}
		\State $\mathtt{V} \gets \VS(\Ss_l), \ \sigma \gets \SAC(\Ss_l)$ \Comment{\cite{karp1978characterization, chaturvedi2017note}}	
		\If{$\sigma^\omega \in \Bs^\omega(\Ss)$}
		\State \Return
		\EndIf
		\State $l \gets l+1$
		\EndWhile
	\end{algorithmic}
\end{algorithm}
The following result shows under which conditions there is a finite $l$ such that $\VS(\Ss_l) = \VS(\Ss).$
\begin{thm}\label{thm:lcompleteisoptimal}
	Consider a simple finite WTS $\Ss$ and assume that there exists a finite $m \in \N$ 
	such that every infinite behavior $\alpha \in \Bs^\omega(\Ss)$ satisfies $\Avg(\beta) \geq \VS(\Ss)$, for every non-transient subsequence $\beta$ of $\alpha$ with $|\beta| = m$.  
	Then there exists a finite $l$ such that 
	the $l$-complete simulation $\Ss_l$ of $\Ss$ satisfies $\VS(\Ss_l) = \VS(\Ss)$.
\end{thm}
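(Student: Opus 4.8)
The plan is to show that for sufficiently large $l$, the strongest $l$-complete abstraction $\Ss_l$ cannot contain any cycle whose average is strictly below $\VS(\Ss)$. Combined with Proposition~\ref{prop:lgivesbound}, which already gives $\VS(\Ss_l) \leq \VS(\Ss)$, this forces $\VS(\Ss_l) = \VS(\Ss)$. So the whole argument reduces to ruling out ``spurious'' cheap cycles in $\Ss_l$, and the hypothesis on $m$-long non-transient subsequences is exactly the leverage for doing so.

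**The key steps.** First I would recall the structure of $\Ss_l$: its states are $l$-long fragments of behaviors of $\Ss$, and a run of $\Ss_l$ projects (via the output map, which reads the first symbol) onto a behavior that is \emph{locally} consistent with $\Ss$ — every window of length $l$ appearing along it is a genuine window of some behavior of $\Ss$ — but need not be a global behavior of $\Ss$. A cycle $c$ in $\Ss_l$ of length $p$ generates the periodic sequence $c^\omega$, all of whose length-$l$ windows occur in $\Bs^\omega(\Ss)$. Since $\Ys$ is finite and $\Ss$ is finite-state, I would next invoke a pumping/combinatorial argument: because $\Ss$ is finite, for $l$ large enough, any sequence all of whose $l$-windows appear in behaviors of $\Ss$ must itself be ``realizable'' in the sense that arbitrarily long prefixes of it appear as finite prefixes in $\Bs^+(\Ss)$ — i.e., $\Bs^{\leq n}$ of $\Ss_l$ and of $\Ss$ coincide for all $n$ once $l$ exceeds some threshold tied to $|\Xs|$ (this is essentially Remark~\ref{rem:lbisim}'s statement that $\lim_{l\to\infty}\Ss_l \bisim_\Bs \Ss$, made quantitative). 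So the periodic behavior $c^\omega$, restricted to long-enough prefixes, is a finite prefix of some genuine behavior $\alpha \in \Bs^\omega(\Ss)$; take in particular a prefix of length a large multiple $Nm$ of $m$ that ``wraps around'' the cycle $c$ an integer number of times.

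**Bringing in the hypothesis.** Now decompose this length-$Nm$ prefix of $c^\omega$ (which is a prefix of a real behavior $\alpha$ of $\Ss$) into $N$ consecutive blocks $\beta_1,\dots,\beta_N$ of length $m$. Each $\beta_j$ is a length-$m$ subsequence of $\alpha \in \Bs^\omega(\Ss)$; for $N$ large, all but a bounded number of these blocks are non-transient (a transient block, by definition, can occur only within a bounded prefix, so only finitely many positions — independent of $N$ — can host one). By hypothesis each non-transient length-$m$ block has $\Avg(\beta_j) \geq \VS(\Ss)$. Averaging over $j=1,\dots,N$ and letting $N\to\infty$, the contribution of the bounded number of transient blocks vanishes, so $\Avg$ of the length-$Nm$ prefix of $c^\omega$ tends to a value $\geq \VS(\Ss)$. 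But $c^\omega$ is periodic with period $c$, so this limiting average equals $\Avg(c)$; hence $\Avg(c) \geq \VS(\Ss)$. Since $c$ was an arbitrary cycle of $\Ss_l$, every cycle of the finite-state $\Ss_l$ has average $\geq \VS(\Ss)$, and by Theorem~\ref{thm:limavg} $\VS(\Ss_l)$ is attained at some cycle, so $\VS(\Ss_l) \geq \VS(\Ss)$; with Proposition~\ref{prop:lgivesbound} this gives equality.

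**The main obstacle.** The delicate point is making precise the quantitative version of ``$l$-windows of behaviors of $\Ss$ force realizability in $\Ss$,'' i.e., choosing $l$ so that the cyclic behaviors of $\Ss_l$ are guaranteed to be prefixes of actual behaviors of the finite system $\Ss$ — in effect a De~Bruijn/pumping bound of the form $l \geq |\Xs|$ (or $l$ larger than the transient parameter plus the longest simple cycle). I would handle it by exploiting that $\Ss$ is finite: a sufficiently long fragment all of whose $l$-windows are witnessed in $\Ss$ pins down a walk in $\Ss$'s transition graph, because determinism/finiteness means the $l$-window determines the current state once $l$ exceeds the diameter-type bound; then the cycle $c$ of $\Ss_l$ lifts to a genuine cyclic walk in $\Ss$, and one reads off $\Avg(c) \geq \VS(\Ss)$ directly from the definition of $\VS$ together with the $m$-block hypothesis applied to that walk's infinite unrolling. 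Care is also needed that ``non-transient'' is used correctly — that all but boundedly many of the $m$-blocks along the long prefix are non-transient — which follows since the set of transient finite sequences is finite-prefix-bounded by definition.
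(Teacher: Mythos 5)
Your overall strategy is the paper's: get $\VS(\Ss_l)\leq\VS(\Ss)$ from Prop.~\ref{prop:lgivesbound}, and rule out cheap cycles of $\Ss_l$ by cutting the unrolling of a cycle into $m$-blocks and applying the hypothesis to every non-transient block. The gap is in how you dispose of the transient blocks. You do it via the claim that, once $l$ is large enough (relative to $|\Xs|$), any sequence all of whose $l$-windows are witnessed in behaviors of $\Ss$ has arbitrarily long prefixes in $\Bs^+(\Ss)$, so that a long prefix of $c^\omega$ sits inside a genuine behavior $\alpha$ and transient blocks can occupy only boundedly many positions of $\alpha$. That realizability claim is false, even for finite-state $\Ss$: it amounts to asserting that the behavior set becomes a shift of finite type for large $l$, whereas finite transition systems generate general sofic behavior sets. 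Concretely, take states $1_s,0_a,0_b$ with outputs $1,0,0$ and edges $1_s\to 1_s$, $1_s\to 0_a$, $0_a\to 0_b$, $0_b\to 0_a$, $0_b\to 1_s$ (the ``even shift'': runs of $0$ between consecutive $1$s have even length). For any $l$, the word $c=1\,0^{2K+1}$ with $2K+1\geq l$ is a cycle of $\Ss_l$, since every $l$-window of $c^\omega$ contains at most one $1$ and is a genuine fragment of some behavior; yet the prefix $1\,0^{2K+1}1$ of $c^\omega$ occurs in no behavior of $\Ss$ at all. (Here $\VS(\Ss)=0$ and the hypothesis holds trivially, so the theorem is not contradicted --- only your intermediate step is.) Example~\ref{ex:badword} is the infinite-state analogue of the same phenomenon, and Remark~\ref{rem:lbisim} asserts behavioral equivalence only in the limit $l\to\infty$, not at any finite $l$. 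The appeal to ``determinism means the $l$-window determines the state'' also has no basis: the theorem assumes neither determinism nor that outputs determine states.

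The repair --- and what the paper actually does --- is to never realize $c^\omega$ globally. One shows directly that a \emph{transient} $m$-word $\beta$ cannot occur in $\sigma^\omega$ for any cycle $\sigma$ of $\Ss_l$ once $l$ is large: if it did, some $l$-window of $\sigma^\omega$ would have the form $\gamma\beta$ with $|\gamma|=l-m$, and every state of $\Ss_l$ is by construction an $l$-long fragment of an actual behavior of $\Ss$; hence $\beta$ occurs at depth at least $l-m$ in some behavior, and letting $l\to\infty$ contradicts transience. Since $\Ys^m$ is finite, a single $l$ works for all transient $m$-words, so for that $l$ \emph{every} $m$-block of every cycle's unrolling is non-transient, and each block --- being contained in some $l$-window --- is a subsequence of some genuine behavior, which is all the hypothesis requires (the blocks need not come from one common $\alpha$). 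With that substitution your computation goes through; the paper even avoids your $N\to\infty$ error term by partitioning $\sigma^{m}$ exactly into $|\sigma|$ blocks of length $m$, all of which are non-transient.
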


\begin{proof}
	First we prove that, if $\beta$ is transient, then there exists $l$ large enough such that $\beta$ cannot be a subsequence of $\sigma^\omega$ for any cycle $\sigma$ of $\Ss_l$. For that, suppose by contradiction that, $\forall L, \exists l \geq L$ for which a cycle $\sigma$ of $\Ss_l$ exists s.t.~$\beta$ \emph{is} a subsequence of $\sigma^\omega$; w.l.o.g., assume that $l > m$. Then, there exists a word $\gamma\beta$ of length $l$ that is a subsequence of $\sigma^\omega$; hence, $|\gamma| = l-m$. This holds because for any natural number $p$, $\sigma^p\beta$ is a subsequence of $\sigma^p\sigma^\omega = \sigma^\omega.$ Now, by definition of $\Ss_l$, $\gamma\beta \in \Bs^l(\Ss)$. Since $l$ can be chosen arbitrarily large, $\beta$ can occur arbitrarily late in a behavior of $\Ss$, thus contradicting the fact that it is transient.
	
	Therefore, there exists $l$ large enough such that, for every cycle $\sigma$ of $\Ss_l$, every $m$-long subsequence $\beta$ of $\sigma^\omega$ is non-transient. From Theorem \ref{thm:limavg}, one such cycle satisfies $\VS(\Ss_l) = \LimAvg(\sigma^\omega).$ Let $p \coloneqq |\sigma|$. Then, $\sigma^m$ has length $pm$ and as such it can be divided in $p$ non-transient subsequences $\beta_i$, not necessarily distinct, of length $m$. Now,
	$$ \VS(\Ss_l) = \LimAvg(\sigma^\omega) = \LimAvg((\sigma^m)^\omega)) = \Avg(\sigma^m) = \frac{1}{p}\sum_{i=1}^p\Avg(\beta_i) \geq \VS(\Ss). $$
	Since, by Prop.~\ref{prop:lgivesbound}, $\VS(\Ss_l) \leq \VS(\Ss)$, it holds that $\VS(\Ss_l) = \VS(\Ss)$.
\end{proof}

Theorem \ref{thm:lcompleteisoptimal} states that it is sufficient for it to exist an $m$ large enough such that every ``persistent'' $m$-long behavior fragment $\beta$ of $\Ss$ has higher or equal average than $\VS(\Ss)$. Intuitively, constraining the assumption of $\beta$ occurring infinitely often has the idea of excluding transient behaviors $\beta$, which do not affect the LimAvg value. For cases where $\beta$ can occur infinitely often in some behavior, but $\beta^\omega$ is not a behavior of $\Ss$, one can construct counterexamples in which $\VS(\Ss_l) < \VS(\Ss)$ for all $l$:
\begin{ex}\label{ex:badword}
	Consider a system $\Ss$ with behavior set $\Bs^\omega(\Ss) = \{(1^n2^n)^\omega \mid n \in \N\}$. Obviously, $\VS(\Ss) = 1.5$. However, for any $l$, $(1^l)^\omega \in \Bs^\omega(\Ss_l)$, hence $\VS(\Ss_l) = 1$ for any $l$.
\end{ex}
\begin{ex}\label{ex:irrational}
	Consider the system $\Ss = ([0,1], [0,1], \Es, \{0, 1\}, H)$ where $\Es = \{(x, x + a \bmod 1)\}$ and $H(x) = 1$ if $x < a$ and 0 otherwise. When $a$ is irrational, $\Ss$ is called an irrational rotation. Because it is ergodic with respect to the Lebesgue measure \cite{demelo2012one}, $\LimAvg(\alpha) = a$ for any $\alpha \in \Bs^\omega(\Ss)$. Thus, $\VS(\Ss) = a$ is irrational. Since for every finite $l$, $\VS(\Ss_l)$ is a rational number (as a consequence of Theorem \ref{thm:limavg} and the fact that $\Ss_l$ is finite), $\VS(\Ss_l) \neq \VS(\Ss).$ Finally, from Prop.~\ref{prop:lgivesbound}, $\VS(\Ss_l) \leq \VS(\Ss),$ thus $\VS(\Ss_l) < \VS(\Ss)$ for all finite $l$.
\end{ex}

Note that, for Ex.~\ref{ex:irrational}, the minimum number of 1s in a behavior fragment of length $n$ is $\floor*{na}$, hence $\VS(\Ss_l) = \frac{\floor*{la}}{l}$, which asymptotically approaches $a$ as $l$ goes to infinity. For Ex.~\ref{ex:badword}, we cannot obtain this asymptotic approximation.

The conditions in Theorem \ref{thm:lcompleteisoptimal} do not imply that the SAC $\sigma$ of $\Ss_l$ satisfies $\sigma^\omega \in \Bs^\omega(\Ss)$; thus, we may have equality of LimAvg values without a SACE simulation. Therefore, under these conditions, Alg.~\ref{alg:general} can be interrupted with the exact value, but with no certificate that this is the case. Its termination is guaranteed when there is a cyclic minimizing behavior, and additionally that the other behaviors have limit average values strictly larger than that of the cycle:

\begin{thm}\label{thm:Vcomputable}
	Consider a simple WTS $\Ss$, and suppose $\Ss$ satisfies the premises of Theorem \ref{thm:lcompleteisoptimal}. Furthermore, assume there exists an $m$-long sequence $\sigma$  such that $\sigma^\omega \in \Bs^\omega(\Ss),$ and that every non-transient subsequence $\beta,$ $\norm[\beta] = m$ of every behavior $\alpha \in \Bs^\omega(\Ss)$ satisfies $\LimAvg(\beta^\omega) > \VS(\Ss)$ if $\beta$ is not a subsequence of $\sigma^\omega$. Then Alg.~\ref{alg:general} terminates with $\mathrm{V} = \VS(\Ss)$.
\end{thm}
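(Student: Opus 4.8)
The plan is to separate the statement into two parts: (a) \emph{whenever} Algorithm~\ref{alg:general} returns it does so with $\mathtt{V}=\VS(\Ss)$, and (b) under the stated hypotheses it \emph{does} return. Part (a) costs almost nothing: if the algorithm returns at iteration $l$, then $\mathtt{V}=\VS(\Ss_l)$ and, for $\sigma=\SAC(\Ss_l)$, the sequence $\sigma^\omega$ recurs in $\Ss$ (read, consistently with Def.~\ref{def:macesim}, as the existence of a finite $d$ with $d\sigma^\omega\in\Bs^\omega(\Ss)$; a finite transient does not affect $\LimAvg$), hence $\VS(\Ss)\le\LimAvg(\sigma^\omega)=\VS(\Ss_l)$ by Theorem~\ref{thm:limavg}, while $\VS(\Ss_l)\le\VS(\Ss)$ by Prop.~\ref{prop:lgivesbound}, so $\mathtt{V}=\VS(\Ss)$. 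Thus the whole content of the theorem is termination, and that is exactly where the new hypothesis --- the strict inequality $\LimAvg(\beta^\omega)>\VS(\Ss)$ for the ``bad'' fragments --- will be used.

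For termination I would first pin down the relevant scales. Since $\Ss$ meets the premises of Theorem~\ref{thm:lcompleteisoptimal}, there is an $l_0$ with $\VS(\Ss_{l_0})=\VS(\Ss)$, and combining the monotonicity $\VS(\Ss_l)\le\VS(\Ss_{l+1})$ (Prop.~\ref{prop:schmuck}) with $\VS(\Ss_l)\le\VS(\Ss)$ (Prop.~\ref{prop:lgivesbound}) yields $\VS(\Ss_l)=\VS(\Ss)$ for all $l\ge l_0$. Next, reusing verbatim the first paragraph of the proof of Theorem~\ref{thm:lcompleteisoptimal} --- there are only finitely many words in $\Ys^m$, hence finitely many transient ones, and each of those is eventually excluded from $\sigma'^\omega$ for every cycle $\sigma'$ of $\Ss_l$ as $l$ grows (and stays excluded, since behaviors of $\Ss_{l+1}$ are behaviors of $\Ss_l$) --- I would fix $l_1\ge l_0$ such that, for all $l\ge l_1$, \emph{every} length-$m$ subword of $\sigma'^\omega$, for \emph{every} cycle $\sigma'$ of $\Ss_l$, is non-transient in $\Ss$.

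The core step is to show that for $l\ge l_1$ \emph{every} smallest-in-average cycle of $\Ss_l$ traces out (a shift of) $\sigma^\omega$. Fix $l\ge l_1$ and any SAC $\sigma_l$ of $\Ss_l$, put $p\coloneqq|\sigma_l|$, so $\LimAvg(\sigma_l^\omega)=\VS(\Ss_l)=\VS(\Ss)$ by Theorem~\ref{thm:limavg}. Take an arbitrary length-$m$ window $\beta$ of $\sigma_l^\omega$ and extend it to a length-$pm$ window of $\sigma_l^\omega$ (i.e.\ $m$ periods) of which $\beta$ is the first of $p$ consecutive length-$m$ blocks $\beta_1,\dots,\beta_p$. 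The arithmetic mean of that length-$pm$ window is $\VS(\Ss)$; each $\beta_i$ is non-transient by the choice of $l_1$, so $\Avg(\beta_i)=\LimAvg(\beta_i^\omega)\ge\VS(\Ss)$ by the premise of Theorem~\ref{thm:lcompleteisoptimal}; since the $p$ block-means average to $\VS(\Ss)$, each of them equals $\VS(\Ss)$, so no $\LimAvg(\beta_i^\omega)$ is \emph{strictly} larger than $\VS(\Ss)$. By the contrapositive of the new hypothesis, every $\beta_i$ --- in particular $\beta$ --- is a subword of $\sigma^\omega$, i.e.\ a cyclic rotation of $\sigma$. As $\beta$ was arbitrary, every length-$m$ window of $\sigma_l^\omega$ is a cyclic rotation of $\sigma$.

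What remains, and what I expect to be the delicate point, is a purely word-combinatorial lemma: if every length-$m$ factor of a periodic word $w$ is a cyclic rotation of a fixed length-$m$ word $\sigma$, then $w$ is a shift of $\sigma^\omega$. I would prove it by a Fine--Wilf argument --- when $\sigma$ is primitive, no two length-$(m-1)$ windows of $\sigma^\omega$ coincide (otherwise Fine--Wilf forces a proper period of $\sigma$), so in the de Bruijn graph on length-$m$ factors each factor has a unique successor and $\sigma^\omega$ is the only bi-infinite path; the non-primitive case $\sigma=\rho^k$ reduces to the primitive $\rho$. Applying the lemma with $w=\sigma_l^\omega$ gives $\sigma_l^\omega=\hat\sigma^\omega$ for some cyclic rotation $\hat\sigma$ of $\sigma$; writing $\sigma^\omega=d\,\hat\sigma^\omega$ with $d$ the corresponding finite prefix and using $\sigma^\omega\in\Bs^\omega(\Ss)$ shows that the cycle $\sigma_l$ is an infinitely recurring traffic pattern of $\Ss$, so the test on line~5 of Algorithm~\ref{alg:general} succeeds for this (arbitrary) SAC; in particular it succeeds by iteration $l=l_1$. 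Hence the loop terminates, and by part (a) with $\mathtt{V}=\VS(\Ss)$. Besides the Fine--Wilf lemma, the only other bookkeeping point is that $\SAC$ returns some rotation of $\sigma$ chosen by \cite{chaturvedi2017note}, which is why the argument above is run for \emph{every} SAC and line~5 is understood up to a finite transient, exactly as in Def.~\ref{def:macesim}.
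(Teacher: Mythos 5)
Your proof is correct, and it follows the same overall architecture as the paper's: invoke Theorem~\ref{thm:lcompleteisoptimal} to fix an $l$ with $\VS(\Ss_l)=\VS(\Ss)$, reuse its transient-elimination argument so that all $m$-long windows of any cycle of $\Ss_l$ are eventually non-transient, and then use the strict-inequality hypothesis together with an averaging argument to force every $m$-long window of any SAC of $\Ss_l$ to be a cyclic permutation of $\sigma$. It diverges at the central combinatorial step. The paper argues by contradiction (no cycle other than a cyclic permutation of $\sigma$ can be a SAC) and relies on the bespoke Lemmas~\ref{lem:cyclic1} and~\ref{lem:cyclic2}: if two distinct cyclic permutations $\beta_i\neq\beta_j$ of $\sigma$ occur adjacently, their concatenation contains an $m$-long factor that is \emph{not} a cyclic permutation of $\sigma$, which re-triggers the strict inequality. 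You instead prove directly that a periodic word all of whose $m$-long factors are rotations of $\sigma$ must be a shift of $\sigma^\omega$, via Fine--Wilf: for primitive $\sigma$ the $(m-1)$-long factors of $\sigma^\omega$ at offsets $0,\dots,m-1$ are pairwise distinct, so in the de Bruijn graph on the rotations each vertex has a unique successor and the only infinite path is $\sigma^\omega$ up to a shift; the imprimitive case reduces to the primitive root. The two lemmas are essentially contrapositives of each other; yours plugs into standard combinatorics-on-words machinery and delivers the cleaner statement that \emph{every} SAC of $\Ss_l$ traces $\sigma^\omega$, while the paper's is self-contained index chasing. Two pieces of bookkeeping you make explicit, which the paper glosses over and which are worth keeping: correctness-upon-termination is immediate from Theorem~\ref{thm:limavg} and Prop.~\ref{prop:lgivesbound} (so only termination needs the new hypothesis), and the cycle returned by $\SAC(\Ss_l)$ may be a nontrivial rotation $\hat\sigma$ of $\sigma$, so the test on line~5 must be read modulo a finite prefix --- harmless for $\LimAvg$, and literally satisfied when all states are initial, as in the PETC instantiation of Eq.~\eqref{eq:S}.
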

The proof requires some technical results on cyclic permutations of sequences and we leave it for the appendix. The main insight is that the conditions of Theorem \ref{thm:Vcomputable} imply that, for sufficiently large $l$, $\Ss_l$ has only one one SAC $\sigma$, modulo cyclic permutations, which attains the minimum value; at the same time, for large enough $l$, this $\sigma$ satisfies $\sigma^\omega \in \Bs^\omega(\Ss)$. Hereafter, we say that a system satisfying the premises of Theorem \ref{thm:Vcomputable} has an \emph{isolated SAC}. This does not mean that the behavior of $\Ss$ is simple, or that a finite-state bisimulation of it exists:
\begin{ex}\label{ex:doublingmap}
	Consider the \emph{doubling map} system $\Ss = ([0,1], [0,1], \Es, \{0, 1\}, H)$ where $\Es = \{(x, 2x \bmod 1) \mid x \in [0,1]\}$ and $H(x) = 0$ if $x < 1/2$ and 1 otherwise. The behavior of this system is $(0^+1^+)^\omega$, its smallest cycle is $0^\omega$ with value zero (obtained with $x_0 = 0$). This system does not admit a finite-state bisimulation, but its 1-complete abstraction is $\Ss_1 = \{\{0,1\}, \{0,1\}, \{(0,0), (0,1), (1,0), (1,1)\}, \{0, 1\}, \Id\},$ where $\Id$ is the identity operator. Clearly, $\Ss_1$ is a SACE simulation of $\Ss$ (in fact, it is behaviorally equivalent, but not bisimilar). The system $\Ss$ satisfies the premises of Theorem \ref{thm:Vcomputable} with $m=1$.
\end{ex}

Now that we have the general framework for the computation of $\VS(\Ss)$, we see how to apply it for PETC traffic.

%%%%%%%%%%%%%%%%%%%%%%%%%%%%%%%%%%%%%%%%%%%%%%%%%%%%%%%%%%%%%%%%%%%%%%%%%%%%%%%%
\section{COMPUTING THE SAIST OF PETC}\label{sec:mainpetc}

We start by describing the evolution of sampled states and ISTs of a PETC system, cf.~Eq.\ \eqref{eq:samplesytem}, as a transition system following Def.~\ref{def:system}:
\begin{equation}\label{eq:S}
	\begin{aligned}
		\Ss \coloneqq (\R^n&, \R^n, \Es, \Ys, H), \text{ where} \\
		\Es & = \{(\xv,\xv') \in \R^n \times \R^n \mid \xv' = \Mm(\kappa(\xv))\xv\}, \\
		\Ys &= \{1,2,...,\bar{k}\}, \\ 
		H &= \kappa.
	\end{aligned}
\end{equation}

System $\Ss$ is our concrete infinite-state system, for which we want develop an algorithm like Alg.~\ref{alg:general}. For this we need to be able to (i) build an $l$-complete abstraction of the system, (ii) compute its SAC, and (iii) check if its minimum mean cycle exists in the concrete system. 
Naturally, Karp's algorithm \cite{karp1978characterization, chaturvedi2017note} constitute the tool for task (ii). In the next section we present how to obtain $l$-complete abstractions of $\Ss$. Then, in §\ref{ssec:verifycycle}, we show how can a cyclic behavior be verified to be trace of $\Ss$. Finally, we present the full algorithm and discuss its robustness and applicability in subsequent subsections.

\subsection{$l$-complete PETC traffic models}\label{ssec:lcomplete}

As mentioned in §\ref{ssec:ts}, for autonomous deterministic systems such as $\Ss$ from Eq.\ \ref{eq:S}, a quotient-based approach can be used to obtain its S$l$CA $\Ss_l$. %
The idea is to divide the state-space $\Xs$ into regions $\Xs_{y_1y_2...y_l}$, where the first $l$ elements of any behavior in $\Bs_x^\omega(\Ss)$, for any $x \in \Xs_{y_1y_2...y_l}$, are exactly $y_1, y_2, ..., y_l$. If $\Ss$ is deterministic, this division generates a partition, as from one state $x$ there exists only one infinite behavior. In \cite{gleizer2020towards, gleizer2021hscc}, we have used this idea to construct finite-state PETC traffic models abstracting system \eqref{eq:S}, coming up with the following relation:
\begin{defn}[Inter-sample sequence relation \cite{gleizer2021hscc}]\label{def:bisimrel} Given a sequence length $l$, we denote by $\Rs_l \subseteq \Xs \times \Ys^l$ the relation satisfying 
	$(\xv,\dummy{k_1k_2...k_l}) \in \Rs_l$ if and only if
	\begin{subequations}\label{eq:sequence}
		\begin{align}
			%\xv &\neq \O, \label{eq:sequence_x0}\\
			\xv &\in \Qs_{k_1}, \label{eq:sequence_k1}\\
			\Mm(k_1)\xv &\in \Qs_{k_2}, \label{eq:sequence_k2}\\
			\Mm(k_2)\Mm(k_1)\xv &\in \Qs_{k_3}, \label{eq:sequence_k3}\\
			& \vdots \nonumber\\
			\Mm(k_{l-1})...\Mm(k_1)\xv &\in \Qs_{k_l}, \label{eq:sequence_kl-1}
		\end{align}
	\end{subequations}
	where 
	\begin{equation}\label{eq:setq}
		\begin{gathered}
			\Qs_k \coloneqq \Ks_k \setminus \left(\bigcap_{j=\underline{k}}^{k-1} \Ks_{j}\right) = \Ks_k \cap \bigcap_{j=1}^{k-1} \bar{\Ks}_{j}, \\
			\Ks_k \coloneqq \begin{cases}
				\{\xv \in \Xs| \xv\tran\Nm(k)\xv > 0\}, & k < \bar{k}, \\
				\R^\nx, & k = \bar{k}.
			\end{cases}
		\end{gathered}
	\end{equation}
\end{defn}

Eq.\ \eqref{eq:setq}, from \cite{gleizer2020scalable}, defines the sets $\Qs_k$, containing the states from which the next trigger happens exactly after $k$ time units. Eq.\ \eqref{eq:sequence} states that a state $\xv \in \R^n$ is related to a state $k_1k_2...k_l$ of the abstraction if its generated inter-sample time sequence for the next $l$ samples is $k_1,k_2,...,k_l$.

\begin{rem} Setting $l=1$ gives a \emph{quotient state set} \cite{tabuada2009verification} of $\Ss$ in \eqref{eq:S}, while larger values of $l$ can be seen as refinements using the bisimulation algorithm of \cite[Chapter 8]{tabuada2009verification}.
\end{rem}
\begin{defn}\label{def:lsim} Given an integer $l \geq 1$, the \emph{$l$-complete PETC traffic model} is the system $\Ss_l \coloneqq \left(\Xs_l, \Xs_l, \Es_l, \Ys, H_l\right)$, with 
	\begin{itemize}
		\item $\Xs_l \coloneqq \pi_{\Rs_l}(\Xs)$,
		\item $\Es_l = \{(k\sigma, \sigma k') \mid k,k' \in \Ys, \sigma \in \Ys^{l-1}, k\sigma, \sigma k' \in \Xs_l\},$
		\item $H_l(k_1k_2...k_l) = k_1.$
	\end{itemize}
\end{defn}
The model above partitions the state-space $\R^\nx$ of the PETC into subsets associated with the next $l$ inter-sample times these states generate, i.e., it is an $l$-complete abstraction, but also a quotient-based model. Computing the state set, $\pi_{\Rs_l}(\Xs)$, requires determining whether or not, for each $k_1k_2...k_l \in \Ys^l$, its associated conjunction of quadratic inequalities in Eq.\ \eqref{eq:sequence} admits a solution $\xv \in \R^{\nx}$; only if it does, then $\sigma \in \Xs_l$. This can be determined using a nonlinear satisfiability-modulo-theories (SMT) solver such as Z3 \cite{demoura2008z3}: for that, the variable is $\xv \in \R^\nx$ and the query is $\exists \xv \in \R^{\nx} : \ $ Eq.\ \eqref{eq:sequence} holds.%
\footnote{Alternatively, this query may be solved approximately through convex relaxations as proposed in \cite{gleizer2020scalable}. Using relaxations implies finding inter-sample sequences that may not be exhibited by the real system. This still generates a simulation relation, but containing more spurious behaviors.} %
The output map $H_l$ is the next sample alone, and the transition relation is based on the domino rule, as in Def.~\ref{def:slca}. %

\subsection{Verifying SACE equivalence}\label{ssec:verifycycle}

In this subsection, we are interested in determining whether a sequence of outputs $(k_1k_2...k_m)^\omega \eqqcolon \sigma^\omega$ is a possible behavior of system $\Ss$ in Eq.\ \eqref{eq:S}. This is equivalent to finding a run $\{\xv_i\}$ whose trace is $\sigma^\omega$. From now on, we denote by $\Qs_\sigma$, or \emph{$\sigma$-cone}, the set of all points $\xv \in \R^\nx$ satisfying Eq.\ \eqref{eq:sequence} with $l=m$ and by $\Mm_\sigma \coloneqq \Mm(k_m)\Mm(k_{m-1})\cdots\Mm(k_1)$. For the formal results, consider the following classes of square matrices:
\begin{defn}[Mixed matrix] Consider a matrix $\Mm \in \R^{n \times n}$ and let $\lambda_i, i \in \N_{\leq n}$ be its eigenvalues sorted such that $\norm[\lambda_i] \geq \norm[\lambda_{i+1}]$ for all $i$. We say that $\Mm$ is \emph{mixed} if, for all $i < n$, $\norm[\lambda_i] = \norm[\lambda_{i+1}]$ implies that $\Im(\lambda_i) \neq 0$ and $\lambda_i = \lambda_{i+1}^*$.
\end{defn}

\begin{rem} Mixed matrices cannot have eigenvalues with the same magnitude, except for complex conjugate pairs. Every mixed matrix is diagonalizable, but the converse does not hold (e.g., the identity is not mixed). The set of mixed matrices is full Lebesgue measure. With a non-pathological choice of $h$, the matrices $\Mm(1), \Mm(2), ... \Mm(\bar{k})$ from Eq.\ \eqref{eq:petc_time} are all mixed, even if $\Km$ is chosen to place poles of $\Am+\Bm\Km$ in the same point of the complex plane; it is sensible (but not guaranteed) to expect that their products are also mixed. From a linear systems perspective, all modes of a mixed matrix have different speeds. \end{rem} 

\begin{defn}[Matrix of irrational rotations] A matrix $\Mm \in \R^{n \times n}$ is said to be of \emph{irrational rotations} if the arguments of all of its complex eigenvalues are irrational multiples of $\piconst$.
\end{defn}

\begin{rem} If $\Mm$ has a pair of complex conjugate eigenvalues whose argument is a rational multiple of $\piconst$, i.e., $p\piconst/q,$ where $p,q \in \N$, then the corresponding eigenvalues of $\Mm^q$ are real. The set of real matrices of \emph{rational} rotations is Lebesgue-measure zero but dense in $\R^{n \times n}$.\end{rem}
	
If $\Mm_\sigma$ is mixed and of irrational rotations, one can verify if $\sigma^\omega$ is a behavior of $\Ss$ from Eq.\ \eqref{eq:S} by checking the linear invariants of $\Mm_\sigma$:

\begin{thm}\label{thm:verifycycle}
	Consider system \eqref{eq:S} and let $\sigma \in \Ys^m, m \in \N,$ be a sequence of outputs. (i) If $\Mm_\sigma$ is nonsingular and there exists a linear invariant $\As$ of $\Mm_\sigma$ such that $\As \setminus \{\O\} \subseteq \Qs_\sigma$, then $\sigma^\omega \in \Bs^\omega(\Ss)$. %
	Moreover, if (ii) $\Mm_\sigma$ is additionally mixed and of irrational rotations, then $\sigma^\omega \in \Bs^\omega(\Ss)$ implies that there exists a linear invariant $\As$ of $\Mm_\sigma$ such that $\As \subseteq \cl(\Qs_\sigma).$ 
\end{thm}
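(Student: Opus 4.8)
The plan is to treat the two implications separately, since they are of quite different character. For part (i), the idea is to use a nonzero point of the linear invariant $\As$ as an initial state of a run. Concretely, I would pick any $\xv_0 \in \As \setminus \{\O\}$. Since $\As$ is an invariant of $\Mm_\sigma$, every iterate $\xv_j \coloneqq \Mm_\sigma^j \xv_0$ stays in $\As$, and by nonsingularity $\xv_j \neq \O$ for all $j$, hence $\xv_j \in \Qs_\sigma$ for all $j$. By the definition of the $\sigma$-cone $\Qs_\sigma$ (Eq.~\eqref{eq:sequence} with $l=m$), a point in $\Qs_\sigma$ generates exactly the inter-sample sequence $k_1 k_2 \dots k_m$ over its next $m$ samples, and after those $m$ samples it has been mapped to $\Mm_\sigma \xv_0$, which again lies in $\Qs_\sigma$; so by induction the trace of the run starting at $\xv_0$ is $\sigma^\omega$, i.e.\ $\sigma^\omega \in \Bs^\omega(\Ss)$. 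This direction should be essentially immediate from the construction; the only thing to check carefully is that the partition sets $\Qs_k$ are invariant under nonzero scaling (they are, being defined by strict homogeneous quadratic inequalities) so that staying in $\As$ truly pins down the trace.

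For part (ii), I would argue contrapositively via an accumulation/limit-set argument. Suppose $\sigma^\omega \in \Bs^\omega(\Ss)$; then there is $\xv_0 \neq \O$ with $\Mm_\sigma^j \xv_0 \in \Qs_\sigma$ for all $j \in \No$. Normalize and look at the sequence $\zv_j \coloneqq \Mm_\sigma^j \xv_0 / |\Mm_\sigma^j \xv_0|$ on the unit sphere; extract a convergent subsequence with limit $\zv_*$. I want to show that the line (or a low-dimensional subspace) through $\zv_*$ is $\Mm_\sigma$-invariant and sits in $\cl(\Qs_\sigma)$. The closure membership is easy: $\Qs_\sigma$ need not be closed, but each $\zv_j$ lies in it, so $\zv_* \in \cl(\Qs_\sigma)$, and (again by homogeneity) the whole ray through $\zv_*$ does too. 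The invariance is where the hypotheses on $\Mm_\sigma$ enter. Here I would decompose $\xv_0$ in a basis adapted to $\Mm_\sigma$: since $\Mm_\sigma$ is mixed it is diagonalizable (over $\C$) with no repeated eigenvalue magnitudes except conjugate pairs, so write $\xv_0 = \sum_i c_i \vv_i$ with $\Mm_\sigma \vv_i = \lambda_i \vv_i$. The dominant surviving block — the eigenvalue(s) of largest modulus among those with $c_i \neq 0$ — governs the asymptotic direction. If that block is a single real eigenvalue, $\zv_j$ converges (up to sign) to the corresponding eigenvector, giving a $1$-dimensional invariant line. If it is a complex-conjugate pair with modulus strictly dominating the rest, then $\zv_j$ rotates within the real $2$-dimensional $\Mm_\sigma$-invariant plane spanned by that pair; the irrational-rotation hypothesis then forces the orbit to be dense in a circle inside that plane, so its closure is the whole unit circle of the plane, and the $2$-dimensional plane is the invariant $\As$. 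In either case the accumulation set is contained in a genuine $\Mm_\sigma$-invariant subspace $\As$, and since every $\zv_j \in \Qs_\sigma$ we get $\As \subseteq \cl(\Qs_\sigma)$.

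The main obstacle I anticipate is the complex-conjugate case in part (ii): one has to rule out the possibility that the orbit $\{\zv_j\}$ accumulates only on a proper closed subset of the invariant plane that fails to be invariant, and this is exactly what irrationality of the rotation angle prevents (equidistribution of $\{j\theta \bmod 2\pi\}$ by Weyl), so the plane itself is the smallest invariant subspace containing the closure of the orbit. A secondary subtlety is bookkeeping of which eigen-components actually appear in $\xv_0$: if lower-modulus modes are present they wash out asymptotically, so the invariant $\As$ one extracts is determined by the top surviving modulus, and the ``mixed'' hypothesis is precisely what guarantees that this top stratum is either a line or a single conjugate plane rather than a higher-dimensional eigenspace on which the limiting behavior would be harder to control. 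I would also remark (for the paper's later robustness discussion) that ``of irrational rotations'' is used only in part (ii) and can be circumvented by passing to $\Mm_\sigma^q$ when the angle is $p\pi/q$, though that changes $\sigma$ to $\sigma^q$.
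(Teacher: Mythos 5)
Your proposal is correct and follows essentially the same route as the paper: part (i) by iterating a nonzero point of the invariant (nonsingularity keeping the orbit away from the origin), and part (ii) by normalizing the orbit, passing to its accumulation set, decomposing in the eigenbasis of the mixed matrix, and splitting into the real-dominant-eigenvalue case (limit line) and the complex-conjugate case, where density of the irrational rotation fills out the invariant plane inside $\cl(\Qs_\sigma)$. The paper merely packages part (ii) as a standalone lemma about homogeneous sets and handles the rational-rotation case by passing to $\Mm_\sigma^q$, exactly as you remark at the end.
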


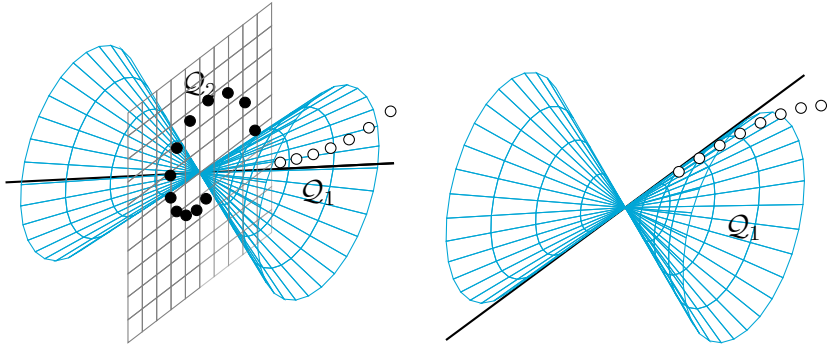
\begin{figure}
	\begin{center}
		\begin{tikzpicture}
\begin{axis}[
	hide axis,
	domain=-1:1,
	y domain=0:-2*pi,
	xmin=-1.5, xmax=1.5,
	ymin=-1.5, ymax=1.5, zmin=-1.2,
	samples=10,
	samples y=30,
	%		z buffer=sort,
	]
	% Real eigendirection
	\addplot3[domain=-1.5:1.5, samples y=1, thick]
	({x}, {0.2*x}, {0.2*x});

	% Cone
	\addplot3[mesh, domain=0:1, tudcyan, samples=5]
	({x},{1.1*x*cos(deg(y))},{1.1*x*sin(deg(y))});
	\addplot3[mesh, tudcyan, domain=-1:0, y domain = pi/5:3*pi/2, samples y=20, samples=5,]
	({x},{1.1*x*cos(deg(y + pi))},{1.1*x*sin(deg(y + pi))});
	
	\node[draw=none] at (1,0,0) (q1) {$\Qs_1$};
	\node[draw=none] at (0,0,1) (q2) {$\Qs_2$};
	
	% Real on top of cone
	\addplot3[domain=0.6:1.5, samples y=1, thick]
	({x}, {0.2*x}, {0.2*x});
	
	% Add samples converging
	\addplot3[only marks, domain=0.6:1.3, samples y=1, samples=7, mark=*,mark options={fill=white}]
	({x}, {0.2*x + exp(3*(x-1.6))}, {0.2*x + exp(3*(x-1.6))});
	
	% Add plane
	\addplot3[mesh, domain=-1.3:1.3, y domain=0:1.3, samples=11, samples y=6, thin, gray]
	({0}, {x}, {y});
	\addplot3[mesh, domain=-1.3:0, y domain=-1.3:0, samples=6, samples y=6, thin, gray]
	({0}, {x}, {y});
	\addplot3[mesh, domain=0:1.3, y domain=-1.3:0, samples=6, samples y=6, thin, gray, opacity=0.5]
	({0}, {x}, {y});
	
	% Focus sample trajectory
	\addplot3[only marks, domain=0:1.6*pi, samples y=1, samples=12, mark=*]
	({0}, {cos(deg(x))*exp(-0.2*(x))}, {sin(deg(x))*exp(-0.2*(x))});

\end{axis}
\end{tikzpicture} \quad %
		\begin{tikzpicture}
	\begin{axis}[
		hide axis,
		domain=-1:1,
		y domain=0:-2*pi,
		xmin=-1.5, xmax=1.5,
		ymin=-1.5, ymax=1.5, zmin=-1.2,
		samples=10,
		samples y=30,
		%		z buffer=sort,
		]
		% Real eigendirection
		\addplot3[domain=-1:1, samples y=1, thick]
		({x}, {1.1*x}, {1.1*x});
		
		% Cone
		\addplot3[mesh, domain=0:1, tudcyan, samples=5]
		({x},{1.1*x*cos(deg(y))},{1.1*x*sin(deg(y))});
		\addplot3[mesh, tudcyan, domain=-1:0, y domain = pi/5:3*pi/2, samples y=20, samples=5,]
		({x},{1.1*x*cos(deg(y + pi))},{1.1*x*sin(deg(y + pi))});
		
		\node[draw=none] at (1,0,0) (q1) {$\Qs_1$};

		% Add samples converging
		\addplot3[only marks, domain=0.3:1.1, samples y=1, samples=8, mark=*,mark options={fill=white}]
		({x}, {1.1*x - 0.1*exp(3*(x-1.6))}, {1.1*x - 2*exp(3*(x-1.6))});

	\end{axis}
\end{tikzpicture}
		\caption{\label{fig:modesandcones} Illustration of Theorem \ref{thm:verifycycle} in $\R^3$. The blue cone splits $\R^3$ into $\Qs_1$ and $\Qs_2$ the line is an invariant of $\Mm(1)$ and the plane is an invariant of $\Mm(2).$ Points indicate distinct sample trajectories $\{\xv_i\}$.}
	\end{center}
\end{figure}

To avoid a long detour in our exposition, we leave the proof to the appendix, instead providing here a depiction of the idea behind it: In Fig.~\ref{fig:modesandcones}, we have $m=1$ and $\Ys=\{1,2\}$, and the blue cone splits $\R^3$, the state space, in $\Qs_1$ and $\Qs_2$; the two plots have different matrices $\Mm(1)$. Runs $\{\xv_i\}$ that generate the trace $1^\omega$ are solutions of the linear system $\xv_{i+1} = \Mm(1)\xv_i$, one such example being depicted with white dots. Likewise, black dots show a run generating the trace $2^\omega$, and it has to be a solution of $\xv_{i+1} = \Mm(2)\xv_i$. In the example on the left, the black line is supported by one real eigenvector of $\Mm(1)$ and, as it belongs to $\Qs_1$, at least solutions on top of this eigendirection are runs of the PETC system $\Ss$. In our example, this eigenvector is associated with a dominant mode of $\Mm(1),$ so solutions starting close to it converge towards it. The plane depicted on the left of Fig.~\ref{fig:modesandcones} is an invariant of $\Mm(2)$ associated to complex conjugate eigenvalues. Solutions starting in this plane stay in this plane, spiraling towards the origin (in case the PETC implementation is stabilizing), which confirms that $2^\omega$ is also a behavior of $\Ss$. The example on the right shows the defective case where the converse does not hold: for that, assume that $\Qs_1$ does not include its depicted blue boundary; however, the black line representing an eigendirection of $\Mm(1)$ runs precisely on this boundary. In this example, the white dots represent a run $\{\xv_i\}$ in $\Qs_1$, thus generating the trace $1^\omega$, but no invariant of $\Mm(1)$ is a subset of $\Qs_1$. Because the depicted mode of $\Mm(1)$ is dominant, there are solutions that start close to its associated eigendirection that stay in $\Qs_1$ forever.

Based on Theorem \ref{thm:verifycycle}, in the non-defective cases we can verify a cyclic behavior $\sigma^\omega$ by taking the finitely many linear invariants $\As$ of $\Mm_\sigma$ and checking if $\As \setminus \{\O\} \subseteq \Qs_\sigma$, or, more explicitly, taking $\sigma = k_1k_2...k_m$,
	\begin{equation}\label{eq:As_sequence}
		\begin{aligned}
			%\xv &\neq \O, \label{eq:sequence_x0}\\
			\As \setminus \{\O\} &\subseteq \Qs_{k_1},\\
			\Mm(k_1)\As \setminus \{\O\} &\subseteq \Qs_{k_2},\\
			%\Mm(k_2)\Mm(k_1)\As \setminus \{\O\} &\subseteq \Qs_{k_3},\\
			& \vdots \\
			\Mm(k_{m-1})...\Mm(k_1)\As \setminus \{\O\} &\subseteq \Qs_{k_m}.
		\end{aligned}
	\end{equation}
Because each $\Qs_k$ is an intersection of quadratic sets (see Eq.\ \eqref{eq:setq}), we must be able to check whether a linear space is a subset of a given quadratic set, which is nothing but a positive-(semi)definiteness check:
\begin{prop}[\cite{gleizer2021hscc}]\label{prop:subspaceposdef}
	Let $\As$ be a linear subspace with basis $\vv_1, \vv_2,..., \vv_m$, and let $\Vm$ be the matrix composed of the vectors $\vv_i$ as columns. Let $\Qm \in \S^n$ be a symmetric matrix and define $\Qs_n \coloneqq \{\xv \in \R^n \mid  \xv\tran\Qm\xv \geq 0\}$ and $\Qs_s \coloneqq \{\xv \in \R^n \mid  \xv\tran\Qm\xv > 0\}$. Then, $\As \setminus \{\O\} \subseteq \Qs_n$ (resp.~$\Qs_s$) if and only if $\Vm\tran\Qm\Vm \succeq \O$ (resp.~$\Vm\tran\Qm\Vm \succ \O$).
\end{prop}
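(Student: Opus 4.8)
The plan is to reduce the subspace-containment condition to a standard (semi)definiteness condition on $\Vm\tran\Qm\Vm$ by parametrizing $\As$ through $\Vm$. First I would observe that, since $\vv_1,\dots,\vv_m$ form a basis of $\As$, the linear map $\zv \mapsto \Vm\zv$ is a bijection from $\R^m$ onto $\As$; in particular $\Vm$ has full column rank, so $\Vm\zv = \O$ if and only if $\zv = \O$. Hence $\xv$ ranges over $\As\setminus\{\O\}$ exactly as $\zv$ ranges over $\R^m\setminus\{\O\}$, and for every such $\xv = \Vm\zv$ we have the identity $\xv\tran\Qm\xv = \zv\tran(\Vm\tran\Qm\Vm)\zv$.

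For the semidefinite case, $\As\setminus\{\O\} \subseteq \Qs_n$ means $\xv\tran\Qm\xv \geq 0$ for all nonzero $\xv \in \As$; since the inequality holds vacuously at $\xv = \O$, this is equivalent to $\zv\tran(\Vm\tran\Qm\Vm)\zv \geq 0$ for all $\zv \in \R^m$, i.e.\ $\Vm\tran\Qm\Vm \succeq \O$. For the definite case, $\As\setminus\{\O\} \subseteq \Qs_s$ means $\xv\tran\Qm\xv > 0$ for all nonzero $\xv \in \As$; via the bijection this is equivalent to $\zv\tran(\Vm\tran\Qm\Vm)\zv > 0$ for all nonzero $\zv \in \R^m$, i.e.\ $\Vm\tran\Qm\Vm \succ \O$. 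Both directions of each equivalence are immediate because the substitution $\xv = \Vm\zv$ is invertible on the relevant domains.

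This statement is essentially routine linear algebra, so I do not anticipate a genuine obstacle; the only point worth making explicit in the write-up is that the full column rank of $\Vm$ (i.e.\ the linear independence of the $\vv_i$) is precisely what turns the correspondence between $\As\setminus\{\O\}$ and $\R^m\setminus\{\O\}$ into a bijection. Without it the definite equivalence would break: a nonzero $\zv$ in the kernel of $\Vm$ would give $\xv = \O$ with $\zv\tran(\Vm\tran\Qm\Vm)\zv = 0$, so $\Vm\tran\Qm\Vm \succ \O$ could fail even when $\As\setminus\{\O\} \subseteq \Qs_s$. Thus the proof only needs to state the parametrization and then read off the definitions of $\succeq$ and $\succ$.
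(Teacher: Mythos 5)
Your proof is correct and is the standard parametrization argument one would expect; the paper itself states this proposition without proof (citing prior work), and your reduction via the bijection $\zv \mapsto \Vm\zv$ together with the identity $\xv\tran\Qm\xv = \zv\tran(\Vm\tran\Qm\Vm)\zv$ is exactly the right way to establish it. Your remark that full column rank of $\Vm$ is what makes the strict-inequality equivalence work is the one subtlety worth recording, and you have handled it correctly.
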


\subsection{SACE simulation algorithm}

Combining the $l$-complete traffic models from §\ref{ssec:lcomplete} with the stopping criterion based on checking linear invariants from §\ref{ssec:verifycycle}, we specialize Algorithm \ref{alg:general} into Algorithm \ref{alg} to generate a finite-state SACE simulation of the PETC traffic model $\Ss$, together with the computation of its SAIST $\VS(\Ss)$. In the outer loop, the relation $\Rs_l$ and corresponding finite-state system $\Ss_l$ are built, followed by the computation of one of its SACs $\sigma$. Then, an inner loop looks for linear subspaces $\As$ of $\Mm_\sigma$ satisfying $\As \setminus \{\O\} \subseteq \Qs_\sigma$ (Theorem \ref{thm:verifycycle}); because $\Mm_\sigma$ is assumed to be mixed and of irrational rotations%
\footnote{Any matrix is arbitrarily close to a mixed matrix of irrational rotations; numerically checking if it is otherwise is not robust. A more thorough discussion about this is available in §\ref{ssec:robustness}}%
, it suffices to verify 1-dimensional subspaces for real eigenvectors and 2-dimensional subspaces for complex conjugate ones%
\footnote{If a larger dimensional subspace $\As'$ is a subset of $\Qs_\sigma$, any smaller dimensional subspace $\As \subset \As'$ will also be. Thus, there is no benefit in verifying subspaces that are combinations of smaller real linear subspaces}; if one is found, the algorithm terminates. Otherwise, $l$ is incremented and the main loop is repeated. Hereafter, we say that a linear invariant subspace of a mixed matrix is \emph{basic} if it is the span of a real eigenvector or of a pair of complex conjugate eigenvectors.
\begin{algorithm}\caption{\label{alg}PETC SAIST computation algorithm}
	\begin{flushleft}
		\hspace*{\algorithmicindent} \textbf{Input:} $\Ys$ and $\Mm(k), \Qs_k, \forall k \in \Ys$ \\
		\hspace*{\algorithmicindent} \textbf{Output:} $l, \Ss_l, \sigma, \mathtt{SAIST}$
	\end{flushleft}
	\begin{algorithmic}[1]
		\State $l \gets 1$
		\While{true}
		\State Build $\Rs_l$ and $\Ss_l$ \Comment{(Defs.~\ref{def:bisimrel} and \ref{def:lsim})}
		\State $\mathtt{SAIST} \gets \VS(\Ss_l), \ \sigma \gets \SAC(\Ss_l)$ \Comment{\cite{karp1978characterization, chaturvedi2017note}}
		\ForAll{$\As \in \mathrm{BILS}(\Mm_\sigma)$}  \Comment{BILS = basic invariant linear subspaces}	
		\If{$\As$ satisfies Eq.\ \eqref{eq:As_sequence} with $k_1,k_2,...,k_m = \sigma$}
		\State \Return
		\EndIf
		\EndFor
		\State $l \gets l+1$
		\EndWhile
	\end{algorithmic}
\end{algorithm}

In order to state formal results about the correctness of Algorithm \ref{alg}, we need to account for the conditions in Theorem \ref{thm:verifycycle}.

\begin{defn}[Normalized distance]\label{def:ndist}
	The \emph{normalized distance} between a point $\xv \in \R^n$ and a set $\As \subseteq \R^n$, denoted by $\nd(\xv,\As)$ is defined as $\inf_{\lv \in \As}\left(1 - \frac{\lv\tran\xv}{\norm[\lv]\norm[\xv]}\right)$. The normalized distance between two sets is $\nd(\As, \As') \coloneqq \inf_{\lv \in \As}\nd(\lv, \As').$	
\end{defn}

As the quantity $\frac{\lv\tran\xv}{\norm[\lv]\norm[\xv]}$ is the cosine of the angle between the vectors $\lv$ and $\xv$, the normalized distance varies between 0 and 1, measuring how close $\xv$ is, modulo magnitude, to the set $\As$. It is a more sensible choice of distance when dealing with homogeneous sets than the Euclidean distance, which would be zero as the origin is always in or arbitrarily close to such sets. This distance is needed for some technical results that come later, as well as for the following definition. 

\begin{defn}[Regularity]\label{def:regular}
	A sequence of ISTs $\sigma \coloneqq k_1k_2...k_m$ is said to be \emph{regular} if (i) $\Mm_\sigma$ is nonsingular, mixed, and of irrational rotations, and (ii) for every invariant linear subspace $\As$ of $\Mm_\sigma$, we have that 
	$\nd(\As, \partial\Qs_\sigma) \geq \epsilon$ for some $\epsilon > 0.$ 
\end{defn}

Regularity of a sequence $\sigma$ prevents that one of the invariants of $\Mm_\sigma$ intersect $\partial\Qs_\sigma$ (the case in the right of Fig.~\ref{fig:modesandcones}), requiring a minimal $\epsilon$ clearance to its boundary. % 
The following result establishes conditions for the termination of Alg.~\ref{alg}; the proof is in the Appendix.
\begin{thm}\label{thm:algterminates}
	Suppose that $\Ss$ from Eq.\ \eqref{eq:S} has an isolated smallest-in-average cycle $\sigma$ that is regular. Then, Alg.~\ref{alg} terminates with $\mathtt{SAIST} = \VS(\Ss)$.
\end{thm}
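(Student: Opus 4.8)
The goal is to combine the general termination result for Algorithm \ref{alg:general} (Theorem \ref{thm:Vcomputable}) with the cycle-verification result (Theorem \ref{thm:verifycycle}), checking that the specialization made in Algorithm \ref{alg} --- namely replacing the abstract test ``$\sigma^\omega \in \Bs^\omega(\Ss)$'' by the finite enumeration of basic invariant linear subspaces satisfying Eq.\ \eqref{eq:As_sequence} --- is both sound and eventually conclusive. The plan is therefore in three parts: (a) argue that the premises of Theorem \ref{thm:Vcomputable} hold for $\Ss$, so that for all sufficiently large $l$ the S$l$CA $\Ss_l$ has, modulo cyclic permutation, a unique SAC $\sigma'$ attaining $\VS(\Ss_l) = \VS(\Ss)$, and moreover this $\sigma'$ is a cyclic permutation of the isolated SAC $\sigma$ of $\Ss$; (b) show that the inner \textbf{for all} loop of Alg.~\ref{alg} is \emph{sound}, i.e.\ whenever it returns, the returned value is indeed $\VS(\Ss)$; and (c) show that for $l$ large enough the inner loop actually \emph{succeeds}, i.e.\ finds a basic invariant linear subspace $\As$ of $\Mm_{\sigma'}$ satisfying Eq.\ \eqref{eq:As_sequence}, so the algorithm halts.

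For (a): ``$\Ss$ has an isolated SAC'' is, by the definition introduced just after Theorem \ref{thm:Vcomputable}, exactly the statement that $\Ss$ satisfies the premises of Theorem \ref{thm:Vcomputable} with some witnessing $m$-long sequence $\sigma$; in particular it satisfies the premises of Theorem \ref{thm:lcompleteisoptimal}. Hence there is a finite $l_0$ with $\VS(\Ss_l) = \VS(\Ss)$ for all $l \geq l_0$. The isolation hypothesis (every non-transient $m$-long $\beta$ not a subsequence of $\sigma^\omega$ has $\LimAvg(\beta^\omega) > \VS(\Ss)$) is what lets me invoke the ``main insight'' recorded after Theorem \ref{thm:Vcomputable}: for $l$ large enough $\Ss_l$ has, up to cyclic permutation, a single minimizing cycle, and it is a cyclic rotation of $\sigma$. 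So for $l \geq l_1 \geq l_0$, the $\sigma$ returned on line 4 of Alg.~\ref{alg} is a cyclic permutation of the isolated SAC; since cyclic permutation leaves $\Mm_\sigma$ conjugate (indeed $\Mm_{k_2\cdots k_m k_1} = \Mm(k_1)\Mm_\sigma \Mm(k_1)^{-1}$, using nonsingularity from regularity) and maps invariant subspaces of one to invariant subspaces of the other while preserving the chain of inclusions in Eq.\ \eqref{eq:As_sequence}, I may as well assume $\sigma$ itself is the isolated, regular SAC.

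For (b), soundness: suppose the inner loop returns for some $\As \in \mathrm{BILS}(\Mm_\sigma)$ satisfying Eq.\ \eqref{eq:As_sequence}. By regularity $\Mm_\sigma$ is nonsingular, so part (i) of Theorem \ref{thm:verifycycle} applies and yields $\sigma^\omega \in \Bs^\omega(\Ss)$ --- here one checks that Eq.\ \eqref{eq:As_sequence} is literally the unrolled statement ``$\As\setminus\{\O\}\subseteq\Qs_\sigma$'' of Theorem \ref{thm:verifycycle}(i), using Eq.\ \eqref{eq:sequence} with $l=m$, and that testing $\Vm\tran\Qm\Vm$ against $\O$ via Prop.~\ref{prop:subspaceposdef} is exactly how each individual inclusion $\Mm(k_{j-1})\cdots\Mm(k_1)\As\setminus\{\O\}\subseteq\Qs_{k_j}$ is certified. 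Then $\sigma$ is a SAC of $\Ss_l$ with $\sigma^\omega$ a behavior of $\Ss$ after a (here empty) transient, so $\Ss_l$ is a SACE simulation of $\Ss$ in the sense of Def.~\ref{def:macesim} --- recall $\Ss \preceq \Ss_l$ by Prop.~\ref{prop:schmuck} --- and Prop.~\ref{prop:sace} gives $\VS(\Ss_l) = \VS(\Ss)$. Since $\mathtt{SAIST}$ was set to $\VS(\Ss_l)$ on line 4, the returned value equals $\VS(\Ss)$, as claimed. (Soundness holds for every $l$, not only large $l$; it is the progress part that needs $l$ large.)

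For (c), progress, which is the main obstacle: I must show that once $l \geq l_1$, so that the loop's $\sigma$ is the regular isolated SAC, the enumeration $\mathrm{BILS}(\Mm_\sigma)$ contains a subspace satisfying Eq.\ \eqref{eq:As_sequence}. Since $\sigma^\omega\in\Bs^\omega(\Ss)$ (it is the witness from the isolated-SAC hypothesis), and $\Mm_\sigma$ is mixed and of irrational rotations by regularity, Theorem \ref{thm:verifycycle}(ii) provides \emph{some} linear invariant $\As^\star$ of $\Mm_\sigma$ with $\As^\star \subseteq \cl(\Qs_\sigma)$. The two gaps to close are: first, $\As^\star$ need not be \emph{basic} --- but a basic invariant subspace $\As \subseteq \As^\star$ exists (every invariant subspace of a mixed, hence diagonalizable, matrix decomposes into basic ones, each still inside $\As^\star \subseteq \cl(\Qs_\sigma)$), which is precisely the remark in the footnote that ``any smaller dimensional subspace $\As\subset\As'$ will also be'' a subset; second, and more delicately, Theorem \ref{thm:verifycycle}(ii) only gives containment in the \emph{closure} $\cl(\Qs_\sigma)$, whereas Eq.\ \eqref{eq:As_sequence} demands containment in the open-ish set $\Qs_\sigma$ (strict inequalities for $k<\bar k$). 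This is exactly where \textbf{regularity} is used: condition (ii) of Def.~\ref{def:regular}, $\nd(\As,\partial\Qs_\sigma)\geq\epsilon$ for every invariant subspace, rules out $\As^\star$ touching the boundary, so $\As^\star\subseteq\cl(\Qs_\sigma)$ together with $\As^\star\cap\partial\Qs_\sigma=\{\O\}$ (or $\emptyset$ away from $\O$) upgrades to $\As^\star\setminus\{\O\}\subseteq\Qs_\sigma$, hence $\As\setminus\{\O\}\subseteq\Qs_\sigma$, i.e.\ Eq.\ \eqref{eq:As_sequence} holds for this basic $\As$. Therefore for every $l\geq l_1$ the inner loop returns, and combined with soundness the algorithm terminates with $\mathtt{SAIST}=\VS(\Ss)$. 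The one bookkeeping point I would be careful about is that the $\Qs_k$ appearing in Eq.\ \eqref{eq:setq} are intersections of several quadratic sets (one ``$>0$'' and several ``$\leq 0$'' via the complements $\bar\Ks_j$), so the boundary $\partial\Qs_\sigma$ and the invocation of Prop.~\ref{prop:subspaceposdef} must be applied to each quadratic form in the intersection separately; regularity's $\epsilon$-clearance is to the boundary of the whole intersection, which dominates each piece, so the argument goes through uniformly.
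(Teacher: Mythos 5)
Your proposal is correct and follows what is evidently the paper's intended argument: "isolated SAC" invokes Theorem \ref{thm:Vcomputable} so that for large enough $l$ the cycle returned on line 4 is (a cyclic permutation of) the regular $\sigma$, Theorem \ref{thm:verifycycle}(i) makes the BILS test sound, and Theorem \ref{thm:verifycycle}(ii) plus the $\epsilon$-clearance in Def.~\ref{def:regular} (upgrading $\As\subseteq\cl(\Qs_\sigma)$ to $\As\setminus\{\O\}\subseteq\Qs_\sigma$) makes it eventually succeed. The only quibble is the closing side remark --- the containment-in-closure-plus-boundary-avoidance argument needs only the boundary of the full intersection $\Qs_\sigma$, not a per-quadratic-piece clearance --- but this does not affect the correctness of the proof.
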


The conditions of Theorem \ref{thm:algterminates} are the same behavioral conditions as in Theorem \ref{thm:Vcomputable}: the system must exhibit a minimizing periodic behavior, and competing infinite behaviors must be composed of subsequences that have average value strictly larger than the minimal value. Additionally, the smallest cycle must be regular, which is not a limiting assumption. Therefore, the algorithm may not terminate when, for example, a minimizing behavior is aperiodic. In this case, we may still expect increasingly better estimates of $\VS(\Ss)$ with larger values of $l$.

\subsection{Robustness and computability}\label{ssec:robustness}

Algorithm \ref{alg} relies on the matrices $\Mm(k)$ from Eq.\ \eqref{eq:petc_time}, whose elements are typically transcendental. Therefore, one may wonder if the algorithm, or more generically a given $l$-complete SACE traffic model, is robust to small round-off errors when computing those matrices, as well as other small model mismatches. In this section, we are going to see that this is true in the general case. For this, we need proper definitions.

\begin{defn}[Perturbed PETC system] \label{def:perturbed} Given a PETC system \eqref{eq:plant}--\eqref{eq:quadtrig} and its data $\Am, \Bm, \Km, \Qm, \bar{k}$, the PETC system with data $\tilde\Am, \tilde\Bm, \tilde\Km, \tilde\Qm, \bar{k}$ is called a $\delta$-perturbation of the former if $\norm[\Am - \tilde\Am] \leq \delta, \norm[\Bm\Km - \tilde\Bm\tilde\Km] \leq \delta,$ and $\norm[\Qm - \tilde\Qm] \leq \delta$. Furthermore, the traffic model $\tilde{\Ss}$ cf.~Eq.\ \eqref{eq:S} of a $\delta$-perturbation of system \eqref{eq:plant}--\eqref{eq:quadtrig} is denoted a $\delta$-perturbation of $\Ss$.
\end{defn}

\begin{rem}
	Considering footnote \ref{foot:hequals1}, Def.~\ref{def:perturbed} also encompasses variations in the actual checking period $h$.
\end{rem}

\begin{defn}[$\epsilon$-inflation]
	The \emph{$\epsilon$-inflation} of a quadratic cone $\{\xv \in \R^n \mid \xv\tran\Qm\xv \mathrel{\geq\!(>)} 0\}$ is the set $\{\xv \in \R^n \mid \xv\tran(\Qm +\epsilon\I)\xv \mathrel{\geq\!(>)} 0\}$, for $\epsilon > 0$. An $\epsilon$-inflation of the intersection of quadratic cones is defined as the intersection of the $\epsilon$-inflations.
\end{defn}

Let $\Ps_\delta(\Ss)$ be the set of all $\delta$-perturbations of $\Ss$. We have the following results.

\begin{prop}\label{prop:robust}
	Let $\Ss$, Eq.\ \eqref{eq:S}, be the traffic model of system \eqref{eq:plant}--\eqref{eq:quadtrig}. If $\Ss_l$ is an $l$-complete model thereof (Def.~\ref{def:lsim}), then there exists $\delta > 0$ such that $\Ss_l$ is an $l$-complete model of every $\tilde\Ss \in \Ps_\delta(\Ss)$ if there exists an $\epsilon > 0$ such that the following conditions hold:
	\begin{itemize}
		\item For every $\sigma \in \Bs^l(\Ss),$ there exists $\xv \in \Qs_\sigma$ s.t.~$\nd(\xv,\partial\Qs_\sigma) > \epsilon$; and
		\item for every $\sigma \notin \Bs^l(\Ss),$ every $\epsilon$-inflation of $\Qs_\sigma$ is empty.
	\end{itemize}
\end{prop}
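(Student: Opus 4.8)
The plan is to reduce the statement to a single claim about \emph{feasible sequences}: it suffices to exhibit $\delta>0$ such that every $\tilde\Ss\in\Ps_\delta(\Ss)$ satisfies $\Bs^l(\tilde\Ss)=\Bs^l(\Ss)$. This is enough because, by Definitions~\ref{def:bisimrel} and~\ref{def:lsim}, for a deterministic system the $l$-complete traffic model is \emph{entirely determined} by its set of feasible $l$-sequences: the state set equals $\Bs^l(\Ss)=\{\sigma\in\Ys^l:\Qs_\sigma\neq\emptyset\}=\pi_{\Rs_l}(\Xs)$, the edge set is then pinned down by the domino rule on that state set, and $H_l$ merely returns the first letter; so if $\Bs^l(\tilde\Ss)=\Bs^l(\Ss)$, the $l$-complete model of $\tilde\Ss$ is literally $\Ss_l$. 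Moreover, since $\Ys=\{1,\dots,\bar k\}$ is finite, $\Ys^l$ is finite, so it is enough to secure, for each of the finitely many $\sigma\in\Ys^l$ separately, a threshold $\delta_\sigma>0$ that preserves the (in)feasibility of that $\sigma$, and then take $\delta\coloneqq\min_\sigma\delta_\sigma$.

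The first ingredient is a continuity estimate of the cones $\Qs_\sigma$ in the system data. The map $(\Am,\Bm\Km)\mapsto\Mm(k)$ of Eq.~\eqref{eq:petc_time} is real-analytic --- it is built from the matrix exponential, its definite integral, and matrix multiplication --- hence locally Lipschitz, so for $\delta$ small $\norm[\Mm(k)-\tilde\Mm(k)]\leq L\delta$ for all $k\in\Ys$; since $\Nm(k)$ is a polynomial in $(\Mm(k),\Qm)$, also $\norm[\Nm(k)-\tilde\Nm(k)]\leq L'\delta$. Writing $\Qs_\sigma$ out through Eqs.~\eqref{eq:sequence}--\eqref{eq:setq}, it is an intersection of quadratic conditions $\xv\tran\Fm_j\tran\Nm(i)\Fm_j\xv>0$ or $\xv\tran\Fm_j\tran\Nm(i)\Fm_j\xv\leq0$ for suitable indices $i,j$, where $\Fm_j\coloneqq\Mm(k_{j-1})\cdots\Mm(k_1)$ is a fixed finite product (and $\tilde\Fm_j$ stays bounded for $\delta$ small). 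Composing the Lipschitz bounds finitely many times yields $\norm[\tilde\Fm_j\tran\tilde\Nm(i)\tilde\Fm_j-\Fm_j\tran\Nm(i)\Fm_j]\leq\rho(\delta)$ with $\rho(\delta)\to0$ as $\delta\to0$, uniformly over the finitely many $\sigma$. Comparing quadratic forms differing by at most $\rho(\delta)\I$ in operator norm gives the sandwich $\Qs_\sigma^{-\rho(\delta)}\subseteq\tilde\Qs_\sigma\subseteq\Qs_\sigma^{+\rho(\delta)}$, where $\Qs_\sigma^{+\rho}$ is the $\rho$-inflation (the intersection of the inflations of the individual quadratic cones, as defined above) and $\Qs_\sigma^{-\rho}$ the opposite operation ($\Qm\mapsto\Qm-\rho\I$ in each defining form). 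I also use that inflation is monotone, $\Qs_\sigma^{+\rho}\subseteq\Qs_\sigma^{+\epsilon}$ for $\rho\leq\epsilon$, which is immediate from $\Qm+\rho\I\preceq\Qm+\epsilon\I$.

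The two hypotheses then finish the argument sequence by sequence. If $\sigma\notin\Bs^l(\Ss)$, the $\epsilon$-inflation of $\Qs_\sigma$ is empty; taking $\delta_\sigma$ with $\rho(\delta_\sigma)\leq\epsilon$ and using monotonicity gives $\tilde\Qs_\sigma\subseteq\Qs_\sigma^{+\rho(\delta_\sigma)}\subseteq\Qs_\sigma^{+\epsilon}=\emptyset$, so $\sigma\notin\Bs^l(\tilde\Ss)$. If $\sigma\in\Bs^l(\Ss)$, take the witness $\xv\in\Qs_\sigma$ with $\nd(\xv,\partial\Qs_\sigma)>\epsilon$; then a whole normalized neighbourhood of the ray through $\xv$ lies in $\Qs_\sigma$, so $\Qs_\sigma$ has nonempty interior --- which already rules out any sign-definite defining form --- and, normalizing $\xv$ and using compactness of the unit sphere, every strict defining inequality holds at $\xv$ with a uniform positive slack $\mu>0$ (depending only on $\epsilon$), while the non-strict conditions from the complements $\bar\Ks_j$ remain satisfied after a small perturbation of $\xv$. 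Hence $\xv$ (or a nearby point) lies in $\Qs_\sigma^{-\mu}$, and choosing $\delta_\sigma$ with $\rho(\delta_\sigma)\leq\mu$ puts it in $\Qs_\sigma^{-\rho(\delta_\sigma)}\subseteq\tilde\Qs_\sigma$, so $\tilde\Qs_\sigma\neq\emptyset$ and $\sigma\in\Bs^l(\tilde\Ss)$. Setting $\delta\coloneqq\min_{\sigma\in\Ys^l}\delta_\sigma$ yields $\Bs^l(\tilde\Ss)=\Bs^l(\Ss)$ for all $\tilde\Ss\in\Ps_\delta(\Ss)$, which by the first paragraph is the claim.

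I expect the main obstacle to be the bookkeeping of the middle step rather than a single deep point: propagating the data perturbation through the composed maps $\Fm_j$ so that one $\rho(\delta)$ serves simultaneously all conditions and all $\sigma\in\Ys^l$; and translating the homogeneous hypothesis $\nd(\xv,\partial\Qs_\sigma)>\epsilon$ into an honest additive margin $\mu>0$ in the quadratic forms, handling the interplay between the strict inequalities from the $\Ks_k$, the non-strict ones from the complements $\bar\Ks_j$, and degenerate forms (the pathological case of a defining form vanishing identically, i.e.\ $\Nm(k)\equiv0$, is non-generic and can be excluded). Once these are made precise, the rest is the monotonicity-of-inflation comparison above.
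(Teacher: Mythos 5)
Your proposal is correct and follows essentially the same route as the paper's proof: reduce the claim to preservation of the set of feasible $l$-sequences $\Bs^l(\Ss)$, then handle each $\sigma$ by continuity of the $\sigma$-cones in the system data, using the interior witness with normalized clearance $\epsilon$ to preserve feasibility and the empty $\epsilon$-inflation to preserve infeasibility. You merely spell out the continuity step (Lipschitz propagation through the composed maps and the deflation/inflation sandwich) that the paper compresses into ``by continuity,'' and you flag the same degenerate boundary case that the paper's argument also implicitly sets aside.
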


\begin{proof}
	By Definition \ref{def:lsim}, $\Ss_l$ is an S$l$CA of every $\tilde\Ss \in \Ps_\delta(\Ss)$ if
	\begin{enumerate}
		\item $\sigma \in \Bs^l(\Ss) \implies \sigma \in \Bs^l(\tilde\Ss), \forall \tilde\Ss \in \Ps_\delta(\Ss),$ and
		\item $ \sigma \notin \Bs^l(\Ss) \implies \sigma \in \Bs^l(\tilde\Ss), \forall \tilde\Ss \notin \Ps_\delta(\Ss).$
	\end{enumerate}
	For item 1, we must have a non-zero vector $\xv \in \tilde{\Qs}_\sigma$, where $\tilde\Qs_\sigma$ is the $\sigma$-cone of the $\delta$-perturbation $\tilde\Ss$. Because $\nd(\xv,\partial\Qs_\sigma) > \epsilon$, we have that the normalized distance to the complement of $\Qs_\sigma$ satisfies $\nd(\xv,\bar\Qs_\sigma) > \epsilon$. By continuity, this implies that $\nd(\xv,\bar{\tilde\Qs}_\sigma) > 0$ for small enough $\delta$, and hence $\xv \in \tilde\Qs \implies \sigma \in \Bs^l(\tilde\Ss)$.
	Likewise, for item 2, we cannot have a vector $\xv \in \tilde{\Qs}_\sigma$; by continuity, for small enough $\delta$, $\tilde{\Qs}_\sigma$ is a subset of the $\epsilon$-inflation of $\Qs_\sigma$, which is empty, and therefore $\sigma \notin \Bs^l(\tilde\Ss)$.
\end{proof}

\begin{prop}\label{prop:robustcycle}
	Let $\sigma^\omega$ be a cyclic behavior of $\Ss$ from Eq.\ \eqref{eq:S}. Then, if $\sigma$ is regular, there exists some $\delta>0$ such that $\sigma^\omega \in \Bs^\omega(\tilde\Ss), $ for all $\tilde\Ss \in \Ps_\delta(\Ss)$.
\end{prop}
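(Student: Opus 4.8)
The plan is to transport a ``robust'' linear invariant of $\Mm_\sigma$ --- one that already certifies $\sigma^\omega\in\Bs^\omega(\Ss)$ via Theorem~\ref{thm:verifycycle} --- to an analogous invariant of $\tilde{\Mm}_\sigma$ for every sufficiently small perturbation, and then invoke Theorem~\ref{thm:verifycycle}(i) for $\tilde\Ss$. \emph{Step 1 (a robust invariant of $\Ss$).} Since $\sigma$ is regular, $\Mm_\sigma$ is nonsingular, mixed, and of irrational rotations, so Theorem~\ref{thm:verifycycle}(ii) together with $\sigma^\omega\in\Bs^\omega(\Ss)$ yields a linear invariant $\As$ of $\Mm_\sigma$ with $\As\subseteq\cl(\Qs_\sigma)$. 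Because a mixed matrix has distinct eigenvalues, $\As$ contains a \emph{basic} invariant subspace $\As_0$ (the span of a real eigenvector, or of a complex-conjugate pair of eigenvectors), still contained in $\cl(\Qs_\sigma)$. Regularity also gives $\nd(\As_0,\partial\Qs_\sigma)\geq\epsilon>0$, which together with $\As_0\subseteq\cl(\Qs_\sigma)$ forces $\As_0\setminus\{\O\}$ into the interior of $\Qs_\sigma$. I would then upgrade this to a uniform quantitative margin: propagating a unit vector $\xv\in\As_0$ through the partial products $\Mm(k_i)\cdots\Mm(k_1)$ evaluates each of the finitely many quadratic forms $\Nm(j)$ appearing in the description \eqref{eq:sequence}--\eqref{eq:setq} of $\Qs_\sigma$ at a point on the compact unit sphere of a fixed subspace; none of these values can vanish (a vanishing one would put the corresponding orbit point, hence $\xv$, on $\partial\Qs_\sigma$, using that each $\Mm(k_i)$ is a linear homeomorphism), so by compactness there is $\rho>0$ with each such form of constant sign and absolute value $\geq\rho$ along the whole orbit of the unit sphere of $\As_0$.

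\emph{Step 2 (perturbation of the data).} For $\tilde\Ss\in\Ps_\delta(\Ss)$ one has $\norm[\Am-\tilde\Am]\leq\delta$, $\norm[\Bm\Km-\tilde\Bm\tilde\Km]\leq\delta$, $\norm[\Qm-\tilde\Qm]\leq\delta$. Since $\Mm(k)$ and $\Nm(k)$ depend smoothly on this data and $1\leq k\leq\bar{k}$, I would get constants so that $\norm[\Mm(k)-\tilde{\Mm}(k)]=O(\delta)$, $\norm[\Nm(k)-\tilde{\Nm}(k)]=O(\delta)$ for all $k\in\Ys$, and hence $\norm[\Mm_\sigma-\tilde{\Mm}_\sigma]=O(\delta)$. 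Consequently $\tilde{\Mm}_\sigma$ is nonsingular for $\delta$ small (as $\det\Mm_\sigma\neq0$), and --- because mixedness makes the eigenvalue(s) generating $\As_0$ simple (a simple real root of a real characteristic polynomial, respectively a simple complex-conjugate pair) --- standard eigenvalue/eigenvector perturbation provides a basic invariant subspace $\tilde{\As}_0$ of $\tilde{\Mm}_\sigma$ of the same dimension and type, each of whose unit vectors lies within $O(\delta)$ of a unit vector of $\As_0$.

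\emph{Step 3 (conclusion).} Propagating the $O(\delta)$ bounds on $\tilde{\Mm}(k_i)$, $\tilde{\Nm}(j)$ and on the perturbed vectors through the finitely many ($\leq m$) steps, every quadratic form evaluated along the $\tilde{\Mm}$-orbit of a unit vector of $\tilde{\As}_0$ differs by $O(\delta)$ from its nominal counterpart, which by Step~1 has magnitude $\geq\rho$; hence for $\delta$ small enough all these forms retain their sign, i.e.\ $\tilde{\As}_0\setminus\{\O\}\subseteq\tilde\Qs_\sigma$. Since $\tilde{\Mm}_\sigma$ is nonsingular and $\tilde{\As}_0$ is one of its linear invariants contained in $\tilde\Qs_\sigma$, Theorem~\ref{thm:verifycycle}(i) applied to $\tilde\Ss$ gives $\sigma^\omega\in\Bs^\omega(\tilde\Ss)$. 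It then suffices to take $\delta$ equal to the minimum of the finitely many positive thresholds produced above (one per quadratic form and step, plus nonsingularity of $\tilde{\Mm}_\sigma$ and the radius of validity of the eigenvector expansion).

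\emph{Main obstacle.} The delicate step is the eigenvector perturbation in Step~2: it is exactly here that the hypothesis that $\Mm_\sigma$ is \emph{mixed} is indispensable, since it forces the eigenvalues to be distinct --- hence simple --- which makes the basic invariant subspaces depend Lipschitz-continuously on the matrix; a repeated or defective eigenvalue could otherwise make the relevant invariant subspace vary discontinuously with $\delta$. A secondary technical point is the passage in Step~1 from the angular clearance $\nd(\As_0,\partial\Qs_\sigma)\geq\epsilon$ to the uniform multiplicative margin $\rho$ on the quadratic forms, which relies on compactness of the unit spheres of $\As_0$ and its images and on $\Qs_\sigma$ being a finite intersection of preimages of quadric half-spaces under the homeomorphisms $\Mm(k_i)$.
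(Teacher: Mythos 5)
Your proposal is correct and follows essentially the same route as the paper's proof: extract a basic invariant subspace via Theorem~\ref{thm:verifycycle}(ii) together with the regularity clearance $\nd(\As,\partial\Qs_\sigma)\geq\epsilon$, use mixedness (simple eigenvalues) to perturb that subspace continuously, show it remains strictly inside the perturbed cone, and close with Theorem~\ref{thm:verifycycle}(i). The only cosmetic difference is that you make the persistence of the clearance quantitative via a compactness margin $\rho$ on the quadratic forms, whereas the paper argues directly with the normalized distance; both are valid instantiations of the same continuity argument.
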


\begin{proof}
	From Theorem \ref{thm:verifycycle}, we have that $ \sigma^\omega \in \Bs^\omega(\Ss) \implies \As \subseteq \cl(\Qs_\sigma)$ for a basic linear invariant subspace $\As$ of $\Mm_\sigma$. From regularity of $\sigma$, $\nd(\As,\partial\Qs_\sigma) > \epsilon$. Together with $\As \subseteq \cl(\Qs_\sigma)$, we have that $\nd(\As,\bar\Qs_\sigma) > \epsilon.$ 
	Since $\sigma$ is regular, $\Mm_\sigma$ is mixed by definition. %Let $\delta$ be its degree of robustness. 
	Then, by continuity of eigenvalues and eigenvectors, for small enough $\delta$, the perturbed eigenvalues $\tilde\lambda_i$ are qualitatively unchanged: $\lambda_i \in \R \implies \tilde\lambda_i \in \R$, $\Im(\lambda_i) \neq 0 \implies \Im(\tilde\lambda_i) \neq 0$, and $\norm[\lambda_i] > \norm[\lambda_{i+1}] \implies |\tilde{\lambda}_i| > |\tilde\lambda_{i+1}|$. Thus, if $\As$ is a line associated to a real eigenvalue, so is the corresponsding basic linear subspace $\tilde\As$ of $\tilde\Mm_\sigma$; and likewise if $\As$ is a plane corresponding to complex conjugate eigenvalues of irrational rotations: even if $\tilde\Mm_\sigma$ is not of irrational rotations, the plane $\tilde\As$ is one of its invariants. In addition, $\nd(\As,\tilde\As) < d, $ where $d$ diminishes with $\delta$. Hence, for small enough $\delta$ we have that
	$ \nd(\As,\bar\Qs_\sigma) > \epsilon \implies \nd(\tilde\As,\bar{\tilde\Qs}_\sigma) > 0 \implies \tilde\As \setminus \{\O\} \subseteq \tilde\Qs_\sigma.$ Therefore, applying again Theorem \ref{thm:verifycycle}, we conclude that $\sigma^\omega \in \Bs^\omega(\tilde\Ss), \forall \tilde\Ss \in \Ps_\epsilon(\Ss).$
	
\end{proof}

These two propositions combined give the following result:

\begin{thm}\label{thm:robust}
	Let $\Ss$, Eq.\ \eqref{eq:S}, be the traffic model of system \eqref{eq:plant}--\eqref{eq:quadtrig}, and let $\Ss_l$ be its SACE simulation. If its smallest-in-average cycle $\sigma$ is regular and $\Ss_l$ satisfies the premises of Prop.~\ref{prop:robust}, then there exists $\delta > 0$ such that $\Ss_l$ is SACE simulation of every $\tilde\Ss \in \Ps_\delta(\Ss).$
\end{thm}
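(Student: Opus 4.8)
The plan is to combine Propositions~\ref{prop:robust} and~\ref{prop:robustcycle}, exploiting the fact that the finite-state automaton $\Ss_l$ itself is a \emph{fixed} object that does not change under perturbation — only the concrete system does. Recall that $\Ss_l$ being a SACE simulation of a system $\tilde\Ss$ requires two things (Def.~\ref{def:macesim}): (a) $\tilde\Ss \preceq \Ss_l$, and (b) there is a behavior $dc^\omega \in \Bs^\omega(\tilde\Ss)$ with $d$ finite and $c \in \SAC(\Ss_l)$.

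First I would handle the simulation part. Since $\Ss_l$ satisfies the premises of Prop.~\ref{prop:robust}, there is $\delta_1 > 0$ such that $\Ss_l$ is an $l$-complete model of every $\tilde\Ss \in \Ps_{\delta_1}(\Ss)$; by Prop.~\ref{prop:schmuck} (together with the identification of the quotient-based $l$-complete model with the S$l$CA for deterministic systems, which both $\Ss$ and its perturbations are), this yields $\tilde\Ss \preceq \Ss_l$ for all such $\tilde\Ss$. In particular, the weighted digraph underlying $\Ss_l$, and hence the set $\SAC(\Ss_l)$, is the same for every $\tilde\Ss \in \Ps_{\delta_1}(\Ss)$; so the cycle $\sigma$ that witnesses the SACE property of $\Ss_l$ with respect to $\Ss$ stays a smallest-in-average cycle of $\Ss_l$ throughout the neighbourhood.

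Next I would produce the recurrent behavior. Because $\Ss_l$ is a SACE simulation of $\Ss$, there is $d\sigma^\omega \in \Bs^\omega(\Ss)$ with $\sigma \in \SAC(\Ss_l)$; since in $\Ss$ every state is initial ($\Xs_0 = \R^\nx$), shifting the corresponding run past its transient prefix $d$ shows that $\sigma^\omega \in \Bs^\omega(\Ss)$ on its own. As $\sigma$ is regular by hypothesis, Prop.~\ref{prop:robustcycle} gives $\delta_2 > 0$ such that $\sigma^\omega \in \Bs^\omega(\tilde\Ss)$ for all $\tilde\Ss \in \Ps_{\delta_2}(\Ss)$. Setting $\delta \coloneqq \min\{\delta_1,\delta_2\}$, for every $\tilde\Ss \in \Ps_\delta(\Ss)$ we simultaneously have $\tilde\Ss \preceq \Ss_l$, $\sigma \in \SAC(\Ss_l)$, and $\sigma^\omega \in \Bs^\omega(\tilde\Ss)$ (the form $dc^\omega$ of Def.~\ref{def:macesim} with $d$ the empty word and $c = \sigma$), which is precisely the statement that $\Ss_l$ is a SACE simulation of $\tilde\Ss$.

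I do not expect a genuine obstacle: the result is essentially the conjunction of the two preceding propositions. The only mild subtleties are bookkeeping ones — that the fixed finite abstraction $\Ss_l$ carries its SAC unchanged across the whole neighbourhood $\Ps_\delta(\Ss)$, and that a transient prefix may be discarded when passing from $d\sigma^\omega \in \Bs^\omega(\Ss)$ to a genuinely cyclic behavior $\sigma^\omega$. The one point to state carefully is which cycle ``$\sigma$'' denotes: it must be the SAC that already witnesses the SACE property of $\Ss_l$ over $\Ss$ (there may be several SACs), and regularity is assumed of that particular one.
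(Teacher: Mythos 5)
Your proposal is correct and follows exactly the route the paper intends: Theorem~\ref{thm:robust} is stated there as the direct combination of Propositions~\ref{prop:robust} and~\ref{prop:robustcycle}, with $\delta$ taken as the minimum of the two perturbation bounds. The bookkeeping points you add (that the fixed automaton $\Ss_l$ keeps its SAC set unchanged, and that the transient prefix $d$ can be dropped because every state of $\Ss$ is initial, so that Prop.~\ref{prop:robustcycle} applies to $\sigma^\omega$ itself) are sound and merely make explicit what the paper leaves implicit.
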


Theorem \ref{thm:robust} has two interesting implications. The first is that sufficiently small round-off errors on the matrices $\Mm(k)$ and $\Qm$ of Eq.~\eqref{eq:petc_time} do not affect the correct computation of $\VS(\Ss)$; hence, $\VS(\Ss)$ is computable for a class of linear systems, even though $\Mm(k)$ typically contains transcendental numbers. The second implication is that, informally, we can apply our method to nonlinear systems, as long as the closed loop ETC system is asymptotically stable and the involved functions are sufficiently smooth. Asymptotic stability implies that the state converges to a ball of any radius, no matter how small, in finite time; therefore, the sequence of sampling times up to this point do not affect the system's SAIST. Inside a sufficiently small ball, the nonlinear flow belongs to a convex combination of $\delta$-perturbations of its linearization about the equilibrium. If the linearized system $\Ss$ satisfies the premises of Theorem \ref{thm:robust}, the SAIST of the nonlinear system is equal to $\VS(\Ss)$.

\section{NUMERICAL EXAMPLES}\label{sec:numerical}

\subsection{A two-dimensional linear system}\label{ssec:num1}
	We start by considering the example from \cite{gleizer2021hscc}: the system \eqref{eq:plant} with
	$$ \Am = \begin{bmatrix}0 & 1 \\ -2 & 3 \end{bmatrix}, \quad \Bm = \begin{bmatrix}0 \\ 1\end{bmatrix}, \quad \Km =\begin{bmatrix}0 & -5\end{bmatrix}, $$
	and the triggering condition of \cite{tabuada2007event}, $|\xiv(t) - \hat{\xiv}(t)| > \sigma|\xiv(t)|$ for some $0 < \sigma < 1$, which can be put in the form Eq.\ \eqref{eq:quadtrig}. Checking time was set to $h=0.05$, and maximum inter-sample time to $\bar{k}=20$. Using a Python implementation of Algorithm \ref{alg} with Z3 \cite{demoura2008z3} to solve Eq.\ \eqref{eq:sequence}, we attempted to compute its SAIST through a SACE simulation for $\sigma \in \{0.1, 0.2, 0.3, 0.4, 0.5\}$. Table \ref{tab} presents the SAIST for each $\sigma$, as well as the $l$ value (Def.~\ref{def:lsim}) where it was obtained. Only for $\sigma = 0.1$ the algorithm did not terminate before $l=50$: for this case, the actual $\bar{k}$ of the system was 3, and all $\Mm(k), k \leq 3,$ have complex eigenvalues. Thus, it is possible that it does not have periodic behaviors, similarly to the irrational rotation of Example \ref{ex:irrational}. Nonetheless, applying Prop.~\ref{prop:upperbound} gives an upper bound for $\VS(\Ss)$ of 1.596; hence, we know that estimate $\VS(\Ss_l)$ is within only 0.024 of the real value. For the other cases, trivial cycles were found for $\sigma = 0.4~(5^\omega)$ and $\sigma = 0.5~(6^\omega)$, but it took a few iterations to break, e.g., the $2^\omega$ loop. Interestingly, the simplest cycles for $\sigma = 0.2$ and $\sigma = 0.3$ had length, respectively, 27 and 28, showing that PETC can often lead to very complex recurring patterns. In addition, the case of $\sigma=0.4$ has two verified cyclic behaviors, $5^\omega$ and $6^\omega$, while with $\sigma=0.5$ three cycles are obtained: $6^\omega, 7^\omega$ and $8^\omega$: this confirms that a single PETC system can exhibit multiple different periodic behaviors.
	
	The results were generated on a MacBook Pro 2017 using a single processor. As Table \ref{tab} shows, even for $l=50$ the CPU time was kept under 10 minutes.

\begin{table}\caption{\label{tab} SAIST values for the example of §\ref{ssec:num1}}
	\begin{center}
		\begin{tabular}{c|ccccc}
			\hline
			$\sigma$ & 0.1 & 0.2 & 0.3 & 0.4 & 0.5 \\
			\hline
			$l$ & 50* & 15 & 26 & 12 & 10 \\
			SAIST & 1.572 & 2.74 & 3.42 & 5 & 6 \\
			CPU time [s] & 327 & 41 & 147 & 29 & 45 \\
			\hline
		\end{tabular}
		
		%\begin{flushleft}
		{\footnotesize * Algorithm interrupted before finding a verified cycle.}
		%\end{flushleft}
		\vspace{-1em}
	\end{center}
\end{table}

\subsection{A three-dimensional linear system}\label{ssec:num2}

With $\nx = 3$, the computational time involved in solving the existence problem of Eq.\ \eqref{eq:sequence} increased significantly. This is not surprising since solving such problems is exponential on the number of variables \cite{gleizer2020towards, basu1996combinatorial}. To reduce the number of times these problems are solved, we implemented a more efficient refinement approach than performing the full $(l+1)$-complete abstraction. At every iteration of Alg.~\ref{alg}, we only refine the states of the abstraction associated with the previous SAC. To illustrate this approach, see Fig.~\ref{fig:lincomplete}, where three steps of this refinement approach are executed: in depth 3, the SAC is already $(1,1,2)^\omega$, but it requires only 6 verifications: 1, 2, $(1,1), (1,2),$ $(1,1,1)$ (disproved) and $(1,1,2)$; the $3$-complete model would require up to $2+4+8 = 14$ verifications to obtain the same SAC. The disadvantage of this approach is that the obtained graph is more connected (as we have fewer states but more behaviors), and thus the computation of an upper bound using Prop.~\ref{prop:upperbound} often gives too distant values.

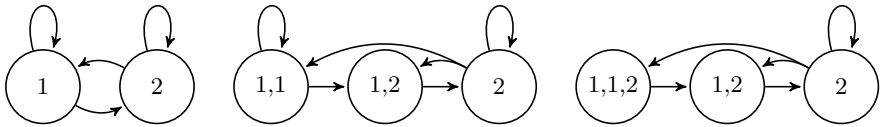
\begin{figure}[tb]
	\begin{center}
		\begin{footnotesize}
			\begin{tikzpicture}[->,>=stealth',shorten >=1pt, auto, node distance=1.5cm,
				semithick]
				%\tikzstyle{every state}=[fill=red,draw=none,text=white]
				\tikzset{every state/.style={minimum size=3em, inner sep=2pt}}
				
				\node[state] 		 (1)    {1};
				\node[state]         (2) [right of=1] {2};
				
				\path (1) edge[bend right] (2) (2) edge[bend right] (1)
				(1) edge [loop above] (1)
				(2) edge [loop above] (2);
				
				\node[state] 		 (11) [right of=2]  {1,1};
				\node[state]		 (12) [right of=11] {1,2};
				\node[state]		 (22) [right of=12] {2};
				
				\path (11) edge (12) (11) edge [loop above] (11) (12) edge (22) (22) edge[bend right] (11) (22) edge[bend right] (12) (22) edge [loop above] (22);
				
				\node[state] 		 (112) [right of=22]  {1,1,2};
				\node[state]		 (122) [right of=112] {1,2};
				\node[state]		 (222) [right of=122] {2};
				
				\path (112) edge (122) (122) edge (222) (222) edge[bend right] (112) (222) edge[bend right] (122) (222) edge [loop above] (222);

			\end{tikzpicture}
		\end{footnotesize}
		\caption{\label{fig:lincomplete} Illustration of the specialized refinement used in place of full $(l+1)$-complete abstraction in Alg.~\ref{alg}.}
		\vspace{-1.5em}
	\end{center}
\end{figure}

We applied this improved version of Alg.~\ref{alg} to system \eqref{eq:plant}--\eqref{eq:quadtrig} with
$$ \Am = \begin{bmatrix}0 & 1 & 0 \\ 0 & 0 & 1 \\ 1 & -1 & -1 \end{bmatrix}, \quad \Bm = \begin{bmatrix}0 \\ 0 \\ 1\end{bmatrix}, \Km = \begin{bmatrix}-2 & -1 & -1\end{bmatrix}, $$
with $h = 0.1, \bar{k} = 20$ and the triggering condition $|\xiv(t) - \hat{\xiv}(t)| > \sigma|\xiv(t)|.$ This time, some parallelization was also applied: at most 10 threads of an Intel\textregistered~Xeon\textregistered~W-2145 CPU were used, solving multiple instances of Eq.\ \eqref{eq:sequence} in parallel whenever possible. Table \ref{tab2} shows the results for multiple choices of $\sigma$, where $l$ now is the largest length of any state in the abstraction. Only for $\sigma=0.2$ the algorithm was interrupted without finding an exact value. The CPU times vary dramatically, in some cases taking less than a minute, whilst in others reaching an hour. The most interesting thing we observe is that, even though the SAIST never decreases with $\sigma$ as expected, there is not a consistent increase on its values after $\sigma=0.3$. This is reasonable considering the results of §\ref{ssec:robustness}: for small enough perturbations of the ETC system's parameters, the same cycle may still be present (Prop.~\ref{prop:robustcycle}). Interestingly, for $\sigma=0.9$ there is a substantial jump in the SAIST value.

\begin{table}\caption{\label{tab2} SAIST values for the example of §\ref{ssec:num2}}
	\begin{center}
		\begin{tabular}{c|ccccccccc}
			\hline
			$\sigma$ & 0.1 & 0.2 & 0.3 & 0.4 & 0.5 & 0.6 & 0.7 & 0.8 & 0.9 \\
			\hline
			$l$ & 1 & 18* & 14 & 8 & 6 & 7 & 6 & 5 & 9 \\
			SAIST & 1 & 1.921 & 3 & 3 & 3 & 4 & 4 & 4 & 9.5 \\
			CPU time [s] & 2 & 3056 & 1551 & 95 & 185 & 236 & 153 & 40 & 2955 \\
			\hline
		\end{tabular}
	\end{center}
\end{table}

\subsection{A nonlinear system}\label{ssec:num3}

Consider now the PETC triggering rule $|\xiv(t) - \hat{\xiv}(t)| > \sigma|\xiv(t)|$ with $h=0.05, \sigma = 0.452 $ applied to the following nonlinear jet engine system \cite{delimpaltadakis2020isochronous}:
\begin{align*}
	\dot\xi_1(t) &= -\xi_2(t) - 1.5\xi_1(t)^2 - 0.5\xi_1(t)^3 \\
	\dot\xi_2(t) &= \upsilon(t), \\
	\upsilon(t) & = \hat\xi_1(t) - 0.5(\hat\xi_1(t)^2 + 1)(y(t) + \hat\xi_1(t)^2y(t) + \hat\xi_1(t)y(t)^2),
\end{align*}
where $y(t) = (\hat\xi_1(t)^2 + \hat\xi_2(t))/(\hat\xi_1(t)^2 + 1)$. The origin of the closed-loop system is asymptotically stable%
\footnote{For stability analysis of PETC of nonlinear systems, see, e.g.,~\cite{postoyan2013periodic}.}%
, therefore we can obtain its SAIST through its linearized model around the origin, which is of the form \eqref{eq:plant} with
$$ \Am = \begin{bmatrix}0 & -1 \\ 0 & 0 \end{bmatrix}, \quad \Bm = \begin{bmatrix}0 \\ 1\end{bmatrix}, \quad \Km =\begin{bmatrix}1 & -0.5\end{bmatrix}. $$
We ran Alg.~\ref{alg} and stopped it with $l=100,$ obtaining an approximate value of $\VS(\Ss) = 8.882$. Using Prop.~\ref{prop:upperbound}, an upper bound of 8.892 was obtained, thus giving an error of 0.01. Figure~\ref{fig:nonlinear} shows ISTs and their running averages for five PETC simulations starting each from a different pseudo-randomly generated initial state, for both the nonlinear model and the linearized model. It can be seen that the running averages in both cases converge to the predicted SAIST value, even though the averages are significantly different in the beginning of the simulation. The right plot shows how the difference between ISTs based on the nonlinear model and the linear model diminish as the state norm approaches zero: in the plotted simulation there is no error after the state norm is below 0.03 (around time instant 400).

\begin{figure}[tb]
	\begin{center}
		\begin{footnotesize}
			\input{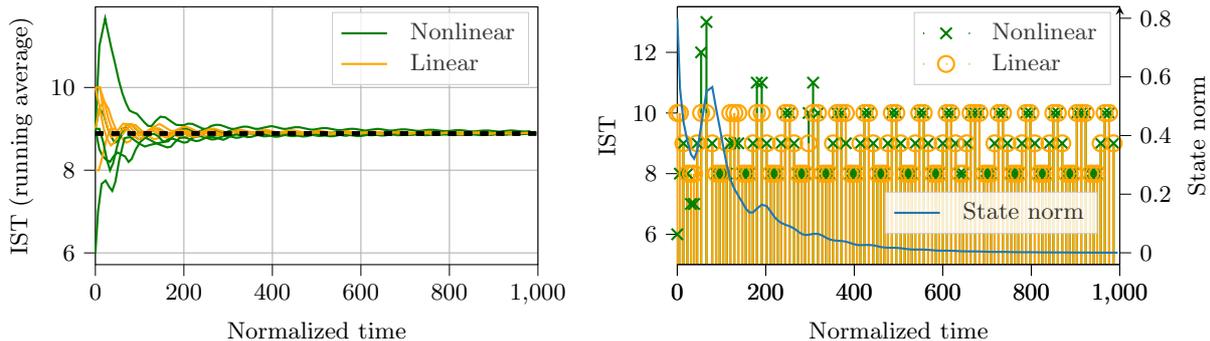} % This file was created by tikzplotlib v0.9.9.
\begin{tikzpicture}

\definecolor{color0}{rgb}{1,0.647058823529412,0}
\definecolor{color1}{rgb}{0.12156862745098,0.466666666666667,0.705882352941177}

\begin{axis}[
height=5cm,
width=0.45\linewidth,
legend cell align={left},
legend style={
  fill opacity=0.8,
  draw opacity=1,
  text opacity=1,
  draw=white!80!black
},
tick align=outside,
tick pos=left,
x grid style={white!69.0196078431373!black},
xlabel={Normalized time},
xmin=0, xmax=1000,
xtick style={color=black},
y grid style={white!69.0196078431373!black},
ylabel={IST},
ymin=5, ymax=13.5,
ytick style={color=black}
]

\addplot [ycomb, thick, green!50.1960784313725!black, mark=x, mark size=3, mark options={solid}]
table {%
0 6
6 8
14 9
23 8
31 7
38 7
45 9
54 12
66 13
79 9
88 7.99999999999999
96 8.00000000000001
104 8
112 8
120 9
129 9
138 9
147 8
155 8
163 8
171 9
180 11
191 11
202 9.00000000000003
211 8
219 7.99999999999997
227 8.00000000000003
235 9
244 10
254 10
264 9
273 8
281 8
289 8
297 10
307 11
318 9.99999999999994
328 8.00000000000006
336 7.99999999999994
344 8.00000000000006
352 9
361 10
371 10
381 9
390 8
398 8
406 7.99999999999994
414 9.00000000000006
423 10
433 10
443 9
452 8
460 8
468 8
476 9
485 10
495 10
505 9
514 8
522 8
530 8
538 9
547 10
557 10
567 9
576 8
584 8
592 8
600 9
609 10
619 10
629 9
638 7.99999999999989
646 8.00000000000011
654 8.99999999999989
663 10
673 10
683 10
693 8.00000000000011
701 8
709 8
717 9
726 10
736 10
746 9
755 7.99999999999989
763 8.00000000000011
771 8
779 8.99999999999989
788 10
798 10
808 9.00000000000011
817 8
825 7.99999999999989
833 8.00000000000011
841 9
850 10
860 10
870 9
879 8
887 8
895 7.99999999999989
903 10
913 10
923 10
933 8.00000000000011
941 8
949 8
957 9
966 10
976 10
986 9
};
\addlegendentry{Nonlinear};
\addplot [ycomb, thick, color0, mark=o, mark size=3, mark options={solid}]
table {%
0 10
6 10
14 9
23 8
31 8
38 8
45 9
54 10
66 10
79 9
88 8
96 8
104 8
112 9
120 10
129 10
138 10
147 8
155 8
163 8
171 9
180 10
191 10
202 9
211 8
219 8
227 8
235 9
244 10
254 10
264 9
273 8
281 8
289 8
297 9
307 10
318 10
328 8
336 8
344 8
352 9
361 10
371 10
381 10
390 8
398 8
406 8
414 9
423 10
433 10
443 9
452 8
460 8
468 8
476 9
485 10
495 10
505 9
514 8
522 8
530 8
538 9
547 10
557 10
567 9
576 8
584 8
592 8
600 9
609 10
619 10
629 9
638 8
646 8
654 9
663 10
673 10
683 10
693 8
701 8
709 8
717 9
726 10
736 10
746 9
755 8
763 8
771 8
779 9
788 10
798 10
808 9
817 8
825 8
833 8
841 9
850 10
860 10
870 9
879 8
887 8
895 8
903 10
913 10
923 10
933 8
941 8
949 8
957 9
966 10
976 10
986 9
};
\addlegendentry{Linear};
\end{axis}

\begin{axis}[
height=5cm,
width=0.45\linewidth,
axis y line=right,
legend cell align={left},
legend style={fill opacity=0.8, draw opacity=1, text opacity=1, draw=white!80!black,
			  at={(0.95, 0.2)}, anchor=east},
tick align=outside,
x grid style={white!69.0196078431373!black},
xmin=0, xmax=1000,
xtick pos=left,
xtick style={color=black},
y grid style={white!69.0196078431373!black},
ylabel={State norm},
ymin=-0.0394966078499901, ymax=0.839678475892103,
ytick pos=right,
ytick style={color=black},
yticklabel style={anchor=west}
]
\addplot [thick, color1]
table {%
0 0.799715972085644
6 0.56137126536259
14 0.445988916854086
23 0.379394578665527
31 0.335243245318712
38 0.320792920354273
45 0.348226461915074
54 0.434251921397413
66 0.550658617059709
79 0.565315395104389
88 0.489572902735446
96 0.426384713729292
104 0.360634417594361
112 0.297909555705774
120 0.250666962145365
129 0.217718856340001
138 0.194534422239243
147 0.17181185480386
155 0.150714742425054
163 0.137327807245969
171 0.137954228769987
180 0.150802221748542
191 0.16415106623528
202 0.159219759263973
211 0.14080019950275
219 0.123425980495446
227 0.110424875823115
235 0.101896174937227
244 0.0958306108216987
254 0.0905903818589432
264 0.0828479521531039
273 0.0727096313194024
281 0.0643924935536556
289 0.0606172103738191
297 0.0615482076061661
307 0.0645967426544526
318 0.0645379503744039
328 0.0586802117439493
336 0.0514739703027668
344 0.0457653318650478
352 0.0425981968368932
361 0.0412464334340373
371 0.0403738321858476
381 0.0380440541382531
390 0.0339228334476649
398 0.0297605702663013
406 0.026927178019427
414 0.0260918763875004
423 0.0265739774701049
433 0.0268375988422364
443 0.0253840758497798
452 0.0225229282257624
460 0.019772647007514
468 0.0179471213394205
476 0.0172388582878066
485 0.0171163087494311
495 0.0168205984259743
505 0.0156308104155643
514 0.0137692144372032
522 0.0121385923575411
530 0.0112151362887221
538 0.0110407750550528
547 0.0111766626900882
557 0.0110345937351333
567 0.0101897940933116
576 0.00893601475924074
584 0.00790940366202437
592 0.00735702514338621
600 0.00723042784872215
609 0.00723275177386228
619 0.00703525669070098
629 0.00642216259378011
638 0.00561185448923776
646 0.00500405014700334
654 0.00472801400561391
663 0.00471531980024352
673 0.00474214405173238
683 0.00452664028818171
693 0.0040466115287525
701 0.00355013266889641
709 0.00321172768148624
717 0.00308712565676071
726 0.00308879732762945
736 0.00306622621690779
746 0.00287213536211081
755 0.00253934402114091
763 0.00223303020496597
771 0.00204467271823159
779 0.00199194447832322
788 0.00200423809318229
798 0.00198038851763691
808 0.00183736806863383
817 0.00161539569081626
825 0.00142608630227266
833 0.0013201224366142
841 0.00129594462137388
850 0.00130168716520233
860 0.00127457456484994
870 0.00117037657319327
879 0.00102454006945979
887 0.000909731323282969
895 0.000852699554678687
903 0.000844493156083245
913 0.000850616148065414
923 0.00082091144169696
933 0.00074165084969263
941 0.000650526251253887
949 0.000582214815101001
957 0.000552519819314401
966 0.000550856991840358
976 0.000551090967127924
986 0.00052267902828491
995 0.0004658959564687
};
\addlegendentry{State norm}
\end{axis}

\end{tikzpicture} 
		\end{footnotesize} 
		\vspace{-1em}
		\caption{\label{fig:nonlinear} Left: running average of ISTs of five nonlinear PETC simulations and of five corresponding linear PETC simulations, with the dashed black line representing the estimated SAIST. Right: ISTs for one nonlinear PETC simulation and the corresponding ISTs predicted by the linear PETC model, with the state norm  overlaid on a secondary axis.}
		\vspace{-1.5em}
	\end{center}
\end{figure}
%%%%%%%%%%%%%%%%%%%%%%%%%%%%%%%%%%%%%%%%%%%%%%%%%%%%%%%%%%%%%%%%%%%%%%%%%%%%%%%%
\section{CONCLUSIONS}\label{sec:conc}

We have presented a method to compute the sampling performance of PETC, namely its minimum average inter-sample time, by means of an abstraction called SACE simulation. For this we rely on methods of abstracting and refining to obtain tighter simulations, and getting their smallest-in-average cycle through Karp's algorithm. A SACE simulation requires that this cycle, repeated \emph{ad infinitum}, is a behavior of the concrete system; for this, we need to find an invariant of the system, which is possible for PETC of linear systems through the inspection of linear invariants of an associated discrete-time linear system. In the generic case --- quotient sets with non-empty interior and linear invariants not touching the boundary of the cones they belong to --- a SACE simulation is proven to be robust to small model uncertainties, which allows us to use the presented method to a large class of nonlinear systems. Even if an exact SACE simulation is not obtained, every simulation provides a lower bound to the SAIST, and upper bounds can also be computed from the abstractions. Our numerical results indicate that these bounds can be very close after sufficient refinements.

As with most applications of finite-state abstractions, our approach suffers from the ``curse of dimensionality'': with a three-dimensional system the computation can reach nearly an hour to complete. In fact, it can be argued that this curse is more severe in our case than in most control and verification applications, since we rely on \emph{strongest} $l$-complete abstractions, which require no spurious behavior fragments of length up to $l$. This may prevent the usage of most reachability tools to this end, as over- or under-approximations can create such spurious behaviors or remove potentially important ones. This is one of the reasons why we have used Z3 for our implementation, as it is one of the few exact nonlinear SAT solvers available. Nevertheless, the robustness results we have presented indicate that exactness may not be necessary in most cases. With this in mind, we plan to use approximate nonlinear SMT solvers such as dReal \cite{gao2013dreal} to start addressing the issue of dimensionality.

It is interesting to observe that the problem of computing the (smallest) limit average metric of an infinite system is highly dependent on its infinite behavior properties: systems with aperiodic behavior can make it impossible to obtain a SACE simulation, but other pathological behaviors can be even worse, such as the infamous ${(1^n2^n)^\omega}$, where not even a good approximation can be achieved. Better behavioral understanding of systems is crucial for the further development of quantitative verification methods. Part of this behavioral understanding of ETC sampling is currently the subject of our investigation.

Finally, natural extensions of this line of work are ongoing, such as extending it to systems with disturbances, in particular stochastic noise \cite{delimpaltadakis2021abstracting}, as well as the usage of abstractions for \emph{synthesis} of sampling strategies that maximize the closed-loop SAIST \cite{gleizer2021cdc}.

\bibliographystyle{elsarticle-num} 
\bibliography{mybib}

\appendix

\section{Proof of Theorem \ref{thm:Vcomputable}}

The proof relies on the notion of \emph{cyclic permutations}. A word $\sigma'$ is called a cyclic permutation of $\sigma \coloneqq a_0a_1...a_n$ if $\sigma' = a_ia_{i+1}...a_na_0a_1...a_{i-1}$ for some $i \leq n$. For example, the cyclic permutations of 1234 are 1234, 2341, 3412, and 4123. Clearly, all $n$-long subsequences of $\sigma^\omega$ are precisely the cyclic permutations of $\sigma$. Now we introduce the following Lemmas:

\begin{lem}\label{lem:cyclic1}
	Let $\sigma \in \Ys^n$ and $\sigma' \in \Ys^n$ be cyclic permutations of each other. If $\sigma = \alpha a$ and $\sigma' = \alpha b$, where $\alpha \in \Ys^{n-1}$ and $a, b \in \Ys$, then $a = b$ and thus $\sigma = \sigma'$.
\end{lem}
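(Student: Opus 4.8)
The plan is to exploit the fact that a cyclic permutation leaves unchanged the multiset of symbols appearing in a word. For a word $\gamma$, write $M(\gamma)$ for the multiset of its symbols, counted with multiplicity, and recall that multisets over $\Ys$ form a cancellative commutative monoid under multiset union $\uplus$. First I would observe that $M(\sigma) = M(\sigma')$: by the definition of cyclic permutation recalled just above the lemma, $\sigma' = a_i a_{i+1}\cdots a_{n-1}a_0\cdots a_{i-1}$ is obtained from $\sigma = a_0 a_1\cdots a_{n-1}$ by re-listing exactly the same $n$ entries starting from a different index, so the two words contain each symbol the same number of times.

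Next I would combine this with the structural hypothesis. From $\sigma = \alpha a$ and $\sigma' = \alpha b$ we get $M(\sigma) = M(\alpha)\uplus\{a\}$ and $M(\sigma') = M(\alpha)\uplus\{b\}$. Equating the two and cancelling the common multiset $M(\alpha)$ yields $\{a\} = \{b\}$, hence $a = b$, and therefore $\sigma = \alpha a = \alpha b = \sigma'$, which is exactly the claim.

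The only point needing care is the invariance $M(\sigma) = M(\sigma')$ under cyclic shift, and that is immediate from the definition. (An alternative, slightly longer route: model the $i$-th cyclic shift as the permutation $\pi\colon j\mapsto (j+i)\bmod n$ acting on index positions; the hypothesis $\sigma(j)=\sigma'(j)$ for $j=0,\dots,n-2$ forces $a_j = a_{\pi(j)}$ along all but one edge of each cycle of $\pi$; since deleting one edge from a cycle still leaves a connected path, $a$ is constant on every orbit of $\pi$, whence $\sigma=\sigma'$. I would nonetheless present the multiset argument, as it is two lines and avoids any case analysis.) There is no substantive obstacle here — the lemma is an elementary combinatorial fact used as a stepping stone toward Theorem~\ref{thm:Vcomputable}.
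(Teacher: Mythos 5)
Your proof is correct, and it takes a genuinely different route from the paper's. The paper argues by index-chasing: writing $\sigma' = a_i\cdots a_{n-1}a_0\cdots a_{i-1}$, it uses the equality of the $(n-1)$-long prefixes to derive $a_j = a_{j+i \bmod n}$ for $j < n-1$, then follows the orbit of the index $i-1$ under the shift $j \mapsto j+i \bmod n$ until it reaches $n-1$, concluding $a_{i-1} = a_{n-1}$, i.e.\ $a = b$. Your multiset argument sidesteps all of that: cyclic permutation preserves the multiset of letters, so $M(\alpha)\uplus\{a\} = M(\alpha)\uplus\{b\}$, and cancellation gives $a = b$. This is shorter, avoids the modular-arithmetic bookkeeping, and in fact proves a strictly stronger statement — it only uses that $\sigma$ and $\sigma'$ are anagrams, not that they are cyclic permutations of one another. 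The one thing the paper's orbit-following technique buys is that it is the same machinery reused (in expanded form) in the proof of Lemma~\ref{lem:cyclic2}, where a genuine induction along the shifted indices is needed and a pure counting argument does not suffice; so the paper's choice is partly a matter of setting up that later proof. As a standalone proof of Lemma~\ref{lem:cyclic1}, yours is complete and preferable for its brevity.
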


\begin{proof}
	Let $\sigma = a_0a_1...a_{n-1}$. Then $\sigma' = a_ia_{i+1}...a_{n-1}a_0...a_{i-1}$ for some $i > 0$ (if $i = 0$ the result is trivial). If their $(n-1)$-long prefixes are equal, then $a_j = a_{j + i \bmod n}$ for all $j < n - 1$. In particular, take $j = i - 1$; then $a_{i-1} = a_{2i - 1 \bmod n} = a_{3i - 1 \bmod n} = ... = a_{ki - 1 \bmod n}$, where $k$ is the smallest number such that $ki - 1 \bmod n = n-1$ (in the worst case, $k = n,$ for $i$ and $n$ coprime). Thus, $a_{i-1} = a_{ki - 1 \bmod n} = a_{n-1},$ concluding the proof.   
\end{proof}

\begin{lem}\label{lem:cyclic2}
	Let $\sigma \in \Ys^n$ and $\sigma' \in \Ys^n$ be cyclic permutations of each other. If $\sigma \neq \sigma',$ then there is a subsequence $\alpha$ of length $n$ of $\sigma\sigma'$ that is not a cyclic permutation of $\sigma$.
\end{lem}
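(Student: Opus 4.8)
The plan is to argue by contradiction: assume that \emph{every} length-$n$ contiguous subsequence of $\sigma\sigma'$ is a cyclic permutation of $\sigma$, and deduce $\sigma = \sigma'$, contradicting the hypothesis. Since $|\sigma\sigma'| = 2n$, there are exactly $n+1$ such subsequences $W_0, W_1, \dots, W_n$, where $W_j$ occupies positions $j, j+1, \dots, j+n-1$ of $\sigma\sigma'$; in particular $W_0 = \sigma$ and $W_n = \sigma'$. By assumption each $W_j$ is a cyclic permutation of $\sigma$, hence any two of them are cyclic permutations of one another.

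The crucial remark is that consecutive windows differ by one left shift of the ambient string: writing $W_j = c_0 c_1 \cdots c_{n-1}$, the window $W_{j+1}$ is obtained by deleting $c_0$ and appending the letter of $\sigma\sigma'$ that follows $W_j$, so $W_{j+1} = c_1 c_2 \cdots c_{n-1} d$ for some letter $d$. Compare this with the word $c_1 c_2 \cdots c_{n-1} c_0$: since $W_j$ is $\sigma$ shifted by some amount $s$, this word is $\sigma$ shifted by $s+1$, hence is again a cyclic permutation of $\sigma$. Now $W_{j+1}$ and $c_1 c_2 \cdots c_{n-1} c_0$ are cyclic permutations of one another and share the $(n-1)$-letter prefix $c_1 c_2 \cdots c_{n-1}$; therefore Lemma \ref{lem:cyclic1} forces $d = c_0$, i.e.\ $W_{j+1}$ is $W_j$ rotated by one position.

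Applying this identity for $j = 0, 1, \dots, n-1$ shows that $W_n$ is $W_0$ rotated $n$ times, hence $W_n = W_0$ (a rotation by $n$ of a word of length $n$ is the identity). Thus $\sigma' = W_n = W_0 = \sigma$, a contradiction; so at least one of the $n+1$ windows is not a cyclic permutation of $\sigma$, which is exactly the claim.

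Every step is elementary. The only point requiring care is verifying that $c_1 c_2 \cdots c_{n-1} c_0$ is still a cyclic permutation of $\sigma$ so that Lemma \ref{lem:cyclic1} is applicable, i.e.\ keeping straight the index bookkeeping between ``window of $\sigma\sigma'$'' and ``rotation of $\sigma$''; the symmetry and transitivity of ``being a cyclic permutation of'' are used implicitly and are immediate. This is the main, and quite mild, obstacle.
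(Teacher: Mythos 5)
Your proof is correct and follows essentially the same route as the paper's: assume all $n+1$ length-$n$ windows of $\sigma\sigma'$ are cyclic permutations of $\sigma$ and apply Lemma~\ref{lem:cyclic1} to each consecutive pair to force $\sigma'=\sigma$. Your sliding-window formulation ($W_{j+1}$ is the one-step rotation of $W_j$, hence $W_n=W_0$) is a cleaner packaging of the paper's induction on the index $J$, and it avoids the modular-arithmetic case split ($J<n-i$ versus $J\geq n-i$) entirely.
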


\begin{proof}
	Let $\sigma = a_0a_1...a_{n-1}$. Then $\sigma' = a_ia_{i+1}...a_{n-1}a_0...a_{i-1}$ for some $i > 0$. We have $\sigma\sigma' = a_0a_1...a_{n-1}a_ia_{i+1}...a_{n-1}a_0...a_{i-1}$.
	
	Suppose, for contradiction, that every $n$-long subsequence of $\sigma\sigma'$ is a cyclic permutation of $\sigma$. Let us look at the first nontrivial subsequence, $\sigma_1 \coloneqq a_1...a_{n-1}a_i$. Because $a_1...a_{n-1}a_0$ is a cyclic permutation of $\sigma$, from Lemma \ref{lem:cyclic1} we get that $a_0 = a_i$. 
	Now let us apply induction: suppose that for some $J < n$, $a_j = a_{i+j \bmod n}$ for all $j < J$; we are going to show that this also holds for $j = J$. %
	First, suppose that $J < n - i$; then $\sigma_J = a_Ja_{J+1}...a_{n-1}a_ia_{i+1}...a_{i+J-2}a_{i+J-1} = a_Ja_{J+1}...a_{n-1}a_0a_{1}...a_{J-2}a_{i+J-1}.$ Again, because $a_Ja_{J+1}...a_{n-1}a_0a_{1}...a_{J-1}$ is a cyclic permutation of $\sigma$, apply Lemma \ref{lem:cyclic1} to obtain $a_{i+J-1} = a_{J-1}.$ %
	Second, suppose that $J \geq n - i$. Then, $\sigma_J = a_J...a_{n-1}a_i..a_{n-1}a_0a_1...a_{i+J-n-1} = a_J...a_{n-1}a_0..a_{n-i-1}a_0a_1...a_{i+J-n-1}.$ Note that $a_k = a_{k+n \bmod n} = a_{k+n-i}$ as long as $k+n-i<J,$ i.e., $k<i+J-n$. Thus, $\sigma_J =  a_J...a_{n-1}a_0..a_{n-i-1}a_{n-i}...a_{J-2}a_{i+J-n-1}.$ Again, apply Lemma \ref{lem:cyclic1} to get that $a_{i+J-n-1} = a_{J-1}$. We have that $J-1+i \bmod n = i+J-n-1$, since $n > J \geq n - i$; our hypothesis is thus confirmed. % 
	The fact that $a_j = a_{i+j \bmod n}$ for all $j < n$ implies that $\sigma' = a_ia_{i+1}...a_{n-1}a_0...a_{i-1} = a_0a_1...a_{n-1-i}a_{n-i}...a_{n-1} = \sigma$, which contradicts the fact that $\sigma \neq \sigma'$.
\end{proof}

\begin{proof}[Proof of Theorem \ref{thm:Vcomputable}]
	From Theorem \ref{thm:lcompleteisoptimal}, there is an $l$ large enough such that $\VS(\Ss_l) = \VS(\Ss)$. It is easy to see that taking $l \geq m$ ensures that $\sigma$ is a cycle of the graph associated to $\Ss_l$.
	
	We prove that, because now $\LimAvg(\beta^\omega) > \VS(\Ss)$ for every $\beta$ that is not a subsequence of $\sigma^\omega$ (thus not a cyclic permutation of $\sigma$), the SAC of $\Ss_l$ is unique up to cyclic permutations. Suppose, for contradiction, that another cycle $\sigma'$ is a SAC of $\Ss_l$, with $|\sigma'| = p$. As in the proof of Theorem \ref{thm:lcompleteisoptimal}, we divide $(\sigma')^m$ into $p$ subsequences of length $m$, obtaining
	$$ \VS(\Ss_l) = \LimAvg((\sigma')^\omega) = \LimAvg(((\sigma')^m)^\omega)) = \Avg((\sigma')^m) = \frac{1}{p}\sum_{i=1}^p\Avg(\beta_i). $$
	If (i) some $\beta_i$ is not a cyclic permutation of $\sigma$, $\frac{1}{p}\sum_{i=1}^p\Avg(\beta_i) > \VS(\Ss_l)$, which yields the contradiction. Now, suppose (ii) that every $\beta_i$ is a cyclic permutation of $\sigma$; since $\sigma'$ is not the same cycle as $\sigma$, it cannot be that $\beta_i = \beta_j$ for all $i, j \leq p$. If $\beta_i \neq \beta_j$ for some $i, j$, suppose without loss of generality that they are adjacent in $(\sigma')^\omega$, i.e., either $j = i+1$ or $i=p$ and $j=1$. Then we have from Lemma \ref{lem:cyclic2} that there exists an $m$-long subsequence of $\beta_i\beta_j$ that is not a cyclic permutation of $\sigma$. Thus, $\sigma'$ has at least one subsequence $\beta'$ with average larger than $\VS(\Ss),$ which brings us back to case (i). The contradiction is thus achieved in all cases.
	
	Concluding, $\Ss_l$ has only one cycle $\sigma$ (modulo cyclic permutations) that attains its minimum value. Hence, running Karp's algorithm (Theorem \ref{thm:limavg}) retrieves it; by assumption, $\sigma^\omega \in \Bs^\omega(\Ss)$, thus the algorithm terminates at line 6.
\end{proof}

\section{Proof of Theorem \ref{thm:verifycycle}}

Before the main proof, we need some definitions. Given a map $f : X \to X$ and the discrete-time autonomous system defined by $x_{i+1} = f(x_i),$ we call the \emph{forward orbit} of $x$ the set $\bigO(x) \coloneqq \{f^n(x) \mid n \in \N\}$. The $\omega$-limit set of $x$, denoted by $\omega(x)$ is the set of cluster points of $\bigO(x)$, or alternatively, 
$$ \omega(x) = \bigcap_{n\in\N} \cl(\{f^k(x) \mid k > n\}). $$
By definition of closure, if $\bigO(x) \subset \As \subset X,$ then $\omega(x) \subset \cl(\As).$

We introduce the following Lemma.
\begin{lem}\label{lem:subspacesuffices}
	Let $\Mm \in \R^{n \times n}$ be a nonsingular mixed matrix and $\Qs \subseteq \R^n$ be a homogeneous set, i.e., it satisfies $\xv \in \Qs \implies \lambda\xv \in \Qs, \forall \lambda \in \R \setminus \{0\}$.
	If there exists a trajectory $\xiv: \N \to \R^\nx$ satisfying $\xiv(k+1) = \Mm\xiv(k)$ and $\xiv(k) \in \Qs$ $\forall k \in \N$, then there exists a linear subspace $\As$ that is an invariant of $\Mm^q$ and satisfies $\As \subseteq \cl(\Qs)$, where $q \in \N$. Furthermore, $q=1$ if $\Mm$ is of irrational rotations.
\end{lem}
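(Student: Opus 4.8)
The plan is to pass to the directional picture and read off the invariant subspace from the asymptotics of the normalised orbit, using the \emph{mixed} hypothesis to pin down where the directions accumulate. First I would dispose of the trivial case $\xiv(0) = \O$: since $\Mm$ is nonsingular the orbit is then identically $\O$, so $\O \in \Qs$ and $\As = \{\O\}$ works. Otherwise $\xiv(0) \neq \O$, and nonsingularity of $\Mm$ gives $\xiv(k) \neq \O$ for all $k$, so the direction $\hat\xiv(k) \coloneqq \xiv(k)/\norm[\xiv(k)]$ is well defined; homogeneity of $\Qs$ forces $\hat\xiv(k) \in \Qs$ for every $k$. The sequence $(\hat\xiv(k))_{k\in\N}$ lives on the unit sphere, hence has a non-empty set $\Omega$ of accumulation points, each a limit of points of $\Qs$, so $\Omega \subseteq \cl(\Qs)$. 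Moreover $\cl(\Qs)$ is homogeneous and contains $\O$ (scaling is a homeomorphism and $\Qs$ is non-empty), so from any non-zero $v \in \cl(\Qs)$ we obtain the whole line $\R v \subseteq \cl(\Qs)$. The goal therefore reduces to producing a non-zero $v \in \Omega$ that lies on a line or plane invariant under some power of $\Mm$.

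Second I would locate $\Omega$. A mixed matrix is diagonalisable over $\C$, so $\R^n$ splits into $\Mm$-invariant real eigenlines and real eigenplanes, grouped by the common modulus of the corresponding eigenvalue(s); mixedness guarantees that each modulus is realised by exactly one such line or plane. Write $\xiv(0)$ in this decomposition, let $\rho$ be the largest modulus among the summands with non-zero component, and let $W$ (a line or a plane) be the corresponding invariant subspace, with $w \neq \O$ the component of $\xiv(0)$ in $W$. Because diagonalisability precludes polynomial growth and all other active moduli are strictly less than $\rho$, the complementary contribution is $o(\rho^k)$ while $\norm[\Mm^k w]$ stays between two positive multiples of $\rho^k$; hence $\hat\xiv(k)$ has the same accumulation points as $\Mm^k w/\norm[\Mm^k w]$, and in particular $\Omega \subseteq W$. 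Picking any $v \in \Omega$ yields a non-zero $v \in W \cap \cl(\Qs)$.

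Third I would check invariance. If $W$ is a line, then $\R v = W$ is $\Mm$-invariant and $q=1$ works. If $W$ is a plane, $\Mm$ acts on it as a scaled rotation of some angle $\theta$. When $\theta$ is an irrational multiple of $\piconst$, the orbit of $\Mm^k w/\norm[\Mm^k w]$ is dense in the unit circle of $W$, so $\Omega$ is that entire circle and $W \subseteq \cl(\Qs)$, with $W$ itself $\Mm$-invariant, i.e.\ $q=1$. When $\theta = p\piconst/q$ in lowest terms, $\Mm^q$ restricted to $W$ equals $\rho^q$ times a rotation by $p\piconst$, namely the scalar matrix $(-1)^p\rho^q\I$; hence every line of $W$, in particular $\R v$, is invariant under $\Mm^q$. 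This simultaneously settles the ``furthermore'' claim: if $\Mm$ is of irrational rotations then the plane case always has $\theta$ an irrational multiple of $\piconst$, so $q=1$ in every case.

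The main obstacle is the second step — showing the directional orbit collapses into the single dominant invariant subspace $W$. This is exactly where mixedness is indispensable: it forbids two incommensurable modes, or two real modes of opposite sign, sharing the top modulus, which is precisely the configuration in which the accumulation set of $\hat\xiv(k)$ could fail to lie inside one invariant line or plane (and could in general contain no invariant subspace at all within $\cl(\Qs)$). Everything else — density of an irrational rotation on the circle, homogeneity of $\cl(\Qs)$, and the scalar form of $\Mm^q$ on a rational-rotation plane — is routine bookkeeping.
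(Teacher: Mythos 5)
Your proof is correct and follows essentially the same route as the paper's: normalize the orbit onto the sphere, use mixedness and diagonalizability to show the accumulation set lies in the unique dominant invariant line or plane, and split into the real, irrational-rotation, and rational-rotation cases with the same conclusions ($q=1$ for the first two, a line invariant under $\Mm^q$ for the third). Your handling of the rational case via the scalar form of $\Mm^q$ on the plane, and the explicit treatment of the zero trajectory, are minor tidy refinements of the paper's argument rather than a different approach.
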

\begin{proof} Because $\Qs$ is homogeneous, $\xiv(k) \in \Qs$ for all $k$ implies that the normalized trajectory $\xiv(k)/|\xiv(k)| \in \Qs$ for all $k$; likewise, for any constant $c \neq 0$, we have that $c\xiv(k)/|\xiv(k)| \in \Qs$. 
Therefore, let us investigate the ``normalized'' version of the iteration $\xv_{i+1} = \Mm\xv_i$: this is defined by the map $f : B^n \to B^n,$ where $B^n$ is the unit ball in $\R^n$ and $f(\xv) = \Mm\xv/\norm[\Mm\xv]$. Our strategy is to first determine what is $\omega(\xv)$; then, we will prove that the set $\{c\omega(\xv) \mid c \in \R \setminus \{0\}\}$, a radial expansion of $\omega(\xv)$, is a linear subspace of $\Mm$. Because $\omega(\xv) \subseteq \cl(\Qs)$ and $\xv \in \Qs \implies c\xv \in \Qs$, we conclude that $\{c\omega(\xv) \mid c \in \R \setminus \{0\}\} \subseteq \cl(\Qs)$. 

Now we investigate case by case depending on the eigenvalues $\lambda_i$ of $\Mm$. Since $\Mm$ is mixed, it is diagonalizable, and hence the trajectory $\xiv(k)$ can be decomposed as $\sum_{i=1}^n a_i\vv_i\lambda_i^k$, where $\vv_i$ are the eigenvectors of $\Mm$ satisfying $|\vv_i| = 1$. Let $m \leq n$ such that $a_i = 0$ for $i<m$, hence $\lambda_m$ is the dominant eigenvalue for this initial condition. Throughout, let $\xv \coloneqq \xiv(0)/\norm[\xiv(0)]$.
	
	{\bfseries Case 1:} $\lambda_m$ is real. Then
	\begin{multline*}
		\lim_{k\to\infty}\frac{\xiv(k)}{|\xiv(k)|} = \lim_{k\to\infty}\frac{a_{m}\vv_{m}\lambda_{m}^k + ...+ a_n\vv_n\lambda_n^k}{|a_{m}\vv_{m}\lambda_{m}^k + ...+ a_n\vv_n\lambda_n^k|}\\
		= \lim_{k\to\infty}\frac{a_{m}\vv_{m} + ...+ a_n\vv_n\left(\frac{\lambda_n}{\lambda_m}\right)^k}{\norm[a_{m}\vv_{m} + ...+ a_n\vv_n\left(\frac{\lambda_n}{\lambda_m}\right)^k]}
		= \lim_{k\to\infty}\frac{a_{m}\vv_{m}}{\norm[a_{m}\vv_{m}]} = \pm a_m\vv_m.
	\end{multline*}
	Hence, the set $\{c\omega(\xv) \mid c  \in \R \setminus \{0\}\}$ is the line $\{\pm c\vv_m \mid c \in \R \setminus \{0\}\} = \{c\vv_m \mid c \in \R \setminus \{0\}\},$ which is an invariant of $\Mm$.
	
	For the next cases, $\lambda_m$ and $\lambda_{m+1}$ form a complex conjugate pair, thus $\vv_{i+1} = \vv_i^*$. Denote by $\theta \coloneqq \arg \lambda_m$.
	
	{\bfseries Case 2:} $\theta / \uppi \notin \Q$. Using a similar approach as Case 1, we get $\lim_{k\to\infty}\xiv(k)/|\xiv(k)| = \pm(\vv_m\e^{\imag\theta k} + \vv_{m+1}\e^{-\imag\theta k}).$ Because $\theta$ is not a rational multiple of $\pi$, $\{k\theta \mid k \in \N\}$ is a dense subset of $[0, 2\piconst]$ and, therefore, $\omega(\xv) = \cl(\{\pm(\vv_m\e^{\imag\theta k} + \vv_{m+1}\e^{-\imag\theta k} \mid \theta \in k \in \N\}$ which is equal to the ellipse $\Bs \coloneqq \{\vv_m\e^{\imag\alpha} + \vv_{m+1}\e^{-\imag\alpha} \mid \alpha \in [0, 2\pi)\}.$ The set $\{c\xv \mid \xv \in \Bs, c \in \R \setminus \{0\}\}$ is the unique plane supported by $\vv_m$ and $\vv_{m+1}$, and as such is an invariant of $\Mm$.
	
	{\bfseries Case 3:} $\theta / \uppi = p/q,$ where $p, q \in \N$ are co-prime. The $m$-th and $(m+1)$-th eigenvalues of $\Mm$ have the form $r\e^{\pm \imag p\uppi/q}$, and as a consequence the corresponding eigenvalues of $\Mm^q$ are $\lambda_m^q = \lambda_{m+1}^q = r^q \in \R$. The geometric multiplicity of $\lambda_m^q$ is 2, since $\Mm^q$ is also diagonalizable. Thus, we have that $\lim_{k\to\infty}\xiv(qk)/|\xiv(qk)| = a_m\vv_i + a_{m+1}\vv_{i+1} \eqqcolon \wv.$ Hence, we have $\omega(\xv) \supseteq \{c\wv \mid c \in \R \setminus \{0\}\}$, a line that is an invariant of $\Mm^q$. Finally, this line is a subset of $\cl(\Qs)$, since $\{c\wv \mid c \in \R \setminus \{0\}\} \subseteq \omega(\xv) \subseteq \cl(\Qs)$.
\end{proof}
\begin{proof}[Proof of Theorem \ref{thm:verifycycle}]
	Statement (i), $\As \setminus \{\O\} \subseteq \Qs_\sigma$ implies $\sigma^\omega \in \Bs(\Ss)$, is straightforward. Take any point $\xv \in \As \subseteq \Qs_\sigma$. By definition of $\Qs_\sigma$, we have that $\xv \in \Qs_{k_1}, \Mm(k_1)\xv \in \Qs_{k_2}, ...,$ and $\Mm(k_{m-1})\cdots\Mm(k_1)\xv \in \Qs_{k_m}.$ The $(m+1)$-th element of the run starting from initial state $\xv$ is $\xv' = \Mm(k_m)\Mm(k_{m-1})\cdots\Mm(k_1)\xv = \Mm_\sigma\xv.$ Since $\As$ is an invariant of $\Mm_\sigma$ and this matrix is nonsingular, $\xv' \in \As \setminus \{\O\}.$ Thus, the behavior from $\xv$ is $\sigma\Bs_{\xv'}(\Ss)$. Applying the same reasoning recursively with $\xv'$ in the place of $\xv$, we conclude that $\Bs_{\xv}(\Ss) = \sigma^\omega.$
	
	Statement (ii) follows from Lemma \ref{lem:subspacesuffices}, by applying it with $\Qs = \Qs_\sigma$ and $\Mm = \Mm_\sigma$, and using the fact that $\Qs_\sigma$ is an homogeneous set.
	
\end{proof}

\end{document}